\def\tensor{\,\raise2pt\hbox{${}_{\otimes}$}\,}% Tensor
\def\fdg{\,:\,}% ... fuer die gilt ...
\def\ptl{\partial}% Partial
\def\rest#1{\raise-2pt\hbox{${\lfloor_{#1}}$}}% Restringiert
\def\cal#1{\mathcal{#1}}% Kalligraphisch
\def\mbo#1{\boldsymbol{#1}}% Math-Bold-Symbol
\def\ip#1#2{\langle#1,#2\rangle}
\def\olin#1{\overline{#1}{}}% Oben-quer
\def\grad{{\nabla}}% Gradient
\newcommand{\leftexp}[2]{{\vphantom{#2}}^{#1}{#2}}
\def\halb{\frac{1}{2}}% 1/2
\def \gm{\gamma}
\def \a{\alpha}
\def \b {\beta}
\newtheorem{theorem}{Theorem}[section]
\newtheorem{lemma}[theorem]{Lemma}
\newtheorem{proposition}[theorem]{Proposition}
\newtheorem{remark}[theorem]{Remark}
\newtheorem{definition}[theorem]{Definition}
\newtheorem{claim}[theorem]{Claim}
\newcommand{\ba}{\begin{array}}
\newcommand{\ea}{\end{array}}
\newcommand{\bea}{\begin{eqnarray}}
\newcommand{\eea}{\end{eqnarray}}
\newcommand{\bee}{\begin{eqnarray*}}
\newcommand{\eee}{\end{eqnarray*}}
\newcommand{\R}{\mathbb{R}}
\renewcommand{\gg}{{\bf g}}
\newcommand{\bgg}{{\bar{\bf g}}}
\newcommand{\tgg}{{\tilde{\bf g}}}
\renewcommand{\H}{\mathcal{H}}
\renewcommand{\a}{\alpha}
\renewcommand{\b}{\beta}
\renewcommand{\d}{\delta}
\renewcommand{\r}{\rho}
\newcommand{\la}{\lambda}
\newcommand{\green}[1]{{\color{green}#1}}
\newcounter{mnotecount}[section]
\renewcommand{\themnotecount}{\thesection.\arabic{mnotecount}}
\newcounter{mymnotecount}[section]
\renewcommand{\themymnotecount}{\thesection.\arabic{mymnotecount}}
\newcommand{\mymnote}[1]{\protect{\stepcounter{mymnotecount}}${\raisebox{0.5\baselineskip}[0pt]{\makebox[0pt][c]{\color{green}{\tiny\em$\bullet$\themnotecount}}}}$\marginpar{\raggedright\tiny\em$\!\bullet$\themymnotecount:

\green{#1}}\ignorespaces}
\renewcommand{\mymnote}[1]{}
\newtheorem{theorem}{Theorem}[section]
\newtheorem{lemma}[theorem]{Lemma}
\theoremstyle{definition}
\newtheorem{definition}[theorem]{Definition}
\theoremstyle{remark}
\numberwithin{equation}{section}
\begin{document}

\title[Boundary Behaviour in the Orbit Space of the Kerr spacetime]{Axially Symmetric Perturbations of Kerr Black Holes II: Boundary Behaviour of the dynamics in the orbit space}
%\title[Boundary Behaviour in the Quotient of Kerr spacetime]{ Axially Symmetric Perturbations of Kerr Black Holes II: Boundary Behaviour in the quotient space and
%Strict Conservation}

%    Only \author and \address are required; other information is
%    optional.  Remove any unused author tags.

%    author one information
% \author[short version for running head]{name for top of paper}
\author{Nishanth Gudapati}
%\address{Center of Mathematical Sciences and Applications, Harvard University, 20 Garden Street, Cambridge, MA-02138, USA}
%\address{Albert Einstein Institute, Am M\"uhlenberg 1, Potsdam-Golm, D-14476, Germany}
\address{Department of Mathematics, Clark University, 950 Main Street,
	Worcester, MA-01610, USA}
%\curraddr{}
\email{ngudapati@clarku.edu}
\thanks{ This work was partly supported by the Deutsche Forschungsgemeinschaft (DFG) grant number GU 1513/2-1 at the Albert Einstein Institute in the fall of 2019} %; and  the Gordon and Betty Moore Foundation and the John Templeton Foundation, through the Black Hole Initiative of Harvard University}

%    \subjclass is required.
\subjclass[2010]{Primary: 83C57 Secondary: 35C15}

\date{\today}

%    Abstract is required.
\begin{abstract}
	In a previous work, we constructed a positive-definite total energy functional for the axially symmetric linear perturbative theory of Kerr black hole spacetimes.  That work is based on the dimensional reduction of dynamical axisymmetric spacetimes into 2+1 Einstein-wave map system. In the construction of the positive-definite energy, various dynamical terms, at the boundary of the orbit space, critically occur. 
	
	In this work, after setting up the initial value problem in harmonic coordinates, we prove that the positive energy for the axially symmetric linear perturbative theory of Kerr black holes is strictly conserved in time, by establishing that all the boundary terms dynamically vanish for all times. This result implies a form of dynamical linear stability of Kerr black holes.%, for the entire sub-extremal range.  
\end{abstract}

\maketitle

\section{Background and Motivation for this Article}
In any dynamical physical theory, the concept of energy plays a fundamental role in the analysis of stability of its solutions. For instance, in the Lyapunov theory of stability, the notion of energy, 
its positive-definiteness and its dynamical behaviour act as important basis for various notions of stability. Likewise, PDE techniques are typically based on a conserved and positive energy.  

As we discussed in our previous work \cite{NG_19_2}, the stability of Kerr black hole spacetimes is an important open problem in the mathematical studies of the Einstein general theory of relativity. 
A major obstacle in the Kerr black hole stability problem is that energy of fields propagating in the Kerr black hole spacetimes is typically not positive-definite, due to the ergo-region, that always surrounds the Kerr black hole with non-vanishing angular-momentum.

In that work (hereinafter referred to as `part I') we constructed a positive-definite `bulk' Hamiltonian energy functional for the axially symmetric linear perturbative theory of Kerr black hole spacetimes for the entire subextremal range $(\vert a \vert<M)$; and the associated Lagrangian and Hamiltonian variational principles. %The construction of the energy holds for any stationary spacetime that admits the dimensional reduction to the $2+1$ dimensional Einstein-wave maps system.  \\
This construction is relevant because it seeks to overcome the aforementioned obstacle in the Kerr black hole stability problem. 
%caused by the ergo-region that always surrounds a Kerr black hole with non vanishing angular-momentum. 
%This construction, which is based on Hamiltonian methods, is a special case of the stability results of the Kerr-Newman $(a, Q, M)$ spacetime. 

We used a special `Weyl-Papapetrou' gauge that provides additional structure in Einstein's equations for general relativity:
\begin{align}
R_{\mu \nu} =0, \quad \mu, \nu=0, 1,2, 3 \quad (\bar{M}, \bar{g})
\end{align}
when the $3+1$ Lorentzian spacetime $(\bar{M}, \bar{g})$ admits a rotational isometry: 
\begin{align} \label{WP-gauge}
\bar{g} = \vert \Phi \vert^{-1} g + \vert \Phi \vert (d \phi + A_\nu dx^\nu)^2, \quad \nu = 0 ,1, 2 \quad \text{(Weyl-Papapetrou gauge)}
\end{align}
where $g$ is the Lorentzian metric on the orbit space $M \fdg = \bar{M}/SO(2).$ $A, g$ are independent of the parameter $\phi$ corresponding to the rotationally symmetric Killing vector $\Phi \fdg=\ptl_\phi,$ whose spacetime norm squared is represented as $\vert \Phi \vert.$

The effects of the geometry of the Kerr black hole spacetime manifest themselves in the Kerr black hole stability problem in a fundamental way. In particular, as a result of the ergo-region the perturbations (scalar wave, Maxwell and linearized gravity) do not necessarily have a positive-definite and conserved energy. This significantly limits the immediate use of the standard PDE techniques in establishing the boundedness and asymptotic decay of the perturbations. 

A standard technique to construct a positive-definite energy functional is to consider an energy  current from the linear combination of the time-translational $\ptl_t$ and rotational $\ptl_\phi$ vector fields:
\begin{align}
\ptl_ t + \chi \ptl_\phi.
\end{align} 
 In view of the fact that the corresponding energy is not necessarily conserved, a separate, intricate Morawetz spacetime integral estimate is needed to control this energy in time. Moreover, these methods are suitable for small angular-momentum $\vert a \vert \ll M.$ %In this spirit, there  exist several remarkable works in the recent literature \cite{}. I
 
 In our previous work, we constructed a positive-definite  `regularized' Hamiltonian energy $H^{\text{Reg}}$ for the linear perturbative theory of Kerr black hole spacetimes, that holds for the entire subextremal range $\vert a \vert<M$ of the (background) Kerr black hole spacetimes. We refer the reader to the part I for the general context of our results. 
  
 %We would like to remark that our positive-definite `regularized' Hamiltonian energy $H^{\text{Reg}}$ \cite{} corresponds to the Hamilton-Jacobi flow (`symplectic current') of the $\ptl_t$ vector field (see below for details). 
 
 In our method, the construction of a positive-definite and conserved energy for the axially symmetric perturbations of the Kerr metric is based on the dimensional reduction of the $3+1$ spacetime $(\bar{M}, \bar{g})$ in the Weyl-Papapetrou gauge \eqref{WP-gauge}. In the Weyl-Papapetrou gauge the vacuum Einstein equations can be represented as 
 
 \begin{align}
 E_{\a \b} =& T_{\a \b}, \\
 \square_g U^A + \leftexp{(h)}{\Gamma}^A_{BC} g^{\a \b} \ptl_\a U^A \ptl_\b U^B=&0, \quad (M, g)
 \end{align} 
 where $T$ is the energy-momentum tensor 
 \begin{align}
 T_{\a \b} \fdg = \ip{\ptl_\a U}{ \ptl_\b U}_{h} - \halb g_{\a \b} \ip{\ptl^\sigma U}{ \ptl_\sigma U}_h
 \end{align}
 
 \noindent of the wave map $U \fdg (M, g) \to (N, h)$ the target $N$ is the negatively curved hyperbolic $2$- plane $\mathbb{H}^2$, $E_{\a \b} \fdg = R_{\a \b} - R_g g_{\a\b}, \a \b =0, 1, 2$ are the Einstein tensor, Ricci tensor and the scalar curvature of the orbit space $(M, g)$ and $\square_g \fdg = \leftexp{(g)}{\grad}^\a \leftexp{(g)}{\grad}_\a $ is the covariant wave operator. 
 In the construction of the energy, we use the linearization stability methods, whose ideas were originally developed in a different context. Using the `alternative' Hamiltonian constraints, 
 
 \begin{align}
 H'^{\text{Alt}} \fdg = \bar{\mu}_{q_0} (2 \Delta_0 \mbo{\nu}'  + h_{AB}(U) U'^B (\Delta_0 U^A + \leftexp{(h)}{\Gamma}^A_{BC} q^{ab}_0 \ptl_a  U^B \ptl_b U^C))
 \end{align}
 
 and identities of the form 
 
 \begin{align}
 &\halb \ptl_{U^C} h_{AB} q^{ab} \ptl_a U^A \ptl_b U^B U'^C + h_{AB}(U) q^{ab} \ptl_a U'^A \ptl_b U^B \notag\\
 &= h_{AB} U'^B (\Delta_q U'^A + \leftexp{(h)}{\Gamma}^A _{CD} q^{ab} \ptl_a U^C \ptl_b U^D ) 
 \end{align}

 and 
 
 \begin{align}
 H''^{\text{Alt}} \fdg =& \bar{\mu}^{-1}_{q_0} (2 e^{-2 \mbo{\nu}} \Vert \varrho' \Vert^2_{q_0} - \tau'^2 e^{2 \mbo{\nu}} \bar{\mu}^2_{q_0} + p'_A p'^A) \notag\\
 &- \bar{\mu}_{q_0} h_{AB} U'^B( \leftexp{(h)}{\Delta} U'^A +R^A_{BCD} q_0^{ab} \ptl_a U^B \ptl_b U^C U'^D)
 \end{align}
     
with further transformations like

\begin{align}
& \leftexp{(h)}{\grad}_a (h_{AB} U'^B \leftexp{(h)}{\grad} ^a U'^A) - h_{AB} U'^B  \leftexp{(h)}{\grad}_a  \leftexp{(h)}{\grad}^a U'^A \notag\\ 
&= h_{AB} q^{ab} \leftexp{(h)}{\grad}_a U'^A \leftexp{(h)}{\grad}_b U'^B , \quad (\Sigma, q),
\end{align}

we constructed a positive-definite  `regularized' Hamiltonian $H^{\text{Reg}}$
\begin{align} \label{e-reg}
H^{\text{Reg}} \fdg = \int_{\Sigma} \mathbf{e}^{\text{Reg}} \, d^2 x.
\end{align}

\begin{align} \label{e-reg-den}
\mathbf{e}^{\text{Reg}} \fdg=& N \bar{\mu}^{-1}_{q_0} e^{-2 \mbo{\nu}} \left( \Vert \varrho' \Vert^2_{q_0} + \halb p'_A p'^A  \right)  \notag\\
&+ \halb N \bar{\mu}_{q_0} q_0^{ab} h_{AB}(U) \leftexp{(h)}{\grad}_a U'^A \leftexp{(h)}{\grad}_b U'^B \notag\\
&- \halb N \bar{\mu}_{q_0} q^{ab}_0 h_{AE} (U) U'^A \leftexp{(h)}R^E _{BCD} \ptl_a U^B \ptl_b U^C U'^D 
\end{align} 

 where $  (U^A, p_A)$ are the elements of the wave map phase space $X'_{\text{WM}} \fdg= \{ (U^A, p_A)$, $A$ is the index on the Riemannian target $(\mathbb{H}^2, h) \}$ and $  (U'^A, p'_A )$ are the elements of the perturbed wave map phase space   $X'_{\text{WM}} \fdg= \{ (U'^A, p'_A)\}$, $U'^A$ are now the elements of the tangent bundle of the target $(\mathbb{H}^2, h).$ In the construction of the `regularized' energy, we fundamentally used the fact $(N, 0)^T$ is the kernel of the adjoint of the dimensionally reduced constraint map. In view of the fact that the aforementioned kernel is one dimensional, we demonstrated that this is the unique choice to construct an energy functional.  Our construction holds for any gauge on the target and also holds for the toroidal perturbations of higher-dimensional black holes \footnote{which admit a dimensional reduction to Einstein-wave system}. If we restrict to a special gauge on the target 
 
 \begin{align}
h = 4 d\gamma + e^{-4\gamma} d\omega^2
 \end{align}
  where the variables $(\gamma, \omega, p, \mbo{r})$ are the elements of the wave map phase space $X\fdg= \{ (\gamma, p), ( \omega, \mbo{r}) \}$ in the chosen gauge  of the target metric $h.$ Likewise,  $(\gamma', \omega', p', \mbo{r}')$
 are the elements of the perturbed phase space  $X' \fdg= \{ (\gamma', p'), ( \omega', \mbo{r}') \}$ for the background Kerr metric. Firstly, an energy functional can be constructed as follows 
 
 \begin{align} \label{Ham-orig}
 E^{\text{Alt}} =& \int_{\Sigma} N \bar{\mu}^{-1}_q (\varrho'^b_a \varrho'^a_b + \frac{1}{8} p'^2 + \halb e^{4\gamma} r'^2) - \halb \bar{\mu}_q \tau'^2 + \bar{\mu}_q q^{ab} (4 \ptl_a \gamma' \ptl_b \gamma' + e^{-4\gamma} \ptl_a \omega' \ptl_b \omega' \notag\\
 &+ 8 e^{-4\gamma} \gamma'^2 \ptl_a \omega \ptl_b \omega- 8 e^{-4\gamma} \gamma' \ptl_a \omega \ptl_b \omega')
 \end{align}
 
As it is evident from the above, this Hamiltonian has a indefinite sign and the effect of the ergo region is evident in this formula. We can construct a positive-definite energy functional for the perturbation theory of Kerr metric using a generalization of the Carter's identity:

\begin{align}
&\bar{\mu}_q q^{ab} \big( 4 \ptl_a \gamma'
\ptl_b \gamma'  + e^{-4\gamma} \ptl_a \omega' \ptl_b \omega'   + 8e^{-4\gamma} \gamma'^2 \ptl_a \omega \ptl_b \omega-8e^{-4\gamma} \gamma' \ptl_a \omega \ptl_b \omega'\big) \notag\\
&+ \ptl_b (N \bar{\mu}_q q^{ab} ( -2 e^{-4\gamma} \ptl_a \gamma \omega' + e^{-4\gamma} \omega' + 4 e^{-4\gamma} \gamma' \ptl_a \omega)) \notag\\
&+ \halb \bar{\mu}_q e^{-4\gamma} L_1 (e^{-2\gamma} \omega') + \bar{\mu}_q L_2 (-4\gamma' \omega')  \notag\\
&= N \bar{\mu}_q q^{ab} (\leftexp{(1)}{V}_a \leftexp{(1)}{V}_b + \leftexp{(2)}{V}_a \leftexp{(2)}{V}_b + \leftexp{(3)}{V}_a \leftexp{(3)}{V}_b),
\end{align}
where,
\begin{subequations}
	\begin{align}
	\leftexp{(1)}{V}_a =& 2 \ptl_a \gamma' + e^{-4\gamma} \omega' \ptl_a \omega, \\
	\leftexp{(2)}{V}_a=& -\ptl_a (e^{-2\gamma}\omega') + 2 e^{-2\gamma}\gamma' \ptl_a \omega, \\
	\leftexp{(3)}{V}_a =& 2 \ptl_a \gamma \omega' -2\gamma' \ptl_a \omega,
	\end{align}
\end{subequations}
and 
\begin{subequations}
	\begin{align}
	L_1 \fdg =& e^{-2 \gamma} ( \ptl_b (N \bar{\mu}_q q^{ab} \ptl_a \gamma) + N e^{-4\gamma} \bar{\mu}_q q^{ab} \ptl_a \omega \ptl_b \omega) \\
	L_2 \fdg =& - \ptl_b( N \bar{\mu}_q q^{ab} e^{-4\gamma} \ptl_a \omega) 
	\end{align}
\end{subequations}

It should be pointed out that a positive-definite energy functional for axially symmetric perturbations was first constructed by Dain-de Austria for the extremal $\vert a \vert =M$ Kerr black hole spacetimes. This work, which is based on the Brill mass formula, uses the original Carter's identity (with $N = `\rho'$). 

The work \cite{GM17_gentitle} deals with the Kerr-Newman (for the entire subextremal range $\vert a \vert, \vert Q \vert <M$) dynamical stability problem within Einstein-Maxwell system of equations and uses a generalization of the Robinson's identity. 
In addition to the construction of a positive-definite and conserved energy for the Kerr-Newman stability problem, this work comprehensively deals with several peripheral aspects of this problem. The current work on Kerr black holes was developed parallely and intended for mathematics audience. 

%which focuses specifically on strict conservation of the positive energy, was developed parallelly and intended for mathematics audience. 

% this work contains comprehensive details of our framework, methodology and several peripheral aspects, addressing this more general problem. %; and shall serve as a comprehensive treatise on our methodology. 

 The problem of construction of a positive-definite and conserved energy for axially symmetric Maxwell fields on Kerr black holes is a non-trivial one. Indeed, whether the Maxwell fields admit a positive-definite and conserved energy, analogous to an axially symmetric linear scalar field, was an open problem until it was first resolved in \cite{GM17_gentitle} (Section 2; see also \cite{NG_17_2, NG_19_1}). Interestingly, it appears that this energy functional follows as a special case of the Kerr-Newman black hole stability problem, rather than that of Kerr black holes.

 Following the use of the Carter-Robinson identities for the sub-extremal case, there was still the lingering question of why do these transformations magically solve the issues of the ergo-region and the positivity and conservation of energy in the stability theory of Kerr and Kerr-Newman black holes, even in the axially symmetric case, which as we just discussed is nontrivial because the energy density can in principle be locally negative. Our results in \cite{NG_19_2} demonstrate that the reason for these transformations holding in such a way that one can obtain positivity and yet preserve the 'symplectic structure', is not  `by fluke', but that there are well-defined geometric and variational underlying structures, namely, the covariant (in target) nature of the dimensionally reduced system, the negative curvature  of the target and the linearization stability methods.  In the context of the black hole uniqueness theorems, the associated generalizations of Carter-Robinson identities were constructed by Bunting and Mazur \footnote{It is indeed remarkable that the rather ingenious identity of Carter (Robinson for the Einstein-Maxwell case) that was seemingly constructed from `trail and error', would later have a natural geometric interpretation } \footnote{these results also hold for higher dimensional black holes and have been used for black hole uniqueness theorems in higher dimensions (see  \cite{Hollands-Ishibashi_12})}.  We adapted these results for our present problem of (dynamical) black hole stability in the aforementioned work \cite{NG_19_2}.

 In the process of construction of such a positive-definite Hamiltonian energy, we pick up several boundary terms, at various stages. From both conceptual and technical perspectives these boundary terms play an important role in our dynamical stability problem. These issues do not arise in the corresponding black hole uniqueness theorems of stationary black holes. For the convenience of the reader, let us elaborate on why it is fundamental to rigorously understand the behaviour of these boundary terms and provide a motivation for this article.

 As we already remarked, there is a possibility that the energy density in the original construction can be made to be locally negative. The positive-definite energy is constructed using the transformations that involve boundary terms from a Hamiltonian energy with an indefinite sign, while preserving the `bulk' Hamiltonian structure of the equations. Therefore, in our argument it is fundamental to rule out the `negativity' or the ambiguity of sign does not `secretly get hidden' in a plethora of boundary terms that occur in both the Carter-Robinson type identities and the linearization stability methods.
 
 In the usual PDE theory, we perform a variation with respect to smooth compactly supported $C^\infty_0$ `test functions' to evaluate the field equations and understand the critical points.  However, in the Einstein equations one cannot ignore the boundary terms, because they can have a physical and geometric interpretation. %Indeed, there is a school of thought within the Hamiltonian formulation of general relativity that these surface integrals (e.g., ADM mass and momenta) are `Hamiltonians' for the Einstein's equations. %In physics literature, a Hamiltonian formulation based on such surface integrals is referred to as `the covariant phase-space formulation'. %In this work, we partly reconcile our approach with the covariant phase space formulation. 
 
 From the perspective of calculus of variations, a positive-definite second variation mass-energy corresponds to the mass of the Kerr black hole being a minimizer, for fixed angular momentum, in the space of admissible metrics. This interpretation combines well with the mass-angular momentum inequalities for axisymmetric spacetimes \cite{D09}. However, as we already discussed, the boundary terms cannot be ignored in this interpretation.    
 
 Most PDE techniques in establishing the uniform boundedness and decay, work at their natural best if there exists a positive-definite and (strictly) conserved energy, that goes together with the evolution of the PDEs. In case there exists a residual, dynamically non-vanishing surface integral (especially with an indefinite sign), together with a positive-definite `bulk' Hamiltonian \eqref{e-reg} \eqref{e-reg-den}, then this could significantly impede the efficacy of our positive-definite   `bulk' Hamiltonian \eqref{e-reg} \eqref{e-reg-den} in PDE methods that establish uniform-boundedness and decay. 

The Weyl-Papapetrou gauge \eqref{WP-gauge}, although it plays a fundamental role in the construction of our bulk Hamiltonian energy \eqref{e-reg} \eqref{e-reg-den}, presents significant regularity issues at the axes and at the infinity. This is due to the behaviour of the rotationally symmetric Killing vector field i.e., $\vert \Phi \vert^{-1}$ blows up at the axes $\Gamma$ and $\vert  \Phi \vert$ blows up at the infinity $\iota^0$. These expressions routinely occur in our formulas due to the form of the Weyl-Papapetrou gauge \eqref{WP-gauge}. It may be noted that these regularity issues manifest themselves in the form of boundary terms that occur  precisely at these boundaries $\Gamma, \iota^0$ and $\mathcal{H}^+$. 

The Einstein equations in the Weyl-Papapetrou gauge are not purely hyperbolic in nature. In particular, the Weyl-Papapetrou gauge has coupled elliptic-hyperbolic PDE structure. This causes gauge-related causality issues. Even if we start with a compactly supported initial data, away from the boundaries $\Gamma, \iota^0, \mathcal{H}^+$, thereby ensuring the regularity and the vanishing of boundary flux integrals initially,  it is not necessary that the fields stay compactly supported at later times. In other words, `pure gauge' perturbations and the associated boundary terms, can `kick-in' in the asymptotic regions, away from the causal future of the support of initial data, at later times, thus affecting the boundary behaviour of the dynamics.

On account of the boundary related issues and complications mentioned above,  the question whether the Weyl-Papapetrou gauge is even compatible with the axially symmetric evolution problem of the Einstein equations can legitimately arise. In this work, we shall provide a favourable answer to this question. We explain how we overcome these complications and discuss our methods in the following. 

We formulate the initial value problem of the linearized Einstein equations in the harmonic gauge and transform the solutions to the Weyl-Papapetrou gauge. In the harmonic gauge, the constraints and the gauge condition propagate in time automatically, as long they are satisfied on the initial data. Indeed, in the harmonic gauge the linearized Einstein equations system is purely a hyperbolic system of equations, which in turn implies propagation of regularity and causality, from standard theory of hyperbolic PDE. We take advantage of the global regularity in harmonic gauge including at the axes and infinity in harmonic gauge. We would like to point out that, when we make such gauge transformations, we can lose regularity but in our construction we show  infact that there exists a $(C^\infty-)$diffeomorphism.  %from harmonic coordinates $(\bar{M}', \bgg')$ to Weyl-Papapetrou gauge $(\bar{M}', \bar{g}').$ 
 In our work, we also construct a variational formulation of the linearized Einstein equations in harmonic gauge, which may be of independent interest and fit well with the theme of our approach. 

%In view of such complications related to technique and interpretation, 
In view of such conceptual and technical subtleties, the `safest' way to prove that the positive-energy $H^{\text{Reg}}$ we constructed is strictly conserved is to use the argument that the time-derivative of the energy $H^{\text{Reg}}$ vanishes dynamically in time. In other words, as we have already shown that the time derivative of the energy density $\mathbf{e}^{\text{Reg}}$  is a pure spatial divergence (a fact that is associated to $(N, 0)^{\text{T}}$ being  the kernel of the adjoint of the dimensionally reduced constraint map):

\begin{align}
(J^ t)^ {\text{Reg}} \fdg =& \, \mathbf{e}^{\text{Reg}} \notag\\
(J^b)^{\text{Reg}} \fdg=& \, N^2 e^{-2 \mbo{\nu}}(q_{0}^{ab} p'_A \ptl_a U'^A) + U'^A \mathcal{L}_{N'} ( N\bar{\mu}_{q_0} q^{ab}_0 h_{AB} \ptl_b U^B)  ) \notag\\
& \mathcal{L}_{N'} (N) (2 \bar{\mu}_{q_0} q_{0}^{ab} \ptl_a \mbo{\nu}') + 2 \mathcal{L}_{N'} \mbo{\nu}' \bar{\mu}_q q^{ab} \ptl_a N - 2 N'^b \bar{\mu}_q q^{bc} \ptl_a \mbo{\nu}' \ptl_c N,
\end{align}
where $J^{\text{Reg}}$ is a divergence-free vector field density,
% $\big \{ (q'_{ab}, \mbo{\pi}'^{ab}), (U'^A, p'_A) \big \} \in \mathscr{C}_{H'} \cap \mathscr{C}_{H'_a} \cap \mathscr{C}_{\tau'}.$ 
we need to prove that the fluxes of $J^b$ at the boundaries vanish dynamically in time. 

%In dealing with the boundary terms, we take advantage of the special structure provided for our problem. The $2-$tensors for our problem can admit a splitting into transverse-traceless tensors and the trace part. This splitting combines well with the Fredholm theory. Subsequently, for the geometry and topology of the orbit space, the transverse-traceless tensors vanish. 
%Furthermore, we take advantage of the fact that the conformal Killing operator is invariant under conformal operator. As a result we obtain decoupled Poisson equations for tensor and vectors.  
 
Using Fredholm theory and that transverse-traceless $2-$tensors vanish \footnote{this is in contrast with the $2+1$ dimensional relativistic Teichm\"uller theory ( for e.g.,  $\Sigma$ is compact and of genus $>1$) , where transverse-traceless tensors play an important role (see \cite{Moncrief_2007}). } for our geometry and topology, we reduce the elliptic operators into conformal Killing operators. Furthermore, benefiting from conformal invariance of these operators, we reduce the elliptic operators into (tensorial) Poisson equations.  We are then able to obtain the desired decay rates of the fields using both the fundamental solution and Fourier methods.

In the fundamental solution approach, we construct regularity and decay rate  in our orbit space geometry, using the method of images in such a way  that the total `charge' of the source is zero. In the Fourier methods, the regularity conditions imply that the frequency corresponding to the logarithmic blow up of the solution does not occur. In both methods, we recover a faster decay rate than for the usual Poisson equation in two dimensions. This faster decay rate, in contrast to the translational symmetric dimensional reduction, plays a fundamental role in our work. 

 The diffeomorphism invariance of the Einstein equations allow us the gauge-freedom. In the $3+1$ (vis-\'a-vis $2+1$) picture, this gauge freedom is reflected in terms of the `lapse' and the `shift' vector field. After fixing the gauge, we estimate the behaviour of gauge dependent quantities in the asymptotic regions where the perturbations are assumed to be pure gauge, starting from the independent wave map phase space variables and then moving on to dependent phase space variables. 
 
 Following the estimates on the fundamental phase space variables, we estimate the fluxes of each of the quantities in $J^b$ term by term and we prove that regularity holds and that they dyanamically vanish at all the three boundaries $\Gamma, \iota^0$ and $\H^+,$ including at the corner $\Gamma \cap \H^+.$  
 
 In establishing the boundary behaviour of these terms, 
 we pay special attention to the quantity $\mbo{\nu}$ that is related to a conformal factor. We note that  $\mbo{\nu}'$ \emph{does not transform like a scalar}. %(in fact, its transformation behaviour is closer to that of a density). 
 This result may be of independent interest in conformal geometry. Secondly, we also establish that an integral quantity $a(R_+, t)$ that vanishes for all times. These two results are crucial in resolving the regularity issues on the axes and at the corners. 
 
%In view of the somewhat surprising and subtle overall nature of the problem and the results, we were cautious in releasing the results, starting with the Maxwell case \footnote{which is a locally gauge-invariant problem}, until all relevant aspects of the problem  are understood. As remarked in part I, our methods provide a robust mechanism to study the stability of a variety of black hole spacetimes. 

In the current work, in keeping with the spirit of part I, we pay special attention to the covariance 
of our analysis with respect to the target metric and not rely on a specific gauge on the target. Apart from the aesthetics, this can be used as a basis for studying the stability of higher $(n+1)$ dimensional black holes with toroidal symmetry $\mathbb{T}^{n-2}.$

\iffalse

We formulate the problem in harmonic coordinates by first computing the variational principles for the perturbative theory in harmonic coordinates. As far as we know this has not been done before, so it could be of independent interest.   We give a variational proof the gauge transformations are abelian for the linear perturbation theory. In other words, the Einstein metrics, orthogonality and Bianchi identity. The fact that the gauge transformations form an abelian group is relevant when use/need multiple gauge transformations. Constraints form an invariant set of the initial value problem of Einstein's equations due to the harmonic gauge condition, as a consequence the system is strictly hyperbolic. Subsequently we prove that there exists a $C^\infty$ diffeomorphism for the perturbative theory \\
**************\\
\fi 

%The problem of stability of Kerr black hole spacetimes is currently an area of very active research among several groups worldwide. 
The problem of stability of Kerr black hole spacetimes is currently an area of very active research among several groups worldwide. In this connection, there are several important and remarkable works (see e.g.,  \cite{HDR_17, GKS_20, HHV_19, DR_11, LB_15_1, LB_15_2, SMa_17_1, SMa_17_2, ABBM_19, Tato_11}). In \cite{HDR_16}, Dafermos-Holzegel-Rodnianski have established linear stability of Schwarzschild black hole spacetimes for gravitational perturbations, where a positive-definite energy played a fundamental role (see \cite{GH_16}). 
A positive-definite energy functional for both even and odd gravitational perturbations of Schwarzschild black holes was constructed by Moncrief \cite{Moncrief_74} using mode decomposition. We should point out that even in the case of Schwarzschild black holes, which do not contain the ergo-region, establishing the existence of a positive-definite energy functional for gravitational is non-trivial. 

In the case of axial symmetry, wave map behaviour in Kerr spaces has been studied in \cite{IK_15}. As we already pointed out, a positive-definite energy was first constructed by Dain-de Austria \cite{DA_14} for extremal Kerr black holes $(\vert a \vert =M)$ using Brill mass formula \cite{B59, D09}. 

 These works (except \cite{DA_14}) are dedicated to the stability of Kerr black hole spacetimes with `small' or `very small' angular momentum $(\vert a  \vert \ll M).$ Relatively less is understood about the stability of Kerr black hole spacetimes for large, but subextremal angular momentum $ \vert a \vert<M$.  This is mainly attributed due to the effects of the ergo-region that always surrounds a Kerr black hole spacetime with a non-vanishing angular momentum. 

%Arguably, the most difficult aspect of the mathematical problem of stability of Kerr black hole spacetimes ($\vert a \vert<M$; and in general rotating black hole spacetimes) is the ergo-region  and the associated lack of a positive-definite and conserved energy functional.  As a consequence, most of the important and remarkable works focus on the  focus on the small angular momentum $(\vert a \vert<M)$ of the Kerr black hole spacetimes. 

In the case of the large and sub-extremal angular-momentum of Kerr black hole spacetimes $ (\vert a \vert <M)$, the effects of the ergo-region become even more subtle and counterbalancing its effects to obtain uniform-boundedness and decay of propagating fields is even more difficult from a PDE perspective.
The decay of a linear wave equation on Kerr black hole spacetimes with $\vert a \vert$ is studied in the remarkable works \cite{FKSY_05,FKSY_06,FKSY_08,DRS_16}. 

  We expect that our results will be useful to fill this gap for Maxwell and gravitational (i.e., Einsteinian) perturbations. Even among the methods for large $\vert a \vert<M,$ our approach is different in the sense that the (Hamiltonian) flow of our phase space variables is restricted to positive-definite energy surfaces. Thus, we have  a relation (equality) between the energy at different time levels without the need for a spacetime `bulk' integral or Morawetz estimate. \\

The more general class of $3+1$ Lorentzian spacetimes is the Kerr-Newman family of spacetimes which is a solution of coupled  $3+1$ Einstein-Maxwell equations for general relativity. In a series of classical works \cite{Moncrief_74_1,Moncrief_74_2,Moncrief_74_3},  the stability of Reissner-Nordstrom spacetimes is studied for the entire sub-extremal range $\vert Q \vert <M$. Indeed, the fact that the Hamiltonian stability results of Reissner-Nordstrom spacetimes hold for the full sub-extremal range was an early indication that the current results and the Kerr-Newman results \cite{GM17_gentitle} are feasible. In view of the rigidity of the Reissner-Nordstrom spacetimes, `non-trivial' perturbations of the RN spacetimes belong to the Kerr-Newman family of black hole spacetimes. Apart from the Reissner-Nordstrom perturbations, as remarked in \cite{GM17_gentitle}, the mathematical problem of stability of perturbations of genuine Kerr-Newman black hole spacetimes $(a, Q \neq 0)$ is almost completely uncharted (see a recent work \cite{E_Giorgi_21}).  %\footnote{except for a brief discussion in the classical monograph of Chandrasekhar \cite{}}. 
We expect that the results in \cite{GM17_gentitle} shall allow us to venture in this direction. Equivalent results for the stability of Kerr-Newman-de Sitter spacetimes $(\vert a \vert, \vert Q \vert<M)$, which has different asymptotics and gauge issues compared to the Kerr-Newman problem in \cite{GM17_gentitle}, is a work in progress ( cf. \cite{NG_17_1, NG_17_2} for now).

Hollands and Wald \cite{WH_13} have developed a notion of canonical energy, which was later extended by Prabhu-Wald, where they showed that if the energy is not positive-definite for axisymmetric perturbations of Kerr black hole spacetimes, it would blow up at later times. 
This work (together with \cite{NG_19_2}), based on the $2+1$ Einstein-wave map formalism and dimensional reduction in Weyl-Papapetrou gauge, confirms their criterion for axisymmetric stability. 
%In other words, they conjecture that for Kerr black hole spacetimes to be stable for axisymmetric perturbations, there should be a positive-definite energy. 
%In our work (together with \cite{NG_19_2}), which originally began with the aim of understanding $2+1$ Einstein-wave map system in dynamical axisymmetric spacetimes (via dimensional reduction, in Weyl-Papapetrou gauge), we resolve their conjecture, for the full sub-extremal range $\vert a \vert <M.$

%Separately, the construction used in this work can also be useful to study the stability of the Kerr metric $(\vert a \vert<M)$ (vis-\'a-vis for  the Kerr-Newman metric $(\vert a \vert, \vert Q \vert <M) $, by building on []) in harmonic coordinates. %, in view of the fact that our regularized Hamiltonian energy $H^{\text{Reg}}$ is gauge-invariant. 
%Global existence and nonlinear stability of the Minkowski metric $(\vert a \vert, M =0)$ using harmonic coordiantes was established in []. 
%Recently, the linear stability of Schwarzschild black hole spacetime $(\vert a \vert=0)$ using harmonic gauge was established in \cite{}. 

%\section{Variational Formulation and Global Existence for Linearized Einstein Equations in Harmonic coordinates}
\section{Global Existence and Propagation of Regularity}

Suppose $(\bar{M}, \bar{g})$ is a Lorentzian spacetime, then consider the Einstein-Hilbert action on $(\bar{M}, \bar{g}):$
\begin{align}
S_{\text{H}}[\bar{\gg}] \fdg = \int  \bar{R}_{\bar{\gg}} \bar{\mu}_{\bar{\gg}}.
\end{align}
Consider a curve $\mbo{\gamma}_s \fdg [0,1] \to C(\bgg)$ where $C(\bgg)$ is a space of smooth Lorentzian metrics $\bgg$.  We shall use the following notation $ \bar{\gg}' \fdg = D_{\mbo{\gamma}_s} \cdot \bar{\gg}$ and define 

\begin{align} 
D_{\mbo{\gamma}_s} \cdot S_{\text{H}} [\bgg', \bgg] \fdg = \int \left(  -\text{Ric}(\gg)^{\mu \nu} \bar{\gg}' + \halb R_\gg \gg^{\mu \nu} \gg'_{\mu \nu} + \leftexp{(\bar{\gg})}{\grad}^\mu \leftexp{(\bar{\gg})}{\grad}^\nu \bar{\gg}'_{\mu \nu} - \leftexp{(\bar{\gg})}{\grad}^2 \text{tr}(\bgg')  \right) \bar{\mu}_\bgg 
\end{align} 

Now, consider another analogous curve $\mbo{\gamma}_\la \fdg [0, 1] \to C (\bgg)$ and let us define $ \bgg'' \fdg = D^2_{\gamma_s \gamma_\lambda} \cdot \bgg, $ we then have, for the Kerr background metric. The functional \eqref{EH-second-variation} can be simplified as follows,

\begin{align} \label{EH-second-variation}
D^2_{\mbo{\gamma}_\la \mbo{\gamma}_s} \cdot S_{\text{H}} [\bgg'', \bgg', \bgg] \fdg =&  \int 
\Big(  \bgg^{\mu \nu} \leftexp{(\bar{\gg})}{\grad}^\gm  \leftexp{(\bar{\gg})}{\grad}^\d \bgg'_{\gm \d}  - \bgg^{\mu \nu} \leftexp{(\bar{\gg})}{\grad}^\gm \leftexp{(\bar{\gg})}{\grad}_\gm \text{tr} (\bgg')
\notag\\
&- \leftexp{(\bar{\gg})}{\grad}_\gm \leftexp{(\bar{\gg})}{\grad}^\mu \bgg'^{\gm \nu} - \leftexp{(\bar{\gg})}{\grad}_\gm \leftexp{(\bar{\gg})}{\grad}^\nu \bgg'^{\gm \mu} 
+ \leftexp{(\bar{\gg})}{\grad}^\mu \leftexp{(\bar{\gg})}{\grad}^\nu \text{tr}(\bgg ') \notag\\
&+ \leftexp{(\bar{\gg})}{\grad}^\a \leftexp{(\bar{\gg})}{\grad}_\a \bgg'^{\mu \nu} \Big) \halb \bgg'_{\mu \nu} \bar{\mu}_\bgg +  \leftexp{(\bar{\gg'})}{\grad}^\mu( \leftexp{(\bar{\gg})}{\grad}^\nu \bgg'_{\mu \nu} - \bgg^{\gm \d} \leftexp{(\bar{\gg})}{\grad}_\mu \bgg'_{\gm \d} ) \bar{\mu}_{\bgg}
\notag\\
& + \leftexp{(\bar{\gg})}{\grad}^\mu ( \leftexp{(\bar{\gg})}{\grad}^\nu \bgg'_{ \mu \nu} - \bgg^{\gm \d} \leftexp{(\bar{\gg})}{\grad}_\mu \bgg'_{\gm \d} ) \bar{\mu}_\bgg \notag\\
 & + \leftexp{(\bar{\gg})}{\grad}^\mu ( \leftexp{(\bar{\gg})}{\grad}^\nu \bgg'_{\mu \nu}  - \bgg^{\gm \d} \leftexp{(\bar{\gg})}{\grad}_\mu \bgg'_{ \gm \d} ) ( \halb \bar{\mu}_\bgg \bgg^{\mu \nu} \bgg'_{\mu \nu} ).
\end{align}

We are interested in a variational principle, so that we get the Euler-Lagrangian field equations for the linearized Einstein equations around the Kerr black hole spacetimes. The functional \eqref{EH-second-variation} can be simplified as follows, 

\begin{align} \label{LG-Lagrangian}
S_{\text{LEE}}[\bar{\gg}', \bar{\gg}] =& \int \halb \Big( \leftexp{(\bar{\gg})}{\grad}_\a  \bar{\gg}'_{\mu \nu} \leftexp{(\bar{\gg})}{\grad}^\mu \gg'^{\a \nu}   + \leftexp{(\bar{\gg})}{\grad}_\a \bar{\gg}'_{\mu \nu} \leftexp{(\bar{\gg})}{\grad}^\nu \bar{\gg}'^{\a \mu} - \leftexp{(\bar{\gg})}{\grad}^\mu \gg'_{\mu \nu} \leftexp{(\bar{\gg})}{\grad}^\nu \text{tr} (\gg') \notag\\
&  - \leftexp{(\bar{\gg})}{\grad}_\a  \gg'_{\mu \nu} \leftexp{(\bar{\gg})}{\grad}^\a \gg'^{\mu \nu} 
- \leftexp{(\bar{\gg})}{\grad}^\a \text{tr} (\gg') \leftexp{(\bar{\gg})}{\grad}^\b \gg'_{\a \b} + 
\leftexp{(\bar{\gg})}{\grad} _\a \text{tr} (\gg')  \leftexp{(\bar{\gg})}{\grad}_\a \text{tr}(\gg') \Big) \bar{\mu}_{\bar{\gg}} 
\end{align}

Suppose $D$ is the local continuous group of diffeomorphisms generated by the vector field $\bar{Y}.$ It follows that the 2-tensor 
\begin{align}
(\bar{\gg}')_{\a \b}  =  (\mathcal{L}_{\bar{Y}} \bar{\gg})_{\a \b} =   \leftexp{(\gg)}{\grad}_\a \bar{Y}_\b +  \leftexp{(\gg)}{\grad}_\b \bar{Y}_\a, \quad  \a, \b = 0, 1, 2, 3
\end{align}

 is a critical point of the variational principle \eqref{LG-Lagrangian}, corresponding to the pure-gauge perturbations of the Kerr metric. Likewise, if $\bar{\gg'}_{\a \b}$ is a critical point, then 
 
 \begin{align}
 \bar{\gg}'_{\a \b} +   \leftexp{(\gg)}{\grad}_\a \bar{Y}_\b +  \leftexp{(\gg)}{\grad}_\b \bar{Y}_\a
 \end{align}
for the variational principle \eqref{LG-Lagrangian} for linearized gravity. Furthermore, it also follows that the gauge transformations are Abelian. In general, the  computation of the Euler-Lagrange field equations corresponding to the variation principle \eqref{LG-Lagrangian} involves the terms, 

\begin{itemize}

\item $\leftexp{(\bar{\gg})}{\grad} \bgg''_{\mu \nu} \leftexp{(\bar{\gg})}{\grad}^\mu \bgg'^{\a \nu} + 
\leftexp{(\bar{\gg})}{\grad}_\a \bgg'_{\mu \nu} \leftexp{(\bar{\gg})}{\grad}^\mu \bgg''^{\a \nu} $
which can transformed as 

\begin{subequations}
\begin{align}
\leftexp{(\bar{\gg})}{\grad}_\a \bgg''_{\mu \nu} \leftexp{(\bar{\gg})}{\grad}^\mu \bgg'^{\a \nu} = &
\leftexp{(\bar{\gg})}{\grad}_\a (\bgg''  \leftexp{(\bar{\gg})}{\grad}^\mu \bgg'^{\a \nu})  - \bgg''_{\mu \nu} \leftexp{(\bar{\gg})}{\grad}_\a \leftexp{(\bar{\gg})}{\grad}^\mu \bgg'^{\a \nu} \\
\leftexp{(\bar{\gg})}{\grad}_\a \bgg'_{\mu \nu} \leftexp{(\bar{\gg})}{\grad}^\mu \bgg''^{\a \nu} =& 
\leftexp{(\bar{\gg})}{\grad}^\mu ( \bgg''^{\a \nu} \leftexp{(\bar{\gg})}{\grad}_\a \bgg'_{\nu \nu})-
\leftexp{(\bar{\gg})}{\grad}^\mu \leftexp{(\bar{\gg})}{\grad}_a \bgg'_{\mu \nu} 
\end{align}
\end{subequations}

\item likewise, $ \leftexp{(\bar{\gg})}{\grad}_\a \bgg''_{\mu \nu} \leftexp{(\bar{\gg})}{\grad}^\nu \bgg'^{\a \mu}  + \leftexp{(\bar{\gg})}{\grad}_\a \bgg'_{ \mu \nu} \leftexp{(\bar{\gg})}{\grad}^\nu \bgg''^{\a \mu}$ can be rewritten as 

\begin{subequations}
\begin{align}
\leftexp{(\bar{\gg})}{\grad}_\a \bgg''_{\a \nu} \leftexp{(\bar{\gg})}{\grad}^\nu \bgg'^{ \a \nu} =& 
\leftexp{(\bar{\gg})}{\grad}_a (\leftexp{(\bar{\gg})}{\grad} \bgg'_{\mu \nu} \leftexp{(\bar{\gg})}{\grad} ^\nu \bgg'^{ \a \mu }) - \bgg''_{\mu \nu} \leftexp{(\bar{\gg})}{\grad}_\a \leftexp{(\bar{\gg})}{\grad}^\nu \bgg'^{\a \mu} \\
\leftexp{(\bar{\gg})}{\grad}_a \bgg'_{\mu \nu} \leftexp{(\bar{\gg})}{\grad}^\nu \bgg''^{\a \nu} =& 
\leftexp{(\bar{\gg})}{\grad}^\nu ( \bgg''^{\a \nu} \leftexp{(\bar{\gg})}{\grad}_\a \bgg'_{\mu \nu}) - 
\bgg'' \leftexp{(\bar{\gg})}{\grad}^\nu \leftexp{(\bar{\gg})}{\grad}_\a \bgg'_{\mu \nu}.
\end{align}
\end{subequations}
\item Now then, consider $ - \leftexp{(\bar{\gg})}{\grad}^\mu \bgg''_{\mu \nu} \leftexp{(\bar{\gg})}{\grad}^\nu \text{tr} \bgg' - \leftexp{(\bar{\gg})}{\grad}^\mu \bgg'_{\mu \nu} \leftexp{(\bar{\gg})}{\grad}^\nu \text{tr} \bgg'',$ which can be transformed conveniently as 

\begin{subequations}
\begin{align}
- \leftexp{(\bar{\gg})}{\grad}^\mu \bgg''_{\mu \nu} \leftexp{(\bar{\gg})}{\grad}^\nu \text{tr}\bgg' =& 
- \leftexp{(\bar{\gg})}{\grad}^\mu ( \bgg''_{\mu \nu} \leftexp{(\bar{\gg})}{\grad}^\nu \text{tr} \bgg') 
+ \leftexp{(\bar{\gg})}{\grad}''_{\mu \nu} \leftexp{(\bar{\gg})}{\grad}^\mu (  \leftexp{(\bar{\gg})}{\grad}^\nu \text{tr} \bgg') \\
- \leftexp{(\bar{\gg})}{\grad}^\mu \bgg'_{\mu \nu} \leftexp{(\bar{\gg})}{\grad}^\nu \text{tr} \bgg'' =& 
- \leftexp{(\bar{\gg})}{\grad}^\nu (\text{tr}\bgg'' \leftexp{(\bar{\gg})}{\grad}^\mu \bgg'_{\mu \nu} ) + 
\text{tr}\bgg'' \leftexp{(\bar{\gg})}{\grad}^\nu \leftexp{(\bar{\gg})}{\grad}^\mu \bgg'_{ \mu \nu}.
\end{align}
\end{subequations}
\item Finally, the terms $ - \leftexp{(\bar{\gg})}{\grad}_a \bgg''{\mu \nu} \leftexp{(\bar{\gg})}{\grad}^\a \bgg'^{ \mu \nu}$ and $ \leftexp{(\bar{\gg})}{\grad}^\a \text{tr} \bgg'' \leftexp{(\bar{\gg})}{\grad}_\a \text{tr} \bgg'$ can be transformed as 

\begin{align}
-\leftexp{(\bar{\gg})}{\grad}_\a \bgg''_{\mu \nu} \leftexp{(\bar{\gg})}{\grad}^\a \bgg'^{\mu \nu} =& 
-\leftexp{(\bar{\gg})}{\grad}_a ( \bgg'' \leftexp{(\bar{\gg})}{\grad}^\a \bgg'^{\mu \nu}) + \bgg''_{\mu \nu} \leftexp{(\bar{\gg})}{\grad}_\a \leftexp{(\bar{\gg})}{\grad}^\a \bgg'^{\mu \nu}
\intertext{and}
\leftexp{(\bar{\gg})}{\grad}^\a \text{tr}\bgg'' \leftexp{(\bar{\gg})}{\grad}_\a \text{tr} \bgg' =& \leftexp{(\bar{\gg})}{\grad}^\a (\text{tr}\bgg'' \leftexp{(\bar{\gg})}{\grad}_\a \text{tr}\bgg' ) - 
\text{tr} \bgg'' \leftexp{(\bar{\gg})}{\grad}^\a \leftexp{(\bar{\gg})}{\grad}_\a \text{tr} \bgg' 
\end{align}
respectively. 
\end{itemize}
Assembling the terms from above, we obtain the Euler-Lagrange equations for the variational principle \eqref{LG-Lagrangian} as 

\begin{align}
&\leftexp{(\bar{\gg})}{\grad}^\gm \leftexp{(\bar{\gg})}{\grad}_\mu \bgg'_{\gm \nu} + \leftexp{(\bar{\gg})}{\grad}^\gm \leftexp{(\bar{\gg})}{\grad}_\nu \bgg_{\gm \mu} - \leftexp{(\bar{\gg})}{\grad}_\mu \leftexp{(\bar{\gg})}{\grad}_\nu \text{tr}(\bgg') - \leftexp{(\bar{\gg})}{\grad}^\gm \leftexp{(\bar{\gg})}{\grad}_\gm \bgg'_{ \mu \nu}  \notag\\
&- \bgg_{\mu \nu} 
\leftexp{(\bar{\gg})}{\grad}^\gm \leftexp{(\bar{\gg})}{\grad}^\d \bgg'_{ \gm \d} + \bgg_{\mu \nu} 
\leftexp{(\bar{\gg})}{\grad}^\gm \leftexp{(\bar{\gg})}{\grad}_\gm \text{tr} \bgg' =0, \quad (\bar{M}, \bgg)
\end{align}

which are the field equations for the linearized Einstein equations around the Kerr background spacetime.

%Consider the Einstein's equations in vacuum: 

%\begin{align}
%\olin{R}_{\mu \nu} =0, \quad (\bar{M}, \bar{g})
%\end{align}
%The metric in harmonic coordinates $\bar{\gg}_{\mu \nu}$
%Consider the perturbations $D_{\mbo{\gamma}_s} \cdot \bar{g} \big \vert_{s=0} = \bar{g}'$  

\subsection{Harmonic Coordinates}
Consider a coordinate system $x^\a$ 
\begin{align}
\bar{x}^\a \fdg (\bar{M}, \bar{\gg}) \to (M, \bar{\gg}), \quad \a = 0, 1, 2, 3
\end{align}
It follows that the coordinate functions $x^\a$ satisfy 
\begin{align} \label{coordinate-wm}
\square_{\bar{\gg}} x^\a + \leftexp{(\bar{\gg})}{\Gamma}_{\b \gamma}^\a \bar{\gg}^{\b \gamma} =0, \quad (\bar{M}, \bar{\gg}), \quad  \a, \b, \gamma = 0,1, 2, 3
\end{align}
we point out that this equation \eqref{coordinate-wm} is reminiscent of the wave map equations. With the harmonic gauge condition $\square_\gg x^\a =0$, it follows from  \eqref{coordinate-wm} that $ \leftexp{(\bar{\gg})}{\Gamma}_{\b \gamma}^\a \bar{\gg}^{\b \gamma} =0. $ We define an equivalent gauge condition in the perturbative theory using $D_{\mbo{\gamma}_s} \cdot \leftexp{(\bar{\gg})}{\Gamma}_{\b \gamma}^\a \bar{\gg}^{\b \gamma}, $ i.e., 

\begin{align} \label{harmonic-gauge-condition}
\leftexp{(\bar{\gg})}{\grad}^ \mu \bgg'_{\mu \nu} = \halb \ptl_ \nu \text{tr} \bgg'
\end{align}
and if we define $\tilde{\gg}' \fdg = \bar{\gg}'- \halb \bar{\gg}^{\a \b} \bar{\gg}'_{\a \b}, $  the `trace-reversed' metric perturbation $\bar{\gg}',$ then this condition can be compactly represented as 

\begin{align} \label{harmonic-gauge-condition-trace-reversed}
\leftexp{(\bgg)}{\grad}_\a \tgg^{\a \b} =0.
\end{align}  
 If we consider the variational principle for the linearized gravity in the Einstein equations for general relativity %We can transform the expressions as follows: 
%\begin{align}
%S [\bgg'] =
%\end{align}
and transform using harmonic coordinates. 

\begin{align} \label{HG-LEE}
S_{\text{HG}}[\bar{\gg}', \bar{\gg}] =& \int \halb \Big( \leftexp{(\bar{\gg})}{\grad}_\a  \bar{\gg}'_{\mu \nu} \leftexp{(\bar{\gg})}{\grad}^\mu \gg'^{\a \nu}   + \leftexp{(\bar{\gg})}{\grad}_\a \bar{\gg}'_{\mu \nu} \leftexp{(\bar{\gg})}{\grad}^\nu \bar{\gg}'^{\a \mu} 
  - \leftexp{(\bar{\gg})}{\grad}_\a  \gg'_{\mu \nu} \leftexp{(\bar{\gg})}{\grad}^\a \gg'^{\mu \nu} 
\Big) \bar{\mu}_{\bar{\gg}} 
\end{align}

It may be noted that the structure of the variational functional, is closely related to the deformed wave map action that we considered previously e.g., in \cite{NG_19_2}.  The quantities in the variational principle can be transformed, using the identities 

\begin{align}
\leftexp{(\bar{\gg})}{\grad}_\gamma \leftexp{(\bar{\gg})}{\grad}_\mu \bgg'_{\a \nu} - \leftexp{(\bar{\gg})}{\grad}_\mu \leftexp{(\bar{\gg})}{\grad}_\gm \bgg'_{\a \nu} = -\text{Riem}(\bgg)^\sigma_{\,\,\,\,\a \gm \mu} \bgg'_{\sigma \nu} - \text{Riem}^\sigma_{\,\,\,\, \nu \gm \mu} \bgg'_{\a \sigma} \notag\\
\leftexp{(\bar{\gg})}{\grad}_\gamma \leftexp{(\bar{\gg})}{\grad}_\nu \bgg'_{\a \mu} - \leftexp{(\bar{\gg})}{\grad}_\nu \leftexp{(\bar{\gg})}{\grad}_\gm \bgg'_{\a \mu} = -\text{Riem}(\bgg)^\sigma_{\,\,\,\,\a \gm \nu} \bgg'_{\sigma \mu} - \text{Riem}^\sigma_{\,\,\,\, \mu \gm \nu} \bgg'_{\a \sigma}
\end{align}
%as 
% when the gauge-freedom is defined as 
%\begin{align}
%U_H \fdg M \to M, 
%\end{align}
then the Euler-Lagrange linearized Einstein equations in harmonic coordinates  are
\begin{align} \label{harmonic-linearized-einstein}
\leftexp{(\bgg)}{\grad}^\gm \leftexp{(\bar{\gg})}{\grad}_\gm \bgg'_{\mu \nu} + \text{Riem}(\bgg)^{\sigma \,\,\,\a}_{\,\,\,\mu\,\,\,\nu} \bgg'_{\a \sigma} + \text{Riem}(\bgg)^{\sigma \,\,\,\a}_{\,\,\,\nu\,\,\,\mu} \bgg'_{\a \sigma} =0, \quad (\bar{M}, \bgg)
\end{align}

a similar equation is satisfied by the trace-reversed metric $\tgg$, 

\begin{align} \label{harmonic-linearized-einstein-trace-reversed}
\leftexp{(\bgg)}{\grad}^\gm \leftexp{(\bar{\gg})}{\grad}_\gm \tgg'_{\mu \nu} + \text{Riem}(\bgg)^{\sigma \,\,\,\a}_{\,\,\,\mu\,\,\,\nu} \tgg'_{\a \sigma} + \text{Riem}(\tgg)^{\sigma \,\,\,\a}_{\,\,\,\nu\,\,\,\mu} \tgg'_{\a \sigma} =0, \quad (\bar{M}, \bgg).
\end{align}

%and the stress-energy tensor 
%\begin{align}
%T[\bar{\gg}_{\text{TR}}'] \fdg = 
%\end{align}
%It follows from the gauge-conditions that the stress-energy tensor is divergence-free.

In the initial value framework, we need to solve the field equations \eqref{harmonic-linearized-einstein} or \eqref{harmonic-linearized-einstein-trace-reversed} together with the gauge determining equations \eqref{harmonic-gauge-condition}. 
Now consider the divergence of the linearized Einstein tensor, 

\begin{align}
\halb & \leftexp{(\bgg)}{\grad}^ \mu \big(  \leftexp{(\bgg)}{\grad}^\gm  \leftexp{(\bgg)}{\grad}_\mu \bgg'_{ \gm \nu} +  \leftexp{(\bgg)}{\grad}^\gm  \leftexp{(\bgg)}{\grad}_\nu \bgg'_{\gm \mu} -  \leftexp{(\bgg)}{\grad}_\mu  \leftexp{(\bgg)}{\grad}_\nu \text{tr}\bgg' -  \leftexp{(\bgg)}{\grad}^\gm  \leftexp{(\bgg)}{\grad}_\gm \bgg'_{ \mu \nu}  \notag\\
& - \bgg_{\mu \nu}  \leftexp{(\bgg)}{\grad}^\gm  \leftexp{(\bgg)}{\grad}^\d \bgg'_{ \gm \d}  + \bgg_{\mu \nu} \leftexp{(\bgg)}{\grad}^\gm  \leftexp{(\bgg)}{\grad}_\gm \text{tr} \bgg' \big)
\end{align}

If we define the gauge-fixing quantity, $F_\nu = \leftexp{(\bgg)}{\grad}^\mu \tgg'_{ \mu \nu},$ we can construct a propagation equation for $F_\nu$ as 

\begin{align}
\leftexp{(\bgg)}{\grad}^\gm \leftexp{(\bgg)}{\grad}_\gm F_\nu \equiv &  \leftexp{(\bgg)}{\grad}^ \mu
( \leftexp{(\bgg)}{\grad}^\gm \leftexp{(\bar{\gg})}{\grad}_\gm \bgg'_{\mu \nu} + \text{Riem}(\bgg)^{\sigma \,\,\,\a}_{\,\,\,\mu\,\,\,\nu} \bgg'_{\a \sigma} + \text{Riem}(\bgg)^{\sigma \,\,\,\a}_{\,\,\,\nu\,\,\,\mu} \bgg'_{\a \sigma} ) \notag\\
=& 0, \quad (\bar{M}, \bgg). 
\end{align}
%using the transformation, 
%\begin{align}
 %\leftexp{(\bgg)}{\grad}
%\end{align}
If we consider the initial value problem 

\begin{subequations}
\begin{align}
\leftexp{(\bgg)}{\grad}^\gm \leftexp{(\bgg)}{\grad}_\gm F_\nu =& 0, \quad &( \bar{M}, \bgg) \\
F_\nu \vert_{\olin{\Sigma}_0} =0, \quad &\ptl_{\vec{t}} F_\nu \vert_{\olin{\Sigma}_0}  =0, \quad & (\olin{\Sigma}_0, \bar{q}_0)
\end{align}
\end{subequations}

It is straightforward to show that $F_\nu$ and $\ptl_{\vec{t}} F_\nu \equiv 0$ for all times in the domain of outer communications of the Kerr black hole spacetime. In other words, the gauge condition \eqref{harmonic-gauge-condition} is propagated for all times if it holds on the initial data and if the linearized Einstein equations in harmonic gauge hold. Analogously, consider the propagation of the constraint equations, defined as 

\begin{align}
  H' \fdg =& G' (n, n) \notag\\
  =& \, n^\mu n^\nu \Big( \leftexp{(\bgg)}{\grad}^\gm  \leftexp{(\bgg)}{\grad}_\mu \bgg'_{ \gm \nu} +  \leftexp{(\bgg)}{\grad}^\gm  \leftexp{(\bgg)}{\grad}_\nu \bgg'_{\gm \mu} -  \leftexp{(\bgg)}{\grad}_\mu  \leftexp{(\bgg)}{\grad}_\nu \text{tr}\bgg' -  \leftexp{(\bgg)}{\grad}^\gm  \leftexp{(\bgg)}{\grad}_\gm \bgg'_{ \mu \nu}   \notag\\
  & \quad - \bgg_{\mu \nu}  \leftexp{(\bgg)}{\grad}^\gm  \leftexp{(\bgg)}{\grad}^\d \bgg'_{ \gm \d}  + \bgg_{\mu \nu} \leftexp{(\bgg)}{\grad}^\gm  \leftexp{(\bgg)}{\grad}_\gm \text{tr} \bgg' \Big)
  \intertext{the expression for the propagation of the Hamiltonian constraint $H'$ follows from the transformations analogous to the above}
  =& n^\mu n^\nu \Big ( \leftexp{(\bgg)}{\grad}_\mu \Big(\leftexp{(\bgg)}{\grad}^\gm \bgg_{\gm \nu} - \halb \leftexp{(\bgg)}{\grad}_\nu \text{tr}\bgg' \Big) +  \leftexp{(\bgg)}{\grad}_\nu \Big(\leftexp{(\bgg)}{\grad}^\gm \bgg_{\gm \mu} - \halb \leftexp{(\bgg)}{\grad}_\mu \text{tr}\bgg' \Big) \Big) \notag\\
  &-  n^\mu n^\nu \Big(   \leftexp{(\bgg)}{\grad}^\gm \leftexp{(\bar{\gg})}{\grad}_\gm \bgg'_{\mu \nu} + \text{Riem}(\bgg)^{\sigma \,\,\,\a}_{\,\,\,\mu\,\,\,\nu} \bgg'_{\a \sigma} + \text{Riem}(\bgg)^{\sigma \,\,\,\a}_{\,\,\,\nu\,\,\,\mu} \bgg'_{\a \sigma} \Big)  \notag\\
  &+ n^ \nu n_\nu \leftexp{(\bgg)}{\grad}^\gm ( \leftexp{(\bgg)}{\grad}^\gm \leftexp{(\bgg)}{\grad}^\d - \leftexp{(\bgg)}{\grad}^\gm \leftexp{(\bgg)}{\grad}_\gm \text{tr}\bgg'  )  
  \intertext{after imposing the linearized Einstien field equations in the harmonic gauge }
   =& n^ \mu n^\nu \Big( \leftexp{(\bgg)}{\grad}_\mu F_\nu + \leftexp{(\bgg)}{\grad}_\nu F_\mu  \Big) + n^\nu n_\nu \leftexp{(\bgg)}{\grad}^\gm F_\gm =0 \\
   \intertext{for all times, due to the Harmonic gauge propagation. Likewise, for the momentum constraint, after plugging in the relevant formulas}
  H'_i \fdg =& G' (n, X), \quad X^i \in T(\olin{\Sigma}), i= 1, 2, 3. \notag\\
  =& n^\mu X^j \Big( \leftexp{(\bgg)}{\grad}^\gm  \leftexp{(\bgg)}{\grad}_\mu \bgg'_{ \gm j} +  \leftexp{(\bgg)}{\grad}^\gm  \leftexp{(\bgg)}{\grad}_j \bgg'_{\gm \mu} -  \leftexp{(\bgg)}{\grad}_\mu  \leftexp{(\bgg)}{\grad}_j \text{tr}\bgg' -  \leftexp{(\bgg)}{\grad}^\gm  \leftexp{(\bgg)}{\grad}_\gm \bgg'_{ \mu j}   \notag\\
  & \quad - \bgg_{\mu j}  \leftexp{(\bgg)}{\grad}^\gm  \leftexp{(\bgg)}{\grad}^\d \bgg'_{ \gm \d}  + \bgg_{\mu j} \leftexp{(\bgg)}{\grad}^\gm  \leftexp{(\bgg)}{\grad}_\gm \text{tr} \bgg' \Big) \notag\\
  =& n^\mu X^j \Big ( \leftexp{(\bgg)}{\grad}_\mu \Big(\leftexp{(\bgg)}{\grad}^\gm \bgg_{\gm j} - \halb \leftexp{(\bgg)}{\grad}_\nu \text{tr}\bgg' \Big) +  \leftexp{(\bgg)}{\grad}_j \Big(\leftexp{(\bgg)}{\grad}^\gm \bgg_{\gm \mu} - \halb \leftexp{(\bgg)}{\grad}_\mu \text{tr}\bgg' \Big) \Big) \notag\\
  &-  n^\mu X^j \Big(   \leftexp{(\bgg)}{\grad}^\gm \leftexp{(\bar{\gg})}{\grad}_\gm \bgg'_{\mu j} + \text{Riem}(\bgg)^{\sigma \,\,\,\a}_{\,\,\,\mu\,\,\,j} \bgg'_{\a \sigma} + \text{Riem}(\bgg)^{\sigma \,\,\,\a}_{\,\,\,j\,\,\,\mu} \bgg'_{\a \sigma} \Big)  \notag\\
  &+ n^ \mu X^j \bgg_{ \mu j} \leftexp{(\bgg)}{\grad}^\gm ( \leftexp{(\bgg)}{\grad}^\gm \leftexp{(\bgg)}{\grad}^\d - \leftexp{(\bgg)}{\grad}^\gm \leftexp{(\bgg)}{\grad}_\gm \text{tr}\bgg'  )
  \notag\\
  =&    n^ \mu X^j \Big( \leftexp{(\bgg)}{\grad}_\mu F_j + \leftexp{(\bgg)}{\grad}_j F_\mu  \Big) + n^\nu X^j \bgg_{ \nu j} \leftexp{(\bgg)}{\grad}^\gm F_\gm =0.
\end{align}

It follows that in harmonic gauge, the constraints are automatically propagated as long as they are satisfied on the initial data, analogous to the propagation of the harmonic gauge condition. This is an analogous statement for the nonlinear theory developed in the classic work of Choquet-Bruhat. 

Let us now discuss the degrees of freedom of the linearized gravity. The number of independent 
degrees of freedom, modulo the gauge degrees of freedom, is 6. We have shown that the constraint propagation in time follows from the harmonic gauge condition. As a consequence, it follows that these remaining degrees of freedom are all independent and unconstrained. 

It may be noted that the system of equations for the linearized Einstein equations is purely a hyperbolic partial differential equation system, from which it follows that the evolution of the data is causal.

\begin{proposition}
Suppose $(\olin{\Sigma}', \bar{\mbo{q}'} )$ is the linearized initial data for the linearized Einstein equations, satisfying the constraint equations, then
\begin{enumerate} 
\item the linearized Einstein equations in harmonic gauge is purely a hyperbolic differential equation system, with the independent degrees of freedom being dyanamically unconstrained 
\item The future (and past) development of the linearized initial data  in harmonic gauge is globally regular and globally hyperbolic
\end{enumerate}
\end{proposition}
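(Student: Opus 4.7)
The plan is to argue that the reduced equation \eqref{harmonic-linearized-einstein} is, at the principal-symbol level, a covariant tensor wave equation of Leray hyperbolic type on the fixed Kerr background, and then to deduce the two claims of the proposition from standard linear hyperbolic theory together with the harmonic-gauge and constraint propagation identities already established in the text. In particular, I would first inspect the principal part of \eqref{harmonic-linearized-einstein}: it is $\bgg^{\gm\d}\leftexp{(\bgg)}{\grad}_\gm\leftexp{(\bgg)}{\grad}_\d \bgg'_{\mu\nu}$ acting diagonally on each component of the symmetric $2$-tensor $\bgg'$, with lower-order terms given by contractions of the Riemann tensor of $\bgg$ with $\bgg'$. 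The characteristic cones of this operator coincide with the null cones of $\bgg$, so on the globally hyperbolic domain of outer communication of the Kerr exterior the system \eqref{harmonic-linearized-einstein} is a linear symmetric-hyperbolic (Leray) system with smooth coefficients.

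Next, I would appeal to the standard existence and uniqueness theorem for linear hyperbolic systems on globally hyperbolic Lorentzian backgrounds (Leray, Choquet-Bruhat, Dionne) applied to the reduced system \eqref{harmonic-linearized-einstein-trace-reversed}: given Cauchy data $\bar{\mbo{q}}'$ on $\olin{\Sigma}_0$ of the appropriate regularity, there exists a unique global smooth solution $\tgg'$ defined on the entire domain of outer communications, with finite propagation speed and continuous dependence on initial data. This yields global regularity and global hyperbolicity of the development.

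It then remains to show that this reduced solution is actually a solution of the linearized Einstein equations and that the independent degrees of freedom are unconstrained. For the former, I would combine the two propagation identities already derived in the text: the harmonic-gauge quantity $F_\nu = \leftexp{(\bgg)}{\grad}^\mu \tgg'_{\mu\nu}$ satisfies the homogeneous wave equation $\leftexp{(\bgg)}{\grad}^\gm\leftexp{(\bgg)}{\grad}_\gm F_\nu = 0$ as a consequence of \eqref{harmonic-linearized-einstein-trace-reversed}, and the linearized constraints $H'$ and $H'_i$ can be algebraically expressed in terms of $F_\nu$ and $\leftexp{(\bgg)}{\grad} F_\nu$. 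If the initial data is chosen to satisfy both the harmonic-gauge condition \eqref{harmonic-gauge-condition-trace-reversed} and the linearized constraints at $t=0$, then $F_\nu$ and $\partial_{\vec{t}}F_\nu$ vanish on $\olin{\Sigma}_0$, and by uniqueness for the wave equation on $F_\nu$, we get $F_\nu \equiv 0$ everywhere; consequently both the gauge condition and the constraints hold for all times, so the solution of the reduced system is in fact a solution of the full linearized Einstein equations.

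Finally, for the degree-of-freedom count, I would simply observe that among the ten components of the symmetric $2$-tensor $\bgg'$, the four harmonic-gauge conditions \eqref{harmonic-gauge-condition} together with the automatic propagation of the four constraints $(H', H'_i)$ remove $4+4=8$ components, leaving the familiar $10-4-4=2$ physical polarizations; but when counting \emph{unconstrained} components of the Cauchy problem before gauge fixing one finds $6$ independent pieces of initial data evolving freely under \eqref{harmonic-linearized-einstein}, consistent with the standard Choquet-Bruhat analysis in the nonlinear theory. The main obstacle I expect is not the hyperbolic theory itself, which is classical, but rather verifying that the Kerr exterior together with the harmonic-coordinate chart really furnishes a globally hyperbolic background on which the Leray theorem applies all the way to the boundaries $\Gamma$, $\iota^0$ and $\H^+$; this is precisely the point that motivates the boundary analysis carried out in the rest of the paper.
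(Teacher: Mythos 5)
Your proposal is correct and follows essentially the same route as the paper: the paper's argument for this proposition is precisely the preceding derivation of the harmonic-gauge wave equation \eqref{harmonic-linearized-einstein}, the homogeneous wave equation for $F_\nu$ implying propagation of the gauge condition and of the constraints $H'$, $H'_i$, the count of $6$ unconstrained degrees of freedom, and the observation that the resulting system is purely hyperbolic so that regularity and causality propagate by standard linear hyperbolic theory. Your added remarks on Leray's theorem and on the need to verify the behaviour near $\Gamma$, $\iota^0$ and $\H^+$ merely make explicit what the paper invokes implicitly and defers to the later boundary analysis.
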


Let us make a couple of comments about the aforementioned global existence theorem. As we already discussed, the Weyl-Papapetrou gauge offers significant benefits in terms of geometry and topology, that allows us to construct a positive-definite energy in the first place, but presents (gauge-related) causality and regularity issues at the boundaries. 

On the other hand, as we already discussed above, the global development of the linearized Einstein equations in the harmonic gauge is untroubled by the (gauge-related) causality and regularity issues at the axes $(\Gamma)$, infinity $(\bar{\iota}^0)$ and at the corners $\Gamma \cap \mathcal{H}^+$. In other words, the global development of the linearized Einstein equations from regular initial data is regular $(C^\infty)$ for all times, including at the pathologies like axes, infinity and at the corner of axes and horizon. For example, we have

\begin{align}
\ptl_{\vec{n}} A =0,  
\intertext{for a scalar, and }
A^{\perp} =0, \quad  \ptl_n A^{\parallel} =0
\end{align}
for a vector $A$, near the axes, globally in time. In our work, our approach is to combine these two gauges and take advantage of benefits offered in each. 

Recall that the metric $q$ in the orbit space can be expressed in harmonic gauge as $\mbo{q} = e^{2 \mbo{\nu}} \mbo{q}_0, $ where $q_0$ is the flat metric. 

The preservation of the flatness condition is

\begin{align} \label{flatness-condition}
D \cdot R'_{\mbo{q}_0} = \bar{\mu}_{\mbo{q}_0} ( \leftexp{(\mbo{q}_0)}{\grad}^a \leftexp{(\mbo{q}_0)}{\grad}^b (\mbo{q}'_0)_{ab} - \leftexp{(\mbo{q}_0)}{\grad}^a \leftexp{(\mbo{q}_0)}{\grad}_a \mbo{q}^{cd}_0 (\mbo{q}'_0)_{cd}  ) =0, \text{on each $\Sigma$}
\end{align}
The tensor $\mbo{q}'_0$ can be decomposed as $(\mbo{q}'_0)_{ab} = (\mbo{q}'^{\text{TT}}_0)_{ab} + 
\leftexp{(q_0)}{\grad}_a Y_b + \leftexp{(q_0)}{\grad}_a Y_a + \halb (\mbo{q}_0) \text{tr} \, \mbo{q}'_0.$ In particular, it may be noted that the pure gauge perturbations $(\mbo{q}'_0)_{ab} = \leftexp{(q_0)}{\grad}_a Y_b + \leftexp{(q_0)}{\grad}_b Y_a$ satisfy the condition \eqref{flatness-condition}. 
 The perturbed metric $(\mbo{q}'_0)$ has the following regularity conditions on the axes $\Gamma,$ expressed in $(R, \theta)$ coordinates \cite{O_Rinne_J_Stewart_2005}

\begin{align}
\ptl_\theta \mbo{q}'_0 (\ptl_R, \ptl_R) =0, \quad \ptl_\theta \mbo{q}'_0( \ptl_\theta, \ptl_\theta)  =0, \quad \mbo{q}'_0 (\ptl_\theta, \ptl_R) =0, \text{at the axes $\Gamma$}.
\end{align}

Our construction can also be used to study stability problem of Kerr black black hole spacetimes in harmonic gauge. However, in principle, we can use our energy to study the stability problem in any preferred gauge.  If we are given a solution $(M', \bgg')$ in any gauge, we can transform the solution to a harmonic gauge by solving the linearized wave map equations $x^\a \fdg \bar{M} \to \bar{M},$ for which it can be proven that they admit smooth solutions for all times. Subsequently, we can transform our the solution to the Weyl-Papapetrou gauge using a $(C^\infty-)$diffeomorphism, by making use of the abelian nature of gauge-transformations. 

\iffalse
Finally, we have few comments on the nonlinear problem. In the current work on the linear stability problem, we establish that the regularity is propagated from the initial data and that a positive-definite energy is conserved. In the nonlinear stability problem, establishing the propagation of regularity from the initial data (i.e., the existence problem) is highly non-trivial. The propagation of regularity of the harmonic gauge is itself quite delicate because $3+1$ wave maps are prone to develop singularities, especially for large data. In a separate work, we aim to pursue the existence problem of U(1) Einstein equations using curvature propagation estimates for curvature, using quasi-local approximate Killing fields. There is a general approach suggested by Moncrief to understand the curvature propagation.   
\fi 
\subsection{Remarks on the nonlinear problem} The ultimate purpose of studying the linear stability problem is that it paves the way for the resolution of  \emph{nonlinear} black hole stability. Therefore, it is pertinent to present a few remarks on the \emph{nonlinear} stability problem and discuss whether the dimensional reduction ($2+1$ Einstein-wave map) framework is useful in this regard. 

In the current work on the linear stability problem, we establish that the regularity is propagated from the initial data and that a positive-definite energy is conserved.  In the nonlinear black hole stability problem, establishing the propagation of regularity from the initial data is an important obstacle.

In previous works \cite{diss_13, AGS_15}, the special structure provided by `energy critical' $2+1$ wave maps was used to study the global behavior for large data of the Einstein equations with a translational symmetry. Even though the local dimensional reduction process to the $2+1$ Einstein-wave map system, for the $3+1$ Einstein equations with translational symmetry is locally similar to that of the axisymmetric case, our axisymmetric wave map problem has a few fundamental differences with the translational symmetric problem. In particular, $3+1$ axisymmetric spacetimes and the resulting $2+1$ Einstein-wave map system are mass-energy \emph{super-critical} in nature. %\footnote{even though, upon the dimensional reduction procedure, we get the same $2+1$ Einstein-wave map system locally as in the translational symmetry}.  
%The Brill mass-energy (equivalent to the ADM mass), which is the Hamiltonian for axisymmetric spacetimes,  is mass-energy super-critical\footnote{even though, upon the dimensional reduction we get the same $2+1$ Einstein-wave map system locally}.
%\footnote{even though we rely on the dimensional reduction for axisymmetric spacetimes to $2+1$ Einstein wave map system, on multiple instances we demonstrate that the behaviour of fields is actually $3+1$ dimensional in nature}, relative to the scaling of wave maps.  
Therefore, the program that is being pursued in those works for the Einstein equations with a translational symmetry differs from the current axisymmetric problem at hand. %This important distinction is explained clearly in \cite{NG17}.  

In the following, we shall  present approach that could be used to study the existence problem of axisymmetric nonlinear perturbations of Kerr black hole spacetimes (see Appendix L in \cite{GM17_gentitle} for some preliminary work).   
This approach is based on the \emph{`light cone'} estimates that were particularly powerful in  the proof of global existence of Yang-Mills fields \cite{Eardley_Moncrief_I, Eardley_Moncrief_II}. It is expected these methods are also useful for the Einstein equations. 

In view of the similarity of the Einstein equations and the Yang-Mills equations, %\cite{Moncrief_2005, KL_RO_SZ_2012}
\`a la Cartan formalism, we present an approach that is based on these methods. 
An essential ingredient in these methods is an integral formula for the propagation equation of linear and nonlinear equations. In the case of the linear wave equation this integral formula 
is a true representation formula for the unknown, but in the nonlinear case the integral formula, of course, involves the unknown itself. For the Yang-Mills fields \cite{Eardley_Moncrief_I, Eardley_Moncrief_II}, using Bianchi identities, one constructs a wave propagation equation for the curvature tensor  and uses the integral formula to represent the curvature tensor in the past light cone. 

It may be noted that the  nonlinear propagation equation contain commutator terms that are quadratic in curvature. The integral equation involves the terms in the past light cone and as well as the initial data. These integral terms are then estimated in terms of the initial data, in particular the energy, which is conserved in time, and the $\Vert \cdot \Vert_{L^\infty}$ norm of the Yang-Mills curvature itself. Consequently, this estimate provides an integral estimate for the 
$\Vert \cdot \Vert_{L^\infty}$ norm of the curvature, involving the initial data. 

In performing these estimates, the special structure offered by the Cronstr\"om gauge is particularly useful. It was shown separately that the Yang-Mills fields in any gauge can be transformed into the Cronstr\"om gauge. Subsequently,  using energy norms equivalent to  $(H^2 \times H^1),$ it is shown that they cannot blow up in finite time. In this step, the aforementioned $\Vert \cdot \Vert_{L^\infty}$ estimate of the curvature plays an important role, in controlling the energy norms. 

In the gravitational problem, the Einstein equations have a structure analogous  to the Yang-Mills equations, when expressed in the Cartan formalism. In Cartan orthonormal frame formalism, the spacetime curvature is expressed in terms of the Lorentz connection. 

In particular,  it can be shown that the spacetime curvature satisfies a non-linear wave equation that is analogous in structure to the Yang-Mills curvature equation. Even for the curved space wave operator, one can construct an integral formula for the curvature in terms of the source terms (which inevitably involve the curvature itself, thus not implying a true `representation formula' ), following Friedlander-Hadamard \cite{Friedlander_1975}. However, the hope is that one can derive estimates for these source terms, analogous to the Yang-Mills problem. 

In the case of the wave operator in a curved space, the integral formula, following the Friedlander-Hadamard analysis, involves integrals over the mantel of the cone, bulk interior of the light cone and the intersection of the initial data with the past light cone. With the aim of overcoming this issue, it was shown in  \cite{Moncrief_2005} that one can transform this integral formula in such a way that it only involves  boundary integrals over the mantel of the cone and the initial data. 

It may be noted that, in contrast with the Yang-Mills problem, the propagation equation for the curvature in the gravitational problem now involves the quadratic spacetime curvature terms, as well as gradient terms of frame fields and curvature. In our approach, the aim  is then to control the curvature in terms of the initial data. 

Quite analogous to the Yang-Mills problem, the terms quadratic in spacetime curvature can be controlled in terms of the Bel-Robinson fluxes, which in turn can be controlled in terms of the Bel-Robinson energy. In order to control the terms involving the gradient of curvature, we need a higher order energy for the curvature. 

However, it may be noted that, in the nonlinear gravitational problem, energies that are conserved or bounded in time are scarce. At a first glance, in order to achieve this control, we need Killing vector fields, conformal Killing vector fields or the like, for the spacetime. However, this is requirement is too strong and will almost certainly exclude interesting cases. 

In the case of translational and equivariant $U(1)$ problem, it was shown that there exists a conserved energy that effectively corresponds to a time-like vector field, which is not a Killing vector field \cite{diss_13, AGS_15}. This fortuitous behaviour was used in a fundamental manner in the aforementioned works. However, this outcome can be attributed to the special structure provided by the equivariance assumption and it is unlikely that this result holds without this assumption. In an attempt to overcome this issue, the  general notion of quasi-local approximate Killing vector fields  was developed in \cite{Moncrief_2005, Moncrief_2014}. The framefields in the Cartan formalism, when subjected to parallel propagation, can be shown to satisfy the Killing equation approximately, with an error term that is explicitly expressible in terms of the curvature and goes to zero as we approach the tip of the cone. We hope that this concept will be helpful in our approach. 

In the linear stability problem considered in the current work, we took advantage of the propagation of regularity in harmonic gauge and performed a gauge transform into the Weyl-Papapetrou gauge. In the nonlinear theory, especially for large data, the harmonic gauge is typically unfavorable for global regularity results. 

In the nonlinear gravitational problem, we need to understand the dynamical behavior of the spacetime curvature as well as the orthonormal framefields and connections. In order to benefit from the Einstein-wave map formalism of our axisymmetric problem, we need to represent the curvature propagation equation and the associated geometric structures in terms of the Weyl-Papapetrou gauge and  dimensionally reduced forms. In this context, the method of Hadamard descent, which allows us to find an integral formula for the wave operator in the dimensionally reduced setting, plays an important role.

In our axisymmetric problem, in consistency with the Weyl-Papapetrou gauge, we can choose a coordinate system such that one of the spatial coordinates coincides with the axial Killing vector field. In such a coordinate system the spacetime 4-metric, connections and the spacetime curvature are all independent of the axial coordinate. Furthermore, in order to be able to fully exploit the Cartan formalism, we can choose the orthonormal frame fields also to be independent of the axial coordinate and that they are invariant along the orbits of the axial symmetry group. 

As we already alluded to, it is at our discretion to choose a gauge condition on the spacetime metric. In particular, for our evolution problem, we have the possibility to also choose the time function. A suitable choice is a time function that results in a foliation of constant mean curvature level surfaces. This gauge condition, together with spatial harmonic (SH) gauge condition on $\Sigma$, is consistent with the gauge chosen for the current work. 

In our framework, based on the Cartan formalism of the Einstein equations and normal frame fields, an important role is played by the injectivity radius in the context of the null geometry of Lorentzian manifolds. In other words, in our approach, it is important to understand the injectivity radius of null cones in conjuction with the spacetime curvature. 

In \cite{LeFloch_08} (see also the references therein), a lower bound on the injectivity radius is obtained, based on conditions on the spacetime curvature ($\Vert\cdot\Vert_{L^\infty}$bound). This is achieved using a `reference' Riemannian metric. An especially relevant aspect of this work for us is that the metric with optimal regularity is constructed in the constant mean curvature spatial harmonic gauge (CMC-SH). We would like to point out that the Weyl-Papapetrou gauge (and in general the dimensional reduction framework) is naturally consistent with the CMC-SH gauge. 

An important feature in the axisymmetric problem is the behavior of fields at the axes and infinity. %We need to pay special attention to the behavior of the relevant fields at these boundaries.
We expect that some of the analysis carried out in the current work would be useful in this regard. 

The CMCSH gauge condition provides elliptic equations for the lapse and the shift. We would like to point out that the elliptic operators in the problem are simplified using the fact that the transverse-traceless tensors vanish on $\Sigma.$ Likewise, the dynamics of the two manifold  is captured by the conformal factor $e^{\mbo{2 \nu}}$ and the quantity  $\mbo{\nu}$ satisfies the Lichnerowicz equation. The elliptic analysis for this equation can be extended from the analysis for the linear theory (for $\mbo{\nu}'$), as discussed in the next section.

\section{Canonical phase space  variables and  Lagrange multipliers in the Weyl-Papapetrou Gauge}

Suppose, the group $SO(2)$ acts on the $3+1$ Lorentzian spacetime $(\bar{M}, \bar{g})$ such that the orbits of the group $SO(2)$ are closed and the group action has a nonempty fixed point set, denoted by $\Gamma.$ These conditions are satisfied by the Kerr metric $(\bar{M}, \bar{g})$

\begin{align}
\bar{g} \fdg =& - \Sigma^{-1} (\Delta -a^2 \sin^2 \theta) dt^2 - 4a\Sigma^{-1} \sin^{2} \theta mr dt d\phi \notag\\
&+ \Sigma^{-1} ((r^2 +a^2)^2-\Delta a^2 \sin^2 \theta) \sin^2 \theta d\phi^2 + \Delta^{-1} \Sigma dr^2 + \Sigma d\theta^2
\intertext{where}
\Sigma \fdg =& r^2 + a^2 \cos^2 \theta \quad \text{and} \quad  \Delta = r^2-2Mr + a^2
\end{align}
which can be represented as: 

\begin{align}
\bar{g} =& \vert \Phi \vert ( \Delta \sin^{-2} \theta dt^2 + R^{-2} \sin^2 \theta ((r^2 + a^2)^2- a^2 \Delta \sin^2 \theta) (d \rho^2 + dz^2) )  \\
&+ \vert \Phi \vert (d \phi^2 - 2 mar ((r^2+a^2)^2 - a^2 \Delta \sin^2 \theta)^{-1})^2
\intertext{where}
\rho =& R \sin \theta \quad z = R \cos \theta, \quad R \fdg = 2 (r-m + \sqrt{\Delta}), \quad \theta \in [0, \pi] 
\end{align}
 
\begin{subequations}
	\begin{align}
	\vert \Phi \vert =& \frac{\sin^2 \theta }{r^2 +a^2 \cos^2 \theta} \left(  (r^2+a^2)^2 -a^2 \Delta \sin^2 \theta \right) \\
	q_{ab} =& \sin^2 \theta  \frac{ (r^2+a^2)^2 -a^2 \Delta \sin^2 \theta }{R^2}
	\end{align}
	\end{subequations}

Now consider the conjugate harmonic functions $(\bar{\rho}, \bar{z})$ such that %$q= e^{2\Omega} (d\bar{\rho}^2 + d\bar{z}^2)$ and 

\begin{align}
\bar{\rho} \fdg = \rho (1- \frac{(m^2-a^2)}{4 (\rho^2 + z^2)} ) \quad \bar{z} \fdg = z (1+ \frac{m^2-a^2}{4 (\rho^2 + z^2)})
\end{align}
so that the Jacobian 

\begin{align}
J \fdg = \begin{pmatrix} 1 + \frac{m^2-a^2}{4 ( \r^2 + z^2)} \left(  \frac{2 \r^2}{ \r^2 +z^2} -1\right) & \r ( 1 + \frac{(m^2-a^2)z}{2 (\r^2 +z^2)^2} \\
z \left( 1- \frac{ ( m^2-a^2) \r}{ 2 (\r^2 + z^2)^2}\right) &  1+ \frac{m^2-a^2}{4 (\r^2 +z^2)} \left( 1- \frac{2 z^2}{\r^2 +z^2} \right) 
\end{pmatrix}
\end{align}
In these coordinates, the Kerr black hole horizon $\mathcal{H}^+$ corresponds to a `cut' on the $\{ \bar{\rho} =0\}$ curve and its complement on the $\{\bar{\rho} =0\}$ curve corresponds to the union of two axes, $\Gamma.$ This coordinate system and the $(\rho, z)$ coordinate system in the extremal case are the ones originally used by Carter \cite{Car_71}. 

In general, in the Weyl-Papapetrou gauge for the Einstein equations, we can reduce the Einstein-Hilbert action
into the reduced Einstein-wave map system: 

\begin{align}
\int (R_g - h_{AB}(U) g^{\a \b} \ptl_\a U^A \ptl_\b U^B) \bar{\mu}_g 
\end{align}

It is straightforward to verify that the Kerr metric is a critical point of the variational functional. In the Hamiltonian version of the dimensional reduction, we also encounter the intermediate phase space $X^{\text{Max}}:$
	
	\begin{align}
	X^{\text{Max}} \fdg = \{\mathcal{A}_i, \mathcal{E}^i \}
	\end{align}  

	so that the Hamiltonian and momentum constraints for the combined phase space $ \{ (q, \mbo{\pi}), (\mathcal{A}, \mathcal{E}),  (\vert \Phi \vert^{1/2}, p) \}$
	
	\begin{align}
	H  \fdg =& \bar{\mu}^{-1}_q ( \Vert \mbo{\pi} \Vert^2_q - \textnormal{tr}(\mbo{\pi})^2) + \frac{1}{8} p^2 + \frac{1}{2} \vert \Phi \vert^{-2} \mathcal{E}^a \mathcal{E}_a + \bar{\mu}_q (-R_q + 2 q^{ab} \ptl_a \log \vert \Phi \vert \ptl_b \log \vert \Phi \vert )
	\notag\\
	&+ \frac{1}{4} q^{ab} q^{bd} \ptl_{[b} \mathcal{A}_{a]} \ptl_{[d} \mathcal{A}_{c]} \\
	H_a=& -2 \leftexp{(q)}{\grad}_b \mbo{\pi}^b_a + p \ptl_a \log \vert  \Phi \vert + \mathcal{E}^b ( \ptl_{[a]} \mathcal{A}_{b]})
	\end{align}
	with the Lagrange multipliers 
	\begin{align}
	\{ N, N^a, A_0\}
	\end{align}
	Subsequently, the phase space $X$ was introduced 
	\begin{align}
	X \fdg = \{ (q, \mbo{\pi}), (\vert \Phi \vert, p), (\omega, \mbo{r}) \}
	\end{align}
	which resulted in the Hamiltonian and momentum constraint equations: 
	\begin{align}
	H=& \bar{\mu}^{-1}_q ( \vert \mbo{\pi} \vert^2_q - \textnormal{tr} (\mbo{\pi})^2 + \halb p_A p^A ) + \bar{\mu}_q (- R_q + \halb h_{AB} q^{ab} \ptl_a U^A \ptl_b U^B) \\
	H_a=& -2 \leftexp{(q)}{\grad}_b \mbo{\pi}^b_a + p_A \ptl_a U^A
	\end{align}
	where the Lagrange multipliers are now	
	\begin{align} \label{lapse-shift-3D}
	\{ N, N^a \}
	\end{align}
	The fact that the Lagrange multiplier set \eqref{lapse-shift-3D} is now simplified is due to the special topological 
	structure of the orbit space $M$ of the Kerr metric. This plays a convenient role in the properties of the adjoint of the dimensionally reduced constraint map.
	The main result of previous work was to obtain a positive-definite energy functional for the linear perturbative theory of Kerr black hole spacetimes within the assumption of axial symmetry, which allows the aforementioned dimensional reduction. The regularized Hamiltonian energy functional was 
	
	\begin{align}
	H^{\text{Reg}} \fdg = \int_{\Sigma} \mathbf{e}^{\text{Reg}} d^2x
	\end{align}
	
	where 
	\begin{align}
	\mathbf{e}^{\text{Reg}} \fdg = & N \bar{\mu}^{-1}_{q_0} e^{-2 \nu} ( \Vert \varrho' \Vert_{q_0}^2 + \halb p'_A p'^A ) - \halb N e^{2 \nu} \bar{\mu}_{q_0} \mbo{\tau}'^2 \notag\\
	&+ \halb N \bar{\mu}_{q_0} q^{ab}_0 h_{AB}(U) \leftexp{(h)}{\grad}_a U'^A \leftexp{(h)}{\grad}_b U'^B \notag\\
	&- \halb N \bar{\mu}_{q_0} q_0^{ab} h_{AE} U'^A \leftexp{(h)}{R}^E_{\,\,\,\, BCD} \ptl_a U^B \ptl_b U^C U'^D.
	\end{align}
	Let us now formally define the Weyl-Papapatrou gauge. 
\begin{definition}
Suppose $(\bar{M}, \bar{g})$ is a Lorentzian spacetime such that $\bar{M}$ admits the ADM decomposition $\bar{M} = \olin{\Sigma} \times \mathbb{R}$ and the group $SO(2)$ acts on $\bar{M}$ through isometries such that the fixed point set in nonempty and the orbits of its action on $\bar{M}$ are closed; suppose  $p$ belongs to the orbit of $SO(2)$ action on $\bar{M}$ then $p \in \olin{\Sigma}$ (i.e., $\text{Orb} (x) \in \olin{\Sigma}, x \in \bar{M}$) and $p \times \mathbb{R}$ is a timelike curve. Then we define $\bar{g}$ to be in Weyl-Papapetrou form if 
\begin{enumerate}
\item $\bar{g}$ admits the decomposition
\begin{align} \label{WP-def}
\bar{g} = \vert \Phi \vert^{-1} g + \vert \Phi \vert \mathcal{A}^2
\intertext{where the 1-form $\mathcal{A}$ in $\bar{M}$ is defined as} 
\mathcal{A} = d \phi + A_\nu dx^\nu, \quad \text{and}
\end{align}
$\ptl_\phi$ is the (Killing) vector field corresponding to the $SO(2)$ symmetry of $\bar{M}$ the scalar $\vert \Phi \vert = \bar{g}_{\a \b} (\ptl_\phi)^\a (\ptl_\phi)^\b$ is the spacetime norm of the Killing vector $\ptl_\phi.$  
$g, A, \vert \Phi \vert$ are independent of $\phi,$ i.e., $\mathcal{L}_{\ptl_\phi} g = \mathcal{\ptl_\phi} A =0.$ It may be noted that $g$ is a metric of Lorentzian signature in the orbit space $M \fdg = \bar{M}/SO(2)$ such that it further admits the ADM decomposition $M = \Sigma \times \mathbb{R},$ where now $\Sigma = \olin{\Sigma}/SO(2):$
\begin{align}
g = -N^2 dt^2 + q_{ab} (dx^a + N^a dt) \otimes (dx^b + N^b dt)
\end{align}
$q$ is the metric of $\Sigma.$
\item There exists a coordinate basis $e^1$ and $e^2$ in $(\Sigma, q)$ such that 
\begin{align} \label{conf-flat-nonlinear}
q(e^1, e^1) - q(e^2, e^2) =0,\quad \text{and} \quad q(e^1, e^2) =0, \quad \text{on each} \quad \Sigma
\intertext{and}
q = e^{2 \mbo{\nu}} q_0, \quad \text{where $q_0$ is the flat 2-metric}.
\end{align} 

\end{enumerate}
\end{definition}
The metric $\bar{g}$ represented in the above coordinate conditions is referred to as in `Weyl-Papapetrou' form. 
We would like to remark that the representation of the metric $\bar{g}$ in terms of a Weyl-Papapetrou form  is \emph{not unique}. We would also like to emphasize that on the fixed point set of the $SO(2)$ action on $\bar{M}$ we have $\vert \Phi \vert \to 0$ ($\vert  \Phi \vert^{-1}\to\infty$) and at the outer asymptotic end of $\bar{M}$ we have $\vert \Phi \vert \to \infty$ (and $\vert \Phi \vert^{-1} \to 0$). Counterbalancing these effects to obtain well-defined, convergent, gauge-independent quantities, in the context of the initial value problem, is one of the main aspects in our work. We shall apply this construction for the perturbative theory. 

\iffalse
\subsection*{2+1 Conformal invariant}
We formulate the gauge-condition in terms of a conformal invariant quantity $\bar{\mu}_q q^{ab}$. 
      
Suppose the gauge-transform from the harmonic coordinates, $\delta \bar{g}$ is given by

\begin{align}
\delta \bar{g}' = \bar{g}' + \mathcal{L}_{\xi} \bar{g}
\end{align}
where $\xi$ is the vector field that generates the gauge-transform. Now then, if we impose the condition equivalent to \eqref{conf-flat-nonlinear} for the linearized case, we have
\begin{align}
q'(e^1, e^1) - q(e^2, e^2) =0,\quad \text{and} \quad q' (e^1, e^2) =0, \forall t
\end{align}
\fi 
Then, we have the following conditions on the gauge-transformed perturbed metric $g$ as follows: 
\begin{subequations}
\begin{align}
\mathcal{L}_Y \bar{g} (e_1, e_1) -\halb q_{0} (e_1, e_1) q^{ab}_0 \mathcal{L}_{Y} \bar{g} _{ab} =& - (\bar{g}' (e_1, e_1) - \halb q_0 (e_1, e_1) q^{ab}_0 \bar{g}'_{ab} )  \\
\mathcal{L}_Y \bar{g} (e_1, e_2) =& - \bar{g}' (e_1, e_2)  \\
\mathcal{L}_Y \bar{g} (e_2, e_2) -\halb q_{0} (e_1, e_2) q^{ab}_0 \mathcal{L}_{Y} \bar{g} _{ab} =& - (\bar{g}' (e_1, e_2) - \halb q_0 (e_1, e_1) q^{ab}_0 \bar{g}'_{ab} ) 
\end{align}
\end{subequations}

which can further be expressed as, using \eqref{WP-def},

\begin{subequations}
\begin{align}
\mathcal{L}_Y q_0 (e_1, e_1) -\halb q_{0} (e_1, e_1) q^{ab}_0 \mathcal{L}_{Y} (q_0) _{ab} =& - \vert \Phi \vert e^{-2\nu}(\bar{g}' (e_1, e_1) - \halb q_0 (e_1, e_1) q^{ab}_0 \bar{g}'_{ab} )  \\
\mathcal{L}_Y q_0 (e_1, e_2) =& - \vert \Phi \vert e^{-2\nu} \bar{g}' (e_1, e_2)  \\
\mathcal{L}_Y q_0 (e_2, e_2) -\halb q_{0} (e_1, e_2) q^{ab}_0 \mathcal{L}_{Y} (q_0) _{ab} =& - \vert \Phi \vert e^{-2\nu}(\bar{g}' (e_1, e_2) - \halb q_0 (e_1, e_1) q^{ab}_0 \bar{g}'_{ab} ) 
\end{align}
\end{subequations}

 The above system can be expressed compactly  in a covariant form as follows
\begin{align}
(\mathcal{L}_{Y} q_0)_{ab} - \halb (q_0)_{ab}\, (q_0)^{cd} (\mathcal{L}_{Y} q_0)_{cd}=-\vert \Phi \vert e^{-2\mbo{\nu}} (\bar{g}_{ab} - \halb (q_0)_{ab} (q_0)^{cd} \bar{g}'_{cd})
\end{align} 
We have established the following:
\begin{lemma}
	Suppose the perturbations of Einstein's equations in axial symmetry are compactly supported in the harmonic gauge then
	\begin{enumerate}
		\item the gauge transformation vector field $Y,$ where $Y$ is the projection of the spacetime gauge transform using $P:$
		\begin{align}
		Y = P^* \bar{Y}, 
		\end{align} 
		then, 
		\begin{align} \label{2d-gauge-transform}
		\mathcal{L}_Y (\bar{\mu}_q q) ^{ab} = \mathcal{L}_Y (\bar{\mu}_{q_0} q_0) ^{ab}= \bar{\mu}_q q^{ac} q^{bd} \vert  \Phi \vert ( \bar{g}'_{cd} - \halb q_{cd} q^{ef} \bar{g}'_{ef}) 
		\end{align}
	\item Suppose $\bar{g}'$ is such that it is compactly supported away from the horizon $\mathcal{H^+}$ and the spatial infinity $\iota^0$,  then $Y$	is a conformal Killing vector field in $(\Sigma, q)$ i.e., $\textnormal{CK}(Y, q) = 0$ in the asymptotic regions (i.e., in the complement of the support of $\bar{g}'$, $\Sigma \setminus Supp(\bar{g}) \vert_{\Sigma}$ )

	\end{enumerate} 
\end{lemma}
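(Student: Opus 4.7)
The plan is to derive both assertions directly from the trace-free gauge identity displayed just before the lemma, combined with the observation that in two dimensions the tensor density $\bar{\mu}_q q^{ab}$ is conformally invariant. Since $q = e^{2\mbo{\nu}} q_0$ implies $\bar{\mu}_q = e^{2\mbo{\nu}} \bar{\mu}_{q_0}$ and $q^{ab} = e^{-2\mbo{\nu}} q_0^{ab}$, the exponentials cancel and $\bar{\mu}_q q^{ab} = \bar{\mu}_{q_0} q_0^{ab}$ as tensor densities on $\Sigma$. Applying $\mathcal{L}_Y$ to both sides would give the first equality in \eqref{2d-gauge-transform} for free.

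For the second equality I would unfold the Lie derivative using the product rule and the standard identity $\mathcal{L}_Y q_0^{ab} = -(\leftexp{(q_0)}{\grad}^a Y^b + \leftexp{(q_0)}{\grad}^b Y^a)$, which gives
\begin{align*}
\mathcal{L}_Y(\bar{\mu}_{q_0} q_0^{ab}) = \bar{\mu}_{q_0}\bigl[q_0^{ab}\leftexp{(q_0)}{\grad}_c Y^c - (\leftexp{(q_0)}{\grad}^a Y^b + \leftexp{(q_0)}{\grad}^b Y^a)\bigr],
\end{align*}
and I would recognise the bracket (up to an overall sign) as the conformal Killing operator of $Y$ on $(\Sigma, q_0)$. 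Next I would raise indices in the trace-free gauge identity $(\mathcal{L}_Y q_0)_{ab} - \halb (q_0)_{ab}(q_0)^{cd}(\mathcal{L}_Y q_0)_{cd} = -\vert\Phi\vert e^{-2\mbo{\nu}}(\bar{g}'_{ab} - \halb (q_0)_{ab}(q_0)^{cd}\bar{g}'_{cd})$ and multiply by $\bar{\mu}_{q_0}$; the left-hand side is then exactly the same conformal-Killing expression appearing above, so matching the two yields $\mathcal{L}_Y(\bar{\mu}_{q_0} q_0^{ab})$ in terms of $\bar{g}'$. Finally, I would repackage the right-hand side by invoking the 2D conformal identities $\bar{\mu}_q q^{ac} q^{bd} = e^{-2\mbo{\nu}}\bar{\mu}_{q_0} q_0^{ac} q_0^{bd}$ and $q_{cd} q^{ef} = (q_0)_{cd}(q_0)^{ef}$, which convert the $q_0$-expression into the manifestly covariant form stated in \eqref{2d-gauge-transform}.

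The second assertion then falls out as an immediate corollary: on $\Sigma \setminus \supp(\bar{g}')\rest{\Sigma}$ the right-hand side of \eqref{2d-gauge-transform} vanishes identically, so $\mathcal{L}_Y(\bar{\mu}_q q^{ab}) = 0$ there, which by the computation above is precisely the vanishing of the conformal Killing operator, $\bar{\mu}_q\bigl[\leftexp{(q)}{\grad}^a Y^b + \leftexp{(q)}{\grad}^b Y^a - q^{ab}\leftexp{(q)}{\grad}_c Y^c\bigr] = 0$, i.e.\ $\textnormal{CK}(Y, q) = 0$ in the asymptotic regions. The hard part is nothing conceptual but purely bookkeeping: tracking the conformal factors $e^{\pm 2\mbo{\nu}}$, the index raising/lowering, and the sign in $\mathcal{L}_Y q^{ab} = -q^{ac}q^{bd}\mathcal{L}_Y q_{cd}$ carefully enough to confirm that the left-hand side of the trace-free gauge identity does match $\mathcal{L}_Y(\bar{\mu}_q q^{ab})$ in sign. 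Once these purely algebraic 2D conformal identities are settled, both conclusions of the lemma follow directly.
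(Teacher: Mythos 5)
Your proposal is correct and follows essentially the same route as the paper: the paper's ``proof'' is precisely the derivation displayed immediately before the lemma (imposing the linearized Weyl--Papapetrou conditions to obtain the trace-free identity and then recognizing its covariant form), and your argument simply makes explicit the algebra the paper leaves implicit --- the 2D conformal invariance of $\bar{\mu}_q q^{ab}$, the identification of the trace-free part of $\mathcal{L}_Y q_0$ with the conformal Killing operator, and the vanishing of the source outside $\supp(\bar{g}')$ for part (2). The only caveat is the overall sign you already flag, which traces back to the sign convention in the displayed gauge identity rather than to any gap in your reasoning.
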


Now let us choose a gauge for the dimensionally reduced Cauchy hypersurface $(\Sigma, q)).$ If we choose the polar coordinates $(R, \theta)$, the condition \eqref{2d-gauge-transform} can be expressed as follows: 

\begin{subequations}\label{Y-theta-transport}
\begin{align}
\ptl_ \theta  Y^\theta =&  R \ptl_R \frac{Y^R}{R} + \frac{1}{R} (\bar{\mu}_q q^{ac} q^{bd} \vert  \Phi \vert ( \bar{g}' - \halb q_{ab} q^{ef} \bar{g}'_{ef})  ) \\
\ptl_R Y^\theta =& - \frac{1}{R} \ptl_\theta \frac{Y^R}{R} - \frac{1}{R} (\bar{\mu}_q q^{ac} q^{bd} \vert  \Phi \vert ( \bar{g}' - \halb q_{ab} q^{ef} \bar{g}'_{ef}) )
\end{align}
\end{subequations}

It may be noted that above system of differential equations is an overdetermined system. It follows from the Picard theorem and the Frobenious theorem that the necessary and sufficient conditions for the existence of the solutions is the compatibility condition

\begin{align}
\frac{1}{R} \ptl_R \left(R \ptl_R \frac{Y^R}{R} \right) + \frac{1}{R^2} \ptl^2 _{\theta} \frac{Y^R}{R} =& -\frac{1}{R^2} \ptl_ \theta ( \bar{\mu}_q q^{aR} q^{b \theta} \vert  \Phi \vert ( \bar{g}'_{ab} - \halb q_{ab} q^{ef} \bar{g}'_{ef}) )  \notag\\
&- \frac{1}{R} \ptl_R ( \frac{1}{R} \bar{\mu}_q q^{ac} q^{bd} \vert  \Phi \vert ( \bar{g}' - \halb q_{ab} q^{ef} \bar{g}'_{ef}) ).
\end{align}
In formal terms, this corresponds to vanishing of the commutator of the vector fields corresponding to the differential equations \eqref{Y-theta-transport}. 
It may be noted that the compatibility condition fortuitously turns out to be a Poisson equation for $\frac{Y^R}{R}$, for the Laplacian $\Delta_0$ in the $R, \theta$ gauge, 

\begin{align}
\Delta_0 =  \frac{1}{R} \frac{\ptl}{ \ptl R} \left( R \frac{\ptl }{\ptl R} \right) + \frac{1}{R^2} \frac{\ptl^2}{ \ptl \theta^2}. 
\end{align}

Likewise, the equations, which are equivalent to the equation in the Lemma,  can be tansformed into overdetermined transport equations for $\frac{Y^R}{R}$

\begin{align}
\ptl_\theta \frac{Y^R}{R} =& - \ptl_R Y^\theta -  (\bar{\mu}_q q^{ac} q^{bd} \vert  \Phi \vert ( \bar{g}' - \halb q_{ab} q^{ef} \bar{g}'_{ef}) ) \\
\ptl_R \frac{Y^R}{R} =& \frac{1}{R} \ptl_\theta Y^\theta - \frac{1}{R^2} (\bar{\mu}_q q^{ac} q^{bd} \vert  \Phi \vert ( \bar{g}' - \halb q_{ab} q^{ef} \bar{g}'_{ef}) )
\end{align} 
for which the compatibility condition is 
\begin{align}
\frac{1}{R} \ptl_R ( R \ptl_R Y^\theta) + \frac{1}{R^2} \ptl^2_\theta Y^\theta = & - \frac{1}{R} \ptl_R ( \frac{1}{R} \bar{\mu}_q q^{Rc} q^{\theta d} \vert  \Phi \vert ( \bar{g}' - \halb q_{ab} q^{ef} \bar{g}'_{ef}) )
 \notag\\
 &+ \frac{1}{R^3} \ptl_\theta  ( \frac{1}{R} \bar{\mu}_q q^{Rc} q^{Rd} \vert  \Phi \vert ( \bar{g}' - \halb q_{ab} q^{ef} \bar{g}'_{ef}) )
\end{align}
which is again a Poisson equation for $Y^\theta.$ For the reasons of regularity on the axes, we impose the conditions \cite{O_Rinne_J_Stewart_2005}
\begin{align}
Y^ \theta =0, \quad \ptl_\theta Y^R =0, \quad \text{on the axes} \quad \Gamma.
\end{align}

In particular, we assume that the behaviour of $Y^\theta \sim \sin \theta$ and $\ptl_\theta Y^R \sim \sin \theta$ close to the axes $\Gamma$

\begin{equation}
\left.\begin{aligned}
\Delta_0 \frac{Y_R}{R}  =&-\frac{1}{R^2} \ptl_ \theta ( \bar{\mu}_q q^{aR} q^{b \theta} \vert  \Phi \vert ( \bar{g}'_{ab} - \halb q_{ab} q^{ef} \bar{g}'_{ef}) )  \notag\\
&- \frac{1}{R} \ptl_R ( \frac{1}{R} \bar{\mu}_q q^{ac} q^{bd} \vert  \Phi \vert ( \bar{g}' - \halb q_{ab} q^{ef} \bar{g}'_{ef}) ) \quad \text{on} \quad (\Sigma)\\
\frac{Y^R}{R} &= Y^R_{\mathcal{H}^+} \quad \text{on} \quad (\mathcal{H}^+).
\end{aligned}
\qquad \right\}
\qquad \quad \text{(D-BVP)}_{Y^R}
\end{equation}
We solve the Dirichlet problem  $\text{(D-BVP)}_{Y^R}$ above with the method of images. Let us first consider the Poisson equation:
\begin{align}
\Delta_0 u= F, \quad \text{in any Lipschitz domain $(\Sigma)$}
\end{align}
From elliptic theory, it follows that if $F \in C^{\infty} (\Sigma)$ for regular Dirichlet or Neumann boundary data, then $u \in C^{\infty} (\Sigma).$
Suppose $K_u$ is the fundamental solution such that 
\begin{align}
\Delta_0 K_u = \delta (x-x')
\end{align}
where $\delta$ is a Dirac-delta function with a Eucliean metric on $\Sigma.$
Then consider the quantity, 
\begin{align}
\ptl_a (u \ptl^a K_u - K_u \ptl^a u) =& (\ptl_a u \ptl^a K_u+ u \Delta K_u) - (\ptl_a K_u \ptl^a u + K_u \Delta u) \notag\\
=& u \Delta K_u - K_u \Delta u
\end{align}
and upon integration over the domain $\Sigma$, we get 
\begin{align}
&\int_{\Sigma} \ptl_a (u \ptl^a K_{u} - K_{u} \ptl^a u) = \int_{\ptl \Sigma} n \cdot (u\ptl^a K_{u} - K_{u} \ptl^a u) \notag \\
&=  \int_{\ptl \Sigma} u \ptl_n K_u - K_u \ptl_n u
= \int_{\Sigma} u\delta (x-x') - K_u F = u - \int_{\Sigma} K_u F
\end{align}
Therefore, the general representation formula for $u$ is

\begin{align} \label{gen-Poisson-representation-formula}
u = \int_{\Sigma} K_u F + \int_{\ptl \Sigma} u \ptl_n K_{u} - K_{u} \ptl_n u.
\end{align}
The formula \eqref{gen-Poisson-representation-formula} will be useful for us through out our work, in different contexts. We can tailor this general formula for both Dirichlet and Neumann boundary value problems. In the following, we shall discuss two configurations that would be particularly relevant for us. 
   
The orbit space $\Sigma$ geometry of Kerr black hole spaetime resembles that of the complement of  a half disk ( boundary representing the horizon) in a half plane. We solve the Dirichlet problem with the method of images. The regularity and compatibilty conditions for our problem imply that the `image' is reflection antisymmetric. This applies for the image charge  as well as the Dirichlet data. Thus, with this picture, we have the complement of a full disk in a full plane, with reflection (with respect to the axes) antisymmetric data at the disk. It follows that the asymptotic decay rate for this problem is $\mathcal{O} (\frac{1}{R})$ for this Dirichlet problem. This decay rate can be independtly verified using the separation of variables. It may be noted that this decay rate is faster than that of the Poisson equation in a plane ($\frac{1}{2 \pi} \log R$ asymptotic behaviour). This faster decay rate plays a fundamental role in our problem. 

Likewise, consider the Poisson equation with Neumann boundary conditions in the orbit space $\Sigma$. The regularity at the axes implies that the Neumann data in the extended picture is reflection anti-symmetric. We thus recover the $\mathcal{O} (\frac{1}{R})$ decay rate  for the solution (with appropriate decay conditions for the source function $f$).  
 
%which has the decay rate of $\frac{1}{R}$ as opposed to the $\log R$ asymptotic behaviour  of the fundamental solution in the usual 2-plane.

\begin{align}
-\frac{1}{R^2} \ptl_ \theta ( \bar{\mu}_q q^{aR} q^{b \theta} \vert  \Phi \vert ( \bar{g}'_{ab} - \halb q_{ab} q^{ef} \bar{g}'_{ef}) ) &  \notag\\
- \frac{1}{R} \ptl_R ( \frac{1}{R} \bar{\mu}_q q^{ac} q^{bd} \vert  \Phi \vert ( \bar{g}' - \halb q_{ab} q^{ef} \bar{g}'_{ef}) )  &
\end{align}
are compactly supported (or with appropriate decay rate, consistent with asymptotically flat conditions) for our problem, 
it follows that the solution of the Dirichlet problem $(\text{D-BVP})_{Y^R}$ decays as 
$\frac{Y^R}{R} \sim \frac{1}{R}$ asymptotically, for large $R.$ It follows from analogous arguments that the solutions for the boundary value problem

\begin{equation}
\left.\begin{aligned}
\Delta_0 Y^\theta =& - \frac{1}{R} \ptl_R ( \frac{1}{R} \bar{\mu}_q q^{Rc} q^{\theta d} \vert  \Phi \vert ( \bar{g}' - \halb q_{ab} q^{ef} \bar{g}'_{ef}) )
\notag\\
&+ \frac{1}{R^3} \ptl_\theta  ( \frac{1}{R} \bar{\mu}_q q^{Rc} q^{Rd} \vert  \Phi \vert ( \bar{g}' - \halb q_{ab} q^{ef} \bar{g}'_{ef}) ), \quad \text{on} \quad (\Sigma)\\
Y^\theta &= Y^\theta_{\mathcal{H}^+} \quad \text{on} \quad (\mathcal{H}^+)
\end{aligned}
\qquad \right\}
\qquad \quad \text{(D-BVP)}_{Y^\theta}
\end{equation}

are unique, regular (well-posed) and decay $Y^\theta \sim \frac{1}{R}$ asymptotically for large $R$.  
On the other hand, due to the regularity conditions $Y$ admits the expansion: 

\begin{subequations}
	\begin{align}
	Y^{\theta}  =& \sum^{\infty}_{n=1} Y^{\theta}_n \sin (n \theta) \\
	Y^{R} =& \sum^{\infty}_{n=0} Y^R_n \cos(n \theta)
	\end{align}
\end{subequations}
for the solutions of the conformal Killing vector $Y$. Likewise, for regularity reasons, the behaviour of homogeneities in the boundary value problems mentioned above is restricted on the axes. In particular, they behave as follows

\begin{align}
 \bar{\mu}_q q^{RR} q^{\theta \theta} \vert \Phi \vert ( \bar{g}' ( \ptl_R, \ptl_\theta ) )\sim& \sin \theta \\
\ptl_R (\bar{\mu}_q q^{RR} q^{RR} \vert  \Phi \vert ( \bar{g}' ( \ptl_R, \ptl_R) - \halb q_{RR} tr_q ( \bar{g}')) ) \sim & \sin  \theta 
\end{align}
close to the axes $\Gamma$. Now that we clarified the structure of the source terms, let us introduce the notation, for $(R, \theta)$ coordinates

\begin{align}
\cal{M}^{ R \theta } \fdg =& \bar{\mu}_q q^{RR} q^{\theta \theta} \vert \Phi \vert \bgg'(\ptl_R , \ptl_\theta) \notag\\
\cal{M}^{R R} \fdg=& \bar{\mu}_q q^{R R} q^{RR} \vert \Phi \vert ( \bgg' ( \ptl_R, \ptl_R) - \halb q_{RR} \text{tr}_q \bgg')
\end{align}

 As a consequence of the above arguments, they must admit a Fourier decomposition of the form: 

\begin{align}
- \frac{1}{R} \mathcal{M}^{RR} =& \sum^\infty_{n = 0} I_n (R, t) \cos n \theta  \notag\\
-R \mathcal{M}^{R \theta} =&  \sum^\infty_{n=1} J_n (R, t) \sin n \theta
\end{align}

Now, plugging in these decompositions in the first order equations \eqref{Y-theta-transport}, we 
get 
\begin{align}
\ptl_R Y^R_0 (R, t) - \frac{1}{R} Y^R_0 (R, t) = I_0 (R, t)
\end{align}
which admits the solution, integrable equations 
\begin{align}
Y^R_0 (R, t) = \frac{R}{R_+} Y^R (R_+, t) + \int^R_{R_+} \frac{I (R', t)}{R'} dR'  
\end{align}
for the lowest frequency quantity. Now for higher frequencies, 

\begin{subequations}
\begin{align}
\ptl_R Y_n^R (R, t) - \frac{1}{R} Y^R_0 (R, t) - n Y^\theta_n (R, t) =&  I_n (R, t) \notag\\
R^2 \ptl_R Y^\theta _n (R, t) - n Y^R_n (R, t) = J_n (R, t)
\end{align}
\end{subequations}
which follow from the first-order equations respectively. These equations can be decoupled as 

\begin{align} \label{Y-first-order-modes}
\frac{1}{R} \ptl_R (R \ptl_R Y^\theta_n (R, t)) - \frac{n^2 Y^\theta_n(R, t)}{R^2} = \frac{n I_n (R, t)}{R^2} + \frac{1}{R} 
\ptl_R \frac{J_n (R, t)}{R}
\end{align}

The characteristic equation admits two real roots and it may be noted that the fundamental set of solutions is given by 
\begin{align}
 \text{fundamental solutions set for $Y^\theta$  in} \, \eqref{Y-first-order-modes} = \{ R^n, R^{-n} \}.
\end{align}
The corresponding Wronskian is $  \frac{2n}{R} \neq 0$ $\forall R \in (R_+, \infty)$,  $\to 0$ as $R \to \infty$ and $\to \frac{2n}{R_+}$ for $R \to R_+.$ It follows that $Y^\theta = A(t) R^{-n} + B(t) R^{-n}.$

We get the following asymptotic behaviour of the Wronskian near the horizon $\mathcal{H}^{+}:$

\begin{align}
W_{\mathcal{H}^+} (Y_n^\theta)=& \frac{2n}{R_+} \notag\\
W_{\mathcal{H}^+} (\frac{Y_n^R}{R})=&  \frac{2n}{R_+} \quad \text{as} \quad R \to R_+ 
\end{align}
and near the outer asymptotic region,

\begin{align}
W_{\bar{\iota}^0} (Y_n^\theta) =& \frac{2n}{R}, \notag\\
W_{\bar{\iota}^0} (\frac{Y_n^R}{R}) =& \frac{2n}{R} \to 0 \quad \text{as} \quad  R \to \infty
\end{align} 
so that, for the  conformal Killing vector $Y,$ the asymptotic behaviour is
\begin{subequations} \label{Y-theta-asym}
\begin{align}
	Y_n^ \theta (R, t) =& Y_n^\theta (t) (R^n - R^{2n}_+ R^{-n}), \, R \quad \text{near} \quad R_+, \quad n \geq 1 \notag \\ 
	& \to 0 \quad \text{as} \quad R \to R_+ \\ 
	Y_n^\theta (R, t) =& Y_n^\theta (t) R^{-n} \quad \text{for large} \quad R, \quad n \geq 1  \notag\\
	& \to 0, \quad \text{as} \quad R \to \infty 
\end{align}
\end{subequations}

and 

\begin{subequations}  \label{Y-R-asym}
	\begin{align}
	Y_n^ R (R, t) =& Y_n^R (t) (R^{n+1} - R^{2n}_+ R^{-n+1}), \, R \quad \text{near} \quad R_+, \quad n \geq 1 \notag \\ 
	& \to 0 \quad \text{as} \quad R \to R_+ \\ 
	Y_n^R (R, t) =& Y_n^R (t) R^{-n+1} \quad \text{for large} \quad R, \quad n \geq 1  \notag\\
\frac{Y_n^R}{R}	 \to & \, 0, \quad \text{as} \quad R \to \infty
	\end{align}
\end{subequations}

For the estimates in this work, the quantities $Y^\theta (t), Y^R (t) $ in the right hand sides of \eqref{Y-theta-asym} \eqref{Y-R-asym} are treated as constants (in each $\Sigma_t$) and thus there is a slight abuse of notation. The behaviour of $Y^R_0 (R, t)$ is a bit subtle and it is directly related to the regularity issues of our problem. This will be studied separately later.

%\subsection*{Expressions for the phase space $X$ in the asymptotic region/ Pure Gauge Perturbations}
\subsection*{Wave map phase space $X$}
The general gauge transforms of the quantitites
look like
\begin{subequations}
\begin{align}
\vert \Phi \vert' =& \vert \mbo{\Phi} \vert' + \mathcal{L}_{\bar{\text{Y}}}\vert \Phi \vert \\
\intertext{subsequently the wave map canonical pairs}
U'^A =& \mbo{U}'^A + \mathcal{L}_{\bar{\text{Y}}} U^A, \\
p'_A =& \mbo{p}'_A + \mathcal{L}_{\bar{\text{Y}}} p_A, \quad \forall A
\intertext{likewise, the (spacetime) gauge transform of the metric on the target} 
h'_{AB} (U)=& \mbo{h}'_{AB} (\mbo{U}) + \mathcal{L}_{\bar{\text{Y}}} h_{AB}(U), \quad \forall A, B
\end{align}
\end{subequations}
which is analogous to the transformation of a scalar. 
In the case of the axially symmetric and stationary Kerr black hole spacetime the operator $\mathcal{L}_{\bar{Y}} = \mathcal{L}_{\bar{Y}} \vert_{\Sigma}.$ As a consequence, the formulas above reduce to 
\begin{align}
 U'^A =& \mbo{U}'^A +\mathcal{L}_{\bar{\text{Y}}} U^A, \\
 p'_A =& \mbo{p}'_A + \mathcal{L}_{\bar{\text{Y}}} p_A
 \end{align}
In the asymptotic regions we have the above formulas reduce to 
\begin{align} U'^A =&  \mathcal{L}_{\bar{\text{Y}}}\vert_{\Sigma} U^A, \\
p'_A =& \mathcal{L}_{\bar{\text{Y}}}\vert_{\Sigma} p_A
\end{align}
After noting that, in the $(R, \theta)$ coordinates for $(\Sigma, q)$
\begin{align}
&\ptl_R  \vert \Phi \vert \sim \mathcal{O}(R) , \quad \ptl_\theta \vert \Phi  \vert \sim \mathcal{O}(R^2), \quad \text{for large $R$}  \\
& \ptl_R \vert  \Phi \vert \sim \mathcal{O}(1), \quad \ptl_\theta \vert \Phi \vert \sim \mathcal{O}(1), \quad \text{for $R$ close to $R_+$} 
\end{align}

%In the special gauge of the target, $h = 4 d\gm^2 + e^{4 \gm} d \omega^2, $  we have 
As a consequence, we have 
\begin{align}
\vert  \Phi \vert' \sim \mathcal{O} (\frac{1}{R}) \quad \text{for large $R$} \notag\\
\vert  \Phi \vert' \sim \mathcal{O}(1) \quad \text{for $R$ close to $R_+$}
\end{align}

Next, the other component of the wave map $U \fdg (M, g) \to (N, h)$ is constituted by the twist potential. It follows from background Kerr geometry that
\begin{subequations}
\begin{align}
	&\ptl_R  \omega \sim \mathcal{O}(\frac{1}{R^3} ) , \quad \ptl_\theta \omega \sim \mathcal{O}(1), \quad \text{for large $R$}  \\
	& \ptl_R  \omega\sim \mathcal{O}(1), \quad \ptl_\theta  \omega \sim \mathcal{O}(1), \quad \text{for $R$ close to $R_+$} 
\end{align}
\end{subequations}
and 
\begin{subequations}
\begin{align}
\vert  \omega \vert' \sim \mathcal{O} (\frac{1}{R}) \quad \text{for large $R$} \\
\vert  \omega \vert' \sim \mathcal{O}(1) \quad \text{for $R$ close to $R_+$}.
\end{align}
\end{subequations}

Now, then let us turn to the conjugate momenta, we have, from the Hamiltonian equation, 

\begin{align}
\frac{N}{ \bar{\mu}_q}p_A  = h_{AB} (U) \ptl_t U'^B - h_{AB}(U) \mathcal{L}_{N'} U^B
\intertext{in the asymptotic regions}
  = h_{AB} (U) \ptl_t \mathcal{L}_Y U^B - h_{AB}(U) \mathcal{L}_{N'} U^B. 
\end{align}

%$Y^t$ can be expressed as, $Y^\phi$ can be expressed as... This is connected to the geodesic completeness for our problem. 

%The gauge transformation aspects of the terms  conformal factor $\nu$ and $\nu'$ are bit more subtle and we shall discuss it later. 

 \subsection*{Lagrange multipliers}
 Now let us turn to the remaning quantities that occur in the ADM formalism, Lagrange multipliers $\{ N', N'^a \}$ in our Weyl-Papapetrou gauge, as constructed using a gauge transform from harmonic coordinates. It may noted that, for our background Kerr metric, $\bar{\gg}' (\ptl_\phi, \ptl_\phi) = \vert \Phi \vert'.$
 
 \begin{align}
 	N =& \vert \Phi \vert^\halb  (- \bar{g} (dt, dt))^{-\halb}
 	\intertext{noting that}
 	\vert \Phi \vert '=& \bgg' (\ptl_\phi, \ptl_\phi) + Y^\a \ptl_\a \vert  \Phi \vert
 	\intertext{and}
\bar{g}' (\ptl_t, \ptl_t)=&  	\bgg( d t, d x^\a) \bgg (d t, d x^\b)  \Big(\bgg' ( \ptl_a, \ptl_\b) + \mathcal{L}_Y \bgg(\ptl_\a, \ptl_\b ))
 	\intertext{we get}
 	N'= & \halb \vert  \Phi \vert^{-1} N (\bgg ( \ptl_\phi, \ptl_\phi) + Y^\a \ptl_\a \vert  \Phi \vert) \notag\\
 	&- \halb \vert  \Phi \vert^\halb N^3  \Big( \bgg( d t, d x^\a) \bgg (d t, d x^\b)  \Big(\bgg' ( \ptl_a, \ptl_\b) + \mathcal{L}_Y \bgg(\ptl_\a, \ptl_\b )\Big) \Big). \label{pert-lapse}
 	\end{align}
 %In view of, 

%\begin{align}
%	\bgg^{0\a} \bgg^{0 \b} \mathcal{L}_{Y} \bgg_{\a \b} =  ( \frac{}) 
	%\end{align}
%we get, 
In the asymptotic regions,

\begin{align}
	N' = Y^a \ptl_a N + N \ptl_t Y^t.
	\end{align}
 Likewise, for the shift vector, we have
 
 \begin{align}
  ( \vert  \Phi \vert^{-1} q_{ab}+ \vert \Phi \vert \mathcal{A}_a \mathcal{A}_b)\bar{N}^b =& \vert  \Phi \vert q_{ab} N^b + \vert  \Phi \vert \mathcal{A}_t \mathcal{A}_a
  \intertext{consequently}
  \vert  \Phi \vert^{-1} \bar{N}'^b =& \vert \Phi  \vert^{-1} N'^b + q^{ab} \vert \Phi\vert \mathcal{A}_t \mathcal{A}'_a .
  \intertext{Recall}
\bar{N}^b = &\bar{q}^{ab} \bar{g}_{0a} = \vert \Phi \vert q^{ab} \bar{g}_{0a}
  \intertext{then, }
   N'^b=& \bar{N}'^b + q^{ab} \vert \Phi \vert^2 \mathcal{A}_t \mathcal{A}'_a  \notag\\
   =& \vert  \Phi\vert q^{ab} \bar{g}'_{ta} + q^{ab} \vert \Phi \vert \mathcal{A}_t \bar{g}' _{a \phi }\notag\\
   =&  \vert  \Phi\vert q^{ab} (\bgg'_{ta} + (\mathcal{L}_Y \bar{g})_{ta}) + q^{ab} \vert \Phi \vert \mathcal{A}_t (\bgg'_{a \phi} + (\mathcal{L}_Y \bar{g})_{a \phi} ) 
 	\end{align} 
 Now then, using, 
 
 \begin{align}
 	(\mathcal{L}_Y \bar{g})_{0a} = &\ptl_0 Y^b \bar{g}_{ba} + \ptl_a Y^t  (-N^2 + N_\phi N^\phi ) + \ptl_a Y^\phi N_\phi \notag\\
 	=& \ptl_0 Y^b \bar{g}_{ba} + \ptl_a Y^t (- N^2 + \vert  \Phi\vert\mathcal{A}^2_0 ) + \vert \Phi \vert \ptl_a Y^\phi \mathcal{A}_0 
 	\end{align}
 
 \begin{align}
 	(\mathcal{L}_{Y} \bar{g} )_{a \phi} =&   \vert \Phi \vert \ptl_a Y^\phi +  \ptl_a Y^t \bar{N}_\phi \notag\\
 	=& \vert \Phi\vert ( \ptl_a Y^\phi + \ptl_a Y^t \mathcal{A}_0)
 	\end{align}
 in the asymptotic regions,  we have
 \begin{align} \label{pert-shift}
 	N'^b = \ptl_t Y^b - N^2 q^{ab} \ptl_a Y^t.
 \end{align}
The above results can be summarized in the following lemma. 
 
 \begin{lemma} [Lagrange multipliers $\{ N', N'^a \}$]
 	Suppose $\bar{Y}$ is a gauge transform from the harmonic coordinates $(\bar{M}', \bar{\gg}')$ to the Weyl Papapetrou gauge $(\bar{M}', \bar{g}' )$ 
 	\begin{itemize}
 \item In the Weyl-Papapetrou gauge, the Lagrange multipliers $\{ N, N'^a\}$ in the asymptotic regions are given by \eqref{pert-lapse}and \eqref{pert-shift} respectively 
 \iffalse
 \begin{subequations}
 \begin{align}
 N'=& \halb N \vert \Phi \vert^{-1} \big( \bar{\gg}'_{\phi \phi} - N^2 \bar{g}^{t \a} \bar{g}^{t \b} \bar{g}'_{\a \b}  \big) + N \ptl_t \bar{Y}^t + Y^a \ptl_a N \\  
 N'^a =& q^{ab} ( \vert \Phi \vert (\bar{\gg}'_{0a} -\mathcal{A}_0 \bar{\gg}'_{0 \phi})) + \ptl_a \bar{Y}^c - N^2 q^{ac} \ptl_a \bar{Y}^0
 \end{align}
 \end{subequations}
\fi 
\item The vector field $N'^a$ so constructed is regular at the axes and behaves as, 
\begin{align}
N'^R = \mathcal{O}(1) , \quad N'^\theta = \mathcal{O}(\frac{1}{R}) \quad \text{near the spatial infinity}, 
\intertext{and}
N'^R \sim \mathcal{O}(1), \quad N'^\theta =0 \quad  \text{at the horizon}
\end{align}
\end{itemize}
\end{lemma}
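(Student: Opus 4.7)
\medskip

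\textbf{Proof plan.} The identities \eqref{pert-lapse} and \eqref{pert-shift} are essentially algebraic consequences of the Weyl--Papapetrou decomposition together with the transformation rule $\bar{\gg}' \mapsto \bar{\gg}' + \mathcal{L}_{\bar{Y}} \bar{\gg}$ under the gauge transform from harmonic coordinates. My plan is to isolate $N$ and $N^a$ from the identity $\bar{\gg} = \vert\Phi\vert^{-1} g + \vert\Phi\vert\,\mathcal{A}^2$, linearize once, and then substitute the gauge relation. Concretely, from $N = \vert\Phi\vert^{1/2}(-\bar{\gg}(dt,dt))^{-1/2}$ one gets by direct differentiation
\begin{align*}
N' = \tfrac{1}{2}\vert\Phi\vert^{-1} N\,\vert\Phi\vert' \;-\; \tfrac{1}{2}\vert\Phi\vert^{1/2} N^{3}\,\bar{\gg}^{t\alpha}\bar{\gg}^{t\beta}\bar{\gg}'_{\alpha\beta},
\end{align*}
into which one plugs $\vert\Phi\vert' = \bar{\gg}'(\ptl_\phi,\ptl_\phi) + Y^\alpha \ptl_\alpha \vert\Phi\vert$ and $\bar{\gg}'_{\alpha\beta} \mapsto \bar{\gg}'_{\alpha\beta} + (\mathcal{L}_{Y}\bar{\gg})_{\alpha\beta}$; this reproduces \eqref{pert-lapse}. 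The shift formula comes from the analogous identity $(\vert\Phi\vert^{-1} q_{ab} + \vert\Phi\vert\,\mathcal{A}_a\mathcal{A}_b)\bar{N}^b = \vert\Phi\vert^{-1} q_{ab} N^b + \vert\Phi\vert\,\mathcal{A}_t\mathcal{A}_a$, which, after linearizing and using $(\mathcal{L}_Y \bar{\gg})_{0a}$ and $(\mathcal{L}_Y \bar{\gg})_{a\phi}$ exactly as displayed in the text, yields \eqref{pert-shift} in the asymptotic regime where the intrinsic $\bar{\gg}'$ parts vanish.

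\medskip

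\textbf{Asymptotic behaviour of $N'^a$.} For the second item I would insert the asymptotic expansions \eqref{Y-theta-asym}--\eqref{Y-R-asym} for $Y^\theta, Y^R$ together with the Kerr background behaviour $N \sim 1$, $\ptl_a N \sim \mathcal{O}(R^{-2})$ as $R \to \infty$, and $N \to 0$ on $\mathcal{H}^+$, into \eqref{pert-shift}. Near $\bar{\iota}^0$ the dominant term $\ptl_t Y^b$ inherits from \eqref{Y-R-asym} the behaviour $\ptl_t Y^R \sim \mathcal{O}(1)$ (since $Y^R \sim R^{-n+1}$ contains the $n=1$ mode constant in $R$) and from \eqref{Y-theta-asym} the behaviour $\ptl_t Y^\theta \sim \mathcal{O}(R^{-1})$; the $N^2 q^{ab}\ptl_a Y^t$ correction is subleading because $N^2 \to 1$ while $q^{RR}, q^{\theta\theta}/R^2 \to \mathcal{O}(1)$ and $\ptl_a Y^t$ decays. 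This gives exactly $N'^R = \mathcal{O}(1)$ and $N'^\theta = \mathcal{O}(R^{-1})$. At $\mathcal{H}^+$, the prefactor $N^2$ vanishes, killing the $\ptl_a Y^t$ term, and one is left with $\ptl_t Y^b$ evaluated on the horizon; the boundary value on $\mathcal{H}^+$ used in $(\text{D-BVP})_{Y^\theta}$ can be chosen compatibly so that $Y^\theta \equiv 0$ on $\mathcal{H}^+$ (and hence its time derivative vanishes), while $\ptl_t Y^R \big|_{\mathcal{H}^+}$ remains $\mathcal{O}(1)$.

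\medskip

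\textbf{Regularity at the axes.} The delicate point, and what I expect to be the main obstacle, is the axis regularity of $N'^a$, since the polar coordinate system $(R,\theta)$ is singular on $\Gamma = \{\sin\theta = 0\}$. The strategy is to invoke the Fourier decompositions
\begin{align*}
Y^\theta = \sum_{n\ge 1} Y^\theta_n(R,t)\sin(n\theta), \qquad Y^R = \sum_{n\ge 0} Y^R_n(R,t)\cos(n\theta),
\end{align*}
already used in the text, and check that the same parity behaviour propagates through \eqref{pert-shift}. Since $\ptl_t$ preserves the $\sin(n\theta)/\cos(n\theta)$ structure, the term $\ptl_t Y^b$ satisfies $\ptl_t Y^\theta \sim \sin\theta$ and $\ptl_\theta(\ptl_t Y^R) \sim \sin\theta$ near $\Gamma$, which is exactly the regularity pattern $N'^{\perp}\big|_\Gamma = 0$, $\ptl_\theta N'^{\parallel}\big|_\Gamma = 0$ required for a smooth vector field on the axis. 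The correction $N^2 q^{ab}\ptl_a Y^t$ must be shown to share this pattern; for this I would use that in harmonic gauge $Y^t$ itself satisfies a scalar wave equation whose regular solutions admit an even Fourier-$\cos(n\theta)$ expansion (from the standard axis analysis of~\cite{O_Rinne_J_Stewart_2005}), so that $q^{RR}\ptl_R Y^t$ is $\cos$-type and $q^{\theta\theta}\ptl_\theta Y^t$ is $\sin$-type, matching the required parities. The combination with the behaviour at $\mathcal{H}^+$ is then a matter of continuity across the corner $\Gamma \cap \mathcal{H}^+$, which follows from the fact that $Y^\theta$ vanishes simultaneously on $\Gamma$ and is set to $Y^\theta_{\mathcal{H}^+}$ (chosen to vanish on the corner) on $\mathcal{H}^+$, completing the proof of the lemma.
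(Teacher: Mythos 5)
Your outline of how \eqref{pert-lapse} and \eqref{pert-shift} are obtained (linearize the Weyl--Papapetrou decomposition, substitute $\bar{\gg}'\mapsto\bar{\gg}'+\mathcal{L}_{\bar Y}\bar{\gg}$, drop the intrinsic part in the asymptotic regions) matches the derivation the paper carries out just before the lemma. The gap is in the second item. The entire content of the paper's proof is the control of the correction term $\mathcal{D}^a\fdg= N^2q^{ab}\ptl_b Y^t$ in \eqref{pert-shift}, and this is done by first invoking the lapse-gauge relation $\ptl_t Y^t=-\tfrac{1}{N}Y^a\ptl_a N$ in the asymptotic regions, which integrates to the explicit formula
\begin{align*}
Y^t=Y^t(t=0)-\mathcal{Y}^a\,\frac{\ptl_a N}{N},\qquad \mathcal{Y}^a=\int_0^t Y^a\,dt,
\end{align*}
so that $\ptl_a Y^t$ is computed term by term from $\tfrac{\ptl_\theta N}{N}=\cot\theta$, $\tfrac{\ptl_R N}{N}=\tfrac{1+R_+^2/R^2}{R(1-R_+^2/R^2)}$ and the mode expansions \eqref{Y-theta-asym}--\eqref{Y-R-asym}. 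You never establish what $Y^t$ is; you simply assert that ``$\ptl_a Y^t$ decays'' (and, in the axis discussion, that $Y^t$ solves a scalar wave equation with a $\cos(n\theta)$ expansion), which is precisely the statement that needs proof. Without the explicit formula above you cannot obtain the rates $\mathcal{D}^R=\mathcal{O}(R^{-2})$, $\mathcal{D}^\theta=\mathcal{O}(R^{-3})$ at infinity, nor the boundedness of $\mathcal{D}^R$ at the horizon, which requires the cancellation between the divergence $\tfrac{\ptl_R N}{N}\sim(1-R_+/R)^{-1}$ and the vanishing prefactor $N^2e^{-2\mbo{\nu}}(q_0)^{RR}\sim(1-R_+^2/R^2)^2$.

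A secondary but real problem is that your background asymptotics are wrong: for the Kerr orbit-space lapse one has $N=\Delta^{1/2}\sin\theta=\mathcal{O}(R)$ at spatial infinity and $N\to0$ at the axes, not $N\sim1$ with $\ptl_a N=\mathcal{O}(R^{-2})$. The combinations that actually enter, $N^2q^{RR}=N^2e^{-2\mbo{\nu}}$ and $N^2q^{\theta\theta}$, do turn out to be $\mathcal{O}(1)$ and $\mathcal{O}(R^{-2})$ respectively because the $R^2\sin^2\theta$ growth of $N^2$ is cancelled by $e^{-2\mbo{\nu}}$, but your stated reasons for subleadingness do not survive this correction and must be redone with the correct weights. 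The parity/Fourier argument for axis regularity is in the right spirit (it is consistent with $\mathcal{Y}^\theta\cot\theta$ being cosine-type), but it too only closes once $Y^t$ is pinned down by the gauge relation above.
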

\begin{proof}

In our gauge, we have 
\begin{align}
\ptl_t Y^t = - \frac{1}{N} Y^a\ptl_a N, \quad \forall t
\end{align}
in the asymptotic regions. 
Thus, 
\begin{align}
\ptl_b Y^t = - \ptl_b \left(\frac{1}{N} \int^t_{0} Y^a \ptl_a N dt' \right).
\end{align}
Now define a quantity $\mathcal{D}^a $ occuring in \eqref{pert-shift} as, 

\begin{align}
\mathcal{D}^a \fdg= N^2 q^{ab} \ptl_a Y^t
\intertext{then}
\mathcal{D}^a = - N^2 q^{ab}   \ptl_b \left(\frac{1}{N} \int^t_{0} Y^a \ptl_a N dt'\right). 
\end{align}

\iffalse
Recall that we have 

\begin{align}
	Y^t =& Y^t \Big \vert_{t=0} + \int^t_0 \frac{N'}{N} - \mathcal{Y}^a \frac{\ptl_a N}{N}
	\intertext{where}
\end{align}
\fi 
Recall that, 
\begin{subequations}
\begin{align}
	e^{2 \nu} =& \sin^2 \theta  \frac{(r^2 +a^2)^2 - a^2 \Delta \sin^2 \theta }{R^2} \\
	\frac{\ptl_R N}{N} =& \frac{1+ \frac{R^2_+}{R^2}}{R \left(1- \frac{R^2_+}{R^2} \right) } \sim (1-\frac{R_+}{R})^{-1} \quad \text{as} \quad R \to R_+ \notag\\ 
	&\to 0 \quad \text{as} \quad \mathcal{O} (\frac{1}{R}) \quad \text{as} \quad R  \to \infty,  \label{NR-asym}\\
	\frac{\ptl_\theta N}{N} =& \cot \theta  \label{Ntheta-asym}
		\intertext{and denote}
		\mathcal{Y}^a = \int^t_0 Y^a dt  .
\end{align}
\end{subequations}
Let us now compute the behaviour of $\mathcal{D}$ at various boundaries. We have the following expressions for the components of $\mathcal{D}.$ 
\begin{align}
	D^R  =& N^2 e^{-2 \mbo{\nu}} (q_0)^{RR} \ptl_R Y^t \notag \\
	%=& R^2 \sin^2 \theta \left( 1- \frac{R^2_+}{R^2}\right)^2 \cdot e^{-2 \mbo{\nu}} \cdot 1 \cdot \ptl_R Y^t \notag\\
	%=&  R^2 \sin^2 \theta \left( 1- \frac{R^2_+}{R^2}\right)^2 \cdot e^{-2 \mbo{\nu}}  \ptl_R \left( Y^t(t=0) - \mathcal{Y}^a \frac{\ptl_a N}{N} \right) \notag\\
	=&  \left( 1 - \frac{R^2_+}{R^2} \right)^2 \cdot \frac{R^4}{ ( (r^2+a^2)^2 -a^2 \Delta \sin^2 \theta)} \notag\\
	& \quad \ptl_R \left(Y^t(t=0) - \mathcal{Y}^R \frac{1+ \frac{R^2_+}{R^2}}{1-\frac{R^2_+}{R^2}} - \mathcal{Y}^\theta \cot \theta  \right) \notag\\
	& \to 0 \quad {as} \quad R \to \infty
	\intertext{at the rate of $\mathcal{O}(\frac{1}{R^2})$; and }
D^R	=& \mathcal{O}(1) \quad \text{as} \quad R \to R_+
\end{align}

Likewise, we have 
\begin{align}
	D^\theta =& N^2 e^{-2 \mbo{\nu}} (q_0)^{\theta \theta} \ptl_\theta Y^t \notag\\
	%=& N^2 e^{-2 \mbo{\nu}} \frac{1}{R^2} \ptl_\theta \left( Y^t (t=0) - \mathcal{Y}^R \frac{1 + \frac{R^2_+}{R^2}} {{1-\frac{R^2_+}{R^2}} }  - \mathcal{Y}^ \theta \cot \theta \right) \notag\\
	=&  (1- \frac{R^2_+}{R^2})^2 \cdot \frac{R^4}{ \left( (r^2+a^2)^2 -a^2 \Delta
		\sin^2 \theta  \right)}   \notag\\
	& \quad \ptl_\theta \left( Y^t (t=0) - \mathcal{Y}^R \frac{1 + \frac{R^2_+}{R^2}} {{1-\frac{R^2_+}{R^2}} }  - \mathcal{Y}^ \theta \cot \theta \right) 
	& \to 0 \quad \text{as} \quad R \to \infty 
	\intertext{at the rate of  $\mathcal{O}(\frac{1}{R^3})$}
	D^\theta   =& \mathcal{O} (1-\frac{R_+}{R}) \to 0 \quad \text{as} \quad R \to R_+.  
\end{align}
	\end{proof}
It now follows that the conjugate momenta, are given by, as follows,
\begin{align}
N \frac{p'_A}{\bar{\mu}_q} =& h_{AB} (U) \ptl_t \mathcal{L}_Y U'^A - h_{AB} (U) \mathcal{L}_{N'} \ptl_a U^b
\intertext{in the asymptotic regions}
=& h_{AB}(U) N^2 q^{ab} \ptl_b Y^t \ptl_a U^B \notag\\
=& h_{AB} (U) R^2 \sin^2 \theta (1-\frac{R^2_+}{R^2})^2 \cdot \frac{R^2}{ \sin^2 \theta ( ( r^2+a^2)^2-a^2 \Delta \sin^2 \theta)} \notag\\
& \cdot ( -\ptl_R U^B \ptl_R ( \mathcal{Y}^a \frac{\ptl_a N}{N})  - \frac{1}{R^2} \ptl_\theta U^B \ptl_\theta (\mathcal{Y}^a \frac{\ptl_a N}{N})) 
\end{align}
As a consequence, we can estimate the values of the conjugate momenta at various boundaries (cf. eqs \eqref{NR-asym}, \eqref{Ntheta-asym}). %If we choose the gauge \eqref{} on the target, we get explicitly

\begin{subequations}
\begin{align}
N \frac{p'^A}{ \bar{\mu}_q} =& \mathcal{O} (\frac{1}{R^4}) \quad \text{for large $R$}, \\
N \frac{p'^A}{\bar{\mu}_q} =& \mathcal{O} (1-\frac{R^2_+}{R^2} )\quad \text{for $R$ close to $R_+$}, \quad A= 1, 2. 
\end{align}
\end{subequations}
It can be verified that the decay rates and the boundary behaviour of the shift vector field agree  from  separate analysis using the momentum constraint. From the momentum constraint, we have, 

\begin{align}
- \leftexp{(q_0)}{\grad}_b \varrho^b_a + \halb p'_A \ptl_a U^A =0
\end{align}
after taking into account that the transverse-traceless tensors vanish for our geometry,  where, 
\begin{align}
\varrho^a_c = \bar{\mu}_{q_0} (\leftexp{(q_0)}{\grad}_c Y^a +  \leftexp{(q_0)}{\grad}^a Y_c - \delta^a_c \leftexp{(q_0)}{\grad}_b Y^b  )
\end{align}
which can be reduced to an elliptic equation for the shift vector.

Now, let us turn  our attention to the Hamiltonian constraint:  
\begin{align}
H =& \bar{\mu}^{-1}_{q_0} (e^{-2 \mbo{\nu}} \Vert \varrho \Vert^2_{q_0} - \halb \mbo{\tau}^2 e^{2 \mbo{\nu}} \bar{\mu}^2_{q_0} + \halb  p_A p^A) \notag\\
&+ \bar{\mu}_{q_0} (2  \Delta_0 \mbo{\nu} + \halb h_{AB} q_0^{ab} \ptl_a U^A \ptl_b U^B), \quad (\Sigma, q_0), 
\intertext{where}
\Delta_0 \mbo{\nu} \fdg=& \frac{1}{\bar{\mu}_{q_0}} \ptl_b(q^{ab}_0 \bar{\mu}_{q_0} \ptl_b \mbo{\nu}).
\end{align}
There are a few delicate aspects of the Hamiltonian constraint in our dynamical axisymmetric problem.  For the special case of the Kerr metric, 

\begin{align}
H =&  \bar{\mu}_{q_0} (2  \Delta_0 \mbo{\nu} + \halb h_{AB} q_0^{ab} \ptl_a U^A \ptl_b U^B), \quad (\Sigma, q_0).
\end{align}

If we consider  the quantity $\int H =0,$ it may be noted that the inner boundary term involves the $\ptl_R \mbo{\nu}$ at the horizon $\mathcal{H}^+$. This quantity vanishes for the Kerr black hole metric and we recover the positive mass theorem of Schoen-Yau \cite{schoen-yau-1, schoen-yau-2}. This is consistent with the (inner boundary) horizon being the minimal surface, which is the case with the Kerr black hole spacetime. 
Let us now analyze the linearized Hamiltonian constraint: 

\begin{align}
H' =&   \bar{\mu}_{q_0} (2 \Delta_0 \mbo{\nu}') + \halb \bar{\mu}_{q_0} \ptl_{U^C} h_{AB} q_0^{ab} \ptl_a U^A \ptl_b U^B U'^C \notag\\
&+ \bar{\mu}_{q_0} q_0^{ab} h_{AB} \ptl_a U'^A \ptl_b U^B , \quad (\Sigma)
\end{align}
which is an elliptic PDE. 
We would like to construct $\mbo{\nu}'$ such that:

\begin{proposition} \label{well-posedness-nu'}
	A boundary value problem for $\nu'$ admits a unique, regular and bounded solution that decays at the rate of $\mathcal{O}(\frac{1}{R})$ for large $R$. 
\end{proposition}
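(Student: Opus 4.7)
The plan is to recast the linearized Hamiltonian constraint as a Poisson equation for $\mbo{\nu}'$ with respect to the flat Laplacian $\Delta_0$ on $(\Sigma, q_0)$ and then invoke the method of images already developed above for $(\text{D-BVP})_{Y^R}$ and $(\text{D-BVP})_{Y^\theta}$. Concretely, dividing the constraint $H'=0$ through by $2\bar{\mu}_{q_0}$ and moving the wave-map terms to the right side yields
\begin{align*}
\Delta_0 \mbo{\nu}' = \mathcal{F}'[U,U'] \fdg = -\tfrac{1}{4} \ptl_{U^C} h_{AB}\, q_0^{ab} \ptl_a U^A \ptl_b U^B U'^C - \tfrac{1}{2} q_0^{ab} h_{AB}(U) \ptl_a U'^A \ptl_b U^B,
\end{align*}
which I will supplement with a Dirichlet-type condition $\mbo{\nu}'|_{\mathcal{H}^+} = \mbo{\nu}'_{\mathcal{H}^+}$ coming from the Weyl--Papapetrou boundary data inherited from the harmonic-gauge evolution, together with the axis regularity $\ptl_\theta \mbo{\nu}'|_\Gamma = 0$ enforced by the axisymmetry of $U,U'$.

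First I would check that $\mathcal{F}'$ is a \emph{bona fide} source, i.e.\ $\mathcal{F}' \in C^\infty(\Sigma)$ with sufficient decay for large $R$. Smoothness follows from the harmonic-gauge propagation proposition: $\bar{\gg}'$ is globally $C^\infty$ in the domain of outer communications, and the diffeomorphism into the Weyl--Papapetrou gauge preserves smoothness away from $\Gamma,\iota^0,\mathcal{H}^+$, while on the axes the asymptotics $U'^A \sim \mathcal{L}_Y U^A$ and the established $\sin\theta$-parity of $\ptl_\theta U^A$ guarantee regularity there as well. The decay in $R$ is inherited from the rates already listed for $\ptl_R U^A, \ptl_\theta U^A, U'^A, \ptl U'^A$, giving $\mathcal{F}' = O(R^{-3})$ asymptotically, which is integrable over $(\Sigma, q_0)$ in two dimensions.

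Next I would reflect the configuration across the axes $\{\bar{\rho} = 0\}$ exactly as was done for the conformal Killing vector components: axisymmetry and the Neumann condition $\ptl_\theta \mbo{\nu}' = 0$ on $\Gamma$ show that the natural extension of $\mathcal{F}'$ through the image is \emph{reflection antisymmetric across $\mathcal{H}^+$ and even across $\Gamma$}, so that the effective Dirichlet problem is posed on the complement of a full disk in the full plane. Applying the general representation formula \eqref{gen-Poisson-representation-formula} with the corresponding Green's function $K_{\mbo{\nu}'}$ (built from the free-plane fundamental solution plus its image) and decomposing $\mbo{\nu}'$ in Fourier modes in $\theta$ as was done for $Y^\theta, Y^R/R$, the mode-by-mode analysis yields the fundamental system $\{R^n, R^{-n}\}$ for $n\geq 1$ and $\{1,\log R\}$ for $n=0$. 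Uniqueness at each mode follows from the nonvanishing Wronskians already computed; the radial integral together with antisymmetry of the image source kills the logarithmic branch, leaving $\mbo{\nu}' \sim R^{-1}$ at worst.

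The main obstacle is precisely the zero-mode: ruling out the $\log R$ growth amounts to showing that the total ``charge'' $\int_\Sigma \mathcal{F}'\, \bar{\mu}_{q_0}$ (in the doubled domain) vanishes. This is exactly the cancellation mechanism that the paper has been emphasizing throughout, and the cleanest way to establish it is to use that $\int_\Sigma H' = 0$ is the linearization, at the background Kerr data, of the ADM mass-type Brill integral, combined with the fact that on the Kerr background the horizon boundary term $\oint_{\mathcal{H}^+}\ptl_R \mbo{\nu}$ vanishes because $\mathcal{H}^+$ is a minimal surface (the observation made in the discussion of Schoen--Yau just above the proposition). Linearizing this vanishing and using the propagation of the linearized constraints in harmonic gauge transfers it to $\mbo{\nu}'$, furnishing the required vanishing total source. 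Once that is in hand, standard elliptic regularity theory for Poisson equations in Lipschitz domains upgrades the solution to $C^\infty(\Sigma)$ with the claimed $\mathcal{O}(R^{-1})$ decay, and uniqueness is immediate from the Dirichlet data on $\mathcal{H}^+$ together with the decay at $\iota^0$.
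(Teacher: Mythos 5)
Your overall skeleton (recast $H'=0$ as a flat Poisson equation for $\mbo{\nu}'$, check regularity and decay of the source, use the method of images and a Fourier decomposition in $\theta$ to kill the logarithmic zero mode, and conclude $\mathcal{O}(1/R)$ decay) does match the paper's strategy, but there are two genuine gaps. First, you have assigned the boundary conditions to the wrong boundaries. The paper derives a \emph{Neumann} condition $\ptl_{\mbo{n}}\mbo{\nu}'=0$ at the horizon from preservation of the minimal-surface condition for $\mathcal{H}^+$ (via the mean curvature formula $H_{\Sigma}=\vert\Phi\vert^{1/2}e^{-\mbo{\nu}}(\tfrac{1}{R}+\ptl_R\mbo{\nu})$), and a \emph{Dirichlet} condition $\mbo{\nu}'=2\gamma'$ at the axes from the structure $\mbo{\nu}=2\gamma-2\log\rho$ near $\Gamma$. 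Your Dirichlet datum at $\mathcal{H}^+$ is left unjustified (nothing in the harmonic-gauge evolution hands you a controlled trace of $\mbo{\nu}'$ there), and your axis condition $\ptl_\theta\mbo{\nu}'\vert_\Gamma=0$ ``from axisymmetry'' is precisely the assertion that cannot be taken for granted.

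Second, and more importantly, you skip the central difficulty the paper isolates: regularity of $\mbo{\nu}'$ at the axes requires \emph{both} $\mbo{\nu}'=2\gamma'$ and $\ptl_{\mbo{n}}\mbo{\nu}'=0$ there, an apparently over-determined (hence generically ill-posed) data set, and at the corners $\Gamma\cap\mathcal{H}^+$ the naive scalar transformation law $\mbo{\nu}'=Y^a\ptl_a\mbo{\nu}$ would give $\ptl_R\mbo{\nu}'=\ptl_RY^a\ptl_a\mbo{\nu}\neq 0$, contradicting the horizon Neumann condition. The paper resolves this with the key lemma that $\mbo{\nu}'$ does \emph{not} transform like a scalar but acquires the harmonic correction $\tfrac{1}{2}\leftexp{(q_0)}{\grad}_aY^a$; this restores the compatibility of the two axis conditions and the regularity at the corners, and only then is the mixed problem well posed, which the paper solves by splitting $\mbo{\nu}'=\mbo{\nu}'_D+\mbo{\nu}'_N$ with the Dirichlet-to-Neumann map supplying the Neumann datum at $\mathcal{H}^+$. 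Your substitute mechanism for killing the $\log R$ branch --- a vanishing total charge deduced by ``linearizing'' the Schoen--Yau/Brill identity and ``transferring'' it to $\mbo{\nu}'$ --- is only a sketch, not an argument, whereas the paper obtains the $\mathcal{O}(1/R)$ decay from the reflection antisymmetry of the image data in the representation formula \eqref{gen-Poisson-representation-formula}. Without the non-scalar transformation lemma and the correct assignment of boundary data, the boundary value problem you pose is not the one whose well-posedness the proposition asserts.
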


In order to set up a well-posed boundary value problem for this elliptic PDE, we need to specify appropriate boundary conditions. 

Suppose, $\Sigma$ is a 2-surface  that embeds into the  Cauchy hypersurface of the Kerr metric $(\Sigma \hookrightarrow) \olin{\Sigma}$ , the Gauss curvature of $\Sigma$  is given by: 

\begin{align}
K (\Sigma) = - \frac{1}{e^{2 \mbo{\nu}}} \Delta_0 \mbo{\nu}
\intertext{and the mean curvature $H$ of $\Sigma \hookrightarrow \olin{\Sigma}$ }
H_\Sigma = 0
\end{align}
on account of the fact that the inner boundary of $\Sigma$ is a minimal surface. In the perturbative theory, we need to find an expression of the mean curvature. 
\begin{align}
\vec{H}_{\Sigma}  = H_{\Sigma} \mbo{n}, \quad \text{mean curvature vector, $\Sigma \hookrightarrow \olin{\Sigma}$ }.
\end{align}
For the conformal transformation $q= \Omega^2 q_0,$ we have 

\begin{align}
H_\Sigma = \frac{1}{\Omega}  \left(H_0 + \frac{2}{\Omega} \frac{ \ptl \Omega}{\ptl \mbo{n}} \right).
\end{align}
It then follows that, 
\begin{align}
H_{\Sigma} =   \vert \Phi  \vert^{1/2} e^{-\mbo{\nu}} \left( \frac{1}{R} + \ptl_R \mbo{\nu}  
\right).
\end{align}
It may be noted that the preservation of the minimal surface condition at the inner boundary horizon $\mathcal{H}^+$ implies a Neumann boundary condition for $\mbo{\nu}'$ at the horizon $\mathcal{H}^+.$
Let us now turn to the issue of regularity of $\mbo{\nu}'$ at the axes $\Gamma.$ Firstly note that, from the form of the Kerr metric, the quantity $\mbo{\nu}'$ has the form 

\begin{align}
\mbo{\nu} = 2\gamma + c - 2 \log \rho  
\end{align} 

in the Weyl-Papapetrou gauge, near the axes. For the reasons of regularity of the Kerr metric at the axes, we have $c \equiv 0.$ Thus, 

\begin{align}
\mbo{\nu} = 2 \gamma - 2 \log \rho. 
\end{align}

Consequently, we have the condition 

\begin{align} \label{gamma-nu condition}
\mbo{\nu}' = 2 \gamma'
\end{align}
for the linear perturbation theory, thus suggesting the Dirichlet boundary conditions for $\mbo{\nu}'$ at the axes.

%where $\Delta_0$ is the flat-space Laplacian
%\begin{align}
%\Delta_0 = \frac{1}{R} \ptl_R (R \ptl_R) + \frac{1}{R^2}\ptl^2_{\theta} =  \ptl^2 _{\r \r} +  \ptl^2_{zz}
%\end{align} 
\begin{equation}
\left.\begin{aligned}
\Delta_0 \mbo{\nu}' =&\frac{1}{4} \ptl_{U^C} h_{AB}(U) q^{ab}_{0} \ptl_a U^A U'^C\\
& \quad + q_{ab} h_{AB} \ptl_a U^a \ptl_b U^B \quad \text{on}
& \quad (\Sigma)\\
\mbo{\nu}' =& 2\gamma', \quad \text{on}& \quad (\Gamma) \\
\ptl_n \mbo{\nu}' =&0 \quad \text{on}& \quad (\mathcal{H}^{+})
\end{aligned}
\qquad \right\}
\qquad \text{(M-BVP)}_{\mbo{\nu'}}
\end{equation}

It must be pointed out that for the regularity of $\mbo{\nu'}$ itself, at the axes, we need to impose the condition $\ptl_{\mbo{n}} \mbo{\nu}'=0$ at the axes. A priori, the apparent over-determined boundary data $-$both Dirichlet and Neumann$-$ at the axes is a significant issue for the well-posedness of the boundary value problem for $\mbo{\nu}'$. It may be recalled that an elliptic problem with both Dirichet and Neumann data, specified at the same boundary, is typically ill posed. Importantly, we can prove that the boundary conditions $\mbo{\nu}' = 2 \gamma'$ and $\ptl_{\mbo{n}} \mbo{\nu}'=0$ are equivalent. In other words, imposition of one condition automatically satisfies the other and vice versa.  This saves us from the aforementioned issue.

However, we should ask a more fundamental question: Can the set-up of our problem result in a well-posed boundary value problem for $\mbo{\nu}'$ (as proposed in Proposition \ref{well-posedness-nu'}) in the first place?  and what about the regularity of $\mbo{\nu}'$ at the corners $\Gamma \cap \mathcal{H}^+$? 

Suppose, $\mbo{\nu}' = 2 \gamma',$ then a computation shows that, 
\begin{align}
\ptl_{\mbo{n}} \mbo{\nu}' =  \ptl_R \mbo{\nu}'=  2 \ptl_R \gamma' =0
\end{align} 
at the corners, in a limiting sense. On the other hand, suppose $\mbo{\nu}'$ were a scalar, then $\mbo{\nu}' = (\mbo{\nu}')_{\text{HG}} + Y^a \ptl_a \nu = Y^a \ptl_a \mbo{\nu}$ in the asymptotic regions. Then a computation shows that 

\begin{align}
\ptl_R \mbo{\nu}' = \ptl_R Y^a \ptl_a \mbo{\nu} \neq 0.
\end{align}
In the above, there is an inconsistency for two reasons. Firstly, the quantity $\mbo{\nu}'$ is not regular at the corners. Secondly, the Neuman boundary condition at the horizon is not satisfied at the end points, the corners. The following lemma is crucial and comes to our rescue. In this lemma we show that $\mbo{\nu}'$ does not transform like a scalar.  %This outcome is perhaps attributed to the fact that $\mbo{\nu}$ occurs as both the metric function of $(\Sigma, q)$ and also in the Jacobian.  

\begin{lemma}
	Suppose the quantity $\mbo{\nu}'$ is compactly supported in the harmonic gauge, then
	\begin{enumerate}
	  \item $\mbo{\nu}'$ has the following structure 
	\begin{align}
	\mbo{\nu}' = Y^a \ptl_a \mbo{\nu} + \halb \leftexp{(q_0)}{\grad}_a Y^a \label{gauge-transform-nu'}
	\end{align}
	in the asymptotic regions. 
	\item  Furthermore, $\mbo{\nu}'$ is regular at the corners $\Gamma \cap \mathcal{H}^+$
	\end{enumerate}
\end{lemma}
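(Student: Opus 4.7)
The plan is to read off the structural formula \eqref{gauge-transform-nu'} by tracking how the area form $\bar{\mu}_q$ transforms under the pure-gauge contribution in the asymptotic region, and then to combine the Fourier-mode behaviour of $Y^a$ established in \eqref{Y-theta-asym}--\eqref{Y-R-asym} with the explicit form of the Kerr conformal factor $\mbo{\nu}$ to control the corner $\Gamma \cap \mathcal{H}^+$.

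For part (1), since $\bgg'$ is assumed compactly supported in harmonic gauge away from the boundaries, in the asymptotic region the orbit-space perturbation reduces to the pure gauge piece $q'_{ab} = (\mathcal{L}_Y q)_{ab}$, and by the previous lemma $Y$ is a conformal Killing field of $q$ there. Linearising the isothermal gauge $q = e^{2\mbo{\nu}} q_0$ with fixed flat $q_0$ gives $\bar{\mu}'_q = 2 \mbo{\nu}' \bar{\mu}_q$, while directly $\mathcal{L}_Y \bar{\mu}_q = \ptl_c(Y^c \bar{\mu}_q) = \leftexp{(q)}{\grad}_c Y^c \, \bar{\mu}_q$. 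Equating these and rewriting the $q$-divergence using $\bar{\mu}_q = e^{2\mbo{\nu}} \bar{\mu}_{q_0}$, one gets $\leftexp{(q)}{\grad}_c Y^c = 2 Y^c \ptl_c \mbo{\nu} + \leftexp{(q_0)}{\grad}_c Y^c$, which is exactly \eqref{gauge-transform-nu'}. The extra half-divergence $\halb \leftexp{(q_0)}{\grad}_c Y^c$ beyond the scalar transport piece $Y^c \ptl_c \mbo{\nu}$ is precisely the Jacobian contribution missing from the naive scalar transformation law, and it is this term that will reconcile the Dirichlet axis datum with the Neumann horizon datum.

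For part (2), I would substitute \eqref{gauge-transform-nu'} into the axis regularity conditions $Y^\theta|_\Gamma = 0$, $\ptl_\theta Y^R|_\Gamma = 0$ and into the horizon asymptotics $Y^\theta_n \to 0$ as $R \to R_+$ from \eqref{Y-theta-asym}, and evaluate $\mbo{\nu}'$ and $\ptl_R \mbo{\nu}'$ separately on each of $\Gamma$ and $\mathcal{H}^+$, matching the two one-sided limits at the corner. In polar coordinates $\leftexp{(q_0)}{\grad}_c Y^c = \ptl_R Y^R + Y^R/R + \ptl_\theta Y^\theta$, and the conformal-Killing character of $Y$ in the asymptotic region couples this divergence to the modal structure of $Y^R, Y^\theta$ in a controlled way. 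The main obstacle will be the behaviour of the zero mode $Y^R_0(R,t)$ (flagged in the paragraph following \eqref{Y-R-asym}) combined with the $-2 \log \rho$ singularity of $\mbo{\nu}$ in the Weyl-Papapetrou representation near $\Gamma$. I would handle this by expanding $Y^R$ and $Y^\theta$ in the doubled-domain Fourier basis around $\Gamma \cap \mathcal{H}^+$ and verifying that $\ptl_R Y^R + Y^R/R + \ptl_\theta Y^\theta$ combines with $Y^R \ptl_R \mbo{\nu} + Y^\theta \ptl_\theta \mbo{\nu}$ in \eqref{gauge-transform-nu'} to absorb the $1/\rho$ singularity, producing a finite two-sided limit that agrees with the Dirichlet value $2\gamma'$ along $\Gamma$ and whose normal derivative vanishes along $\mathcal{H}^+$ up to and including the corner.
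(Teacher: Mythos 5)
Your proposal is correct and follows essentially the same route as the paper: part (1) is obtained exactly as in the text, by linearizing the area element $\bar{\mu}_q = e^{2\mbo{\nu}}\bar{\mu}_{q_0}$ with $q_0$ held fixed and identifying the pure-gauge contribution $\mathcal{L}_Y\bar{\mu}_q$, which yields the extra $\halb\leftexp{(q_0)}{\grad}_aY^a$ beyond the naive scalar transport term. For part (2) the paper likewise reduces the matter to an explicit verification using the axis/horizon behaviour of $Y$, its only additional structural remark being that the conformal Killing property makes the correction term $\leftexp{(q_0)}{\grad}_aY^a$ harmonic in the asymptotic regions --- a fact your modal expansion in $R^{\pm n}$ uses implicitly.
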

\begin{proof}
	Consider $D \cdot \bar{\mu}_q$ in a conformally flat form, we have 
	\begin{align}
	D \cdot \bar{\mu}_q = e^{2 \mbo{\nu}} D \cdot \bar{\mu}_{q_0} + 2 e^{2 \mbo{\nu}} \mbo{\nu}' \bar{\mu}_{q_0} 
	\end{align}
	Now, then for our choice of the gauge condition, where we hold the flat metric $q_0$ fixed, 
	\begin{subequations}
	\begin{align}
	2 e^{2 \mbo{\nu}} \mbo{\nu}' \bar{\mu}_{q_0} =& (\mbo{\nu}')_{\text{HG}} + \mathcal{L}_Y \bar{\mu}_q  \notag\\
\intertext{in the asymptotic regions}
	=& 2 e^{2 \mbo{\nu}} Y^a \ptl_a \mbo{\nu} \bar{\mu}_{q_0} + e^{2 \mbo{\nu}} \ptl_a \bar{\mu}_{q_0} + e^{2 \mbo{\nu}} \ptl_a Y^a \bar{\mu}_{q_0} 
	\intertext{the right hand side can be expressed equivalently as}
	=& 2 e^{2 \mbo{\nu}} Y^a \ptl_a \mbo{\nu} \bar{\mu}_{q_0} + e^{2 \mbo{\nu}} \mathcal{L}_Y \bar{\mu}_{q_0} \\
	=&  2 e^{2 \mbo{\nu}} Y^a \ptl_a \mbo{\nu} \bar{\mu}_{q_0} + e^{2 \mbo{\nu}} \ptl_a (Y^a \bar{\mu}_{q_0}) \\
	=& 2 e^{2 \mbo{\nu}} Y^a \ptl_a \mbo{\nu} \bar{\mu}_{q_0} + e^{2 \mbo{\nu}} \bar{\mu}_{q_0} \leftexp{(q_0)}{\grad}_a Y^a
	\end{align} 
	\end{subequations}
	The result \eqref{gauge-transform-nu'} follows. 
	Now let us show that the `correction term' is harmonic in the asymptotic regions. We have, 
	\begin{align}
	\leftexp{(q_0)}{\grad}_a  \leftexp{(q_0)}{\grad}_b Y_c - \leftexp{(q_0)}{\grad}_b \leftexp{(q_0)}{\grad}_a Y_c = \leftexp{(q_0)} R^{\quad d}_{abc} Y_d 
	\end{align}
	on account of the fact that $Y$ is a conformal Killing vector field in the asymptotic regions, $\text{CK}(Y, q_0)=0$, it follows that 
	\begin{align}
	\leftexp{(q_0)}{\grad}^a \leftexp{(q_0)}{\grad}_a ( \leftexp{(q_0)}{\grad}_c Y^c ) =0.
	\end{align}
	
	With the correction term in \eqref{gauge-transform-nu'}, the assertion (2) can be also be verified explicitly.  It can also be verified that the Neumann boundary condition for $\mbo{\nu}'$ at the horizon $\mathcal{H}^+$ is also satisfied. 
\end{proof}

Firstly, we note that the mixed boundary value problem, with regular boundary data, is well-posed. 
\iffalse
\begin{claim}
	the solution is unique
\end{claim}
\fi 
We would like to remark that `uniqueness upto a constant' which is usually the case for a Neumann boundary value problem is not sufficient for our problem because we need the decay of $\mbo{\nu}'$ (consistent with the regularity on the axes) to establish the needed boundary behaviour of our fields. 
In view of the uniqueness, we claim that the solution to the mixed boundary value problem above can be decomposed into the following two parts: solution for a inhomogeneous Dirichlet boundary value problem $\mbo{\nu'}_D$ and a homogeneous Neumann problem $\mbo{\nu}'_N$: 

\begin{align}
\mbo{\nu}' = \mbo{\nu}'_D + \mbo{\nu}'_N
\end{align} 
represented in the same domain. In the following, we shall elaborate on our construction that ensures that the needed boundary conditions for $\mbo{\nu}'$ are satisfied. We specify the (regular) data for Dirichlet problem for $\mbo{\nu}'_D$ in a half plane such that it is consistent with the data for $\mbo{\nu}'$. This problem is well-posed on account of the regularity conditions of the source (involves $U'$, $h(U)$, $h'(U)$ ) and admits a regular solution. We shall use this to obtain a suitable choice of data for the Neumann problem for $\mbo{\nu}'_N$ at the horizon, so that the total sum $\mbo{\nu}'_D + \mbo{\nu}'_N$ satisfies the needed Neumann boundary condition for $\mbo{\nu}'.$ The well-posedness and regularity for $\mbo{\nu}'$ at the axes implies that it should vanish on the axes, thus implying that $\mbo{\nu}'$ satisfies the needed Dirichlet condition on the axes. This construction also ensures that the regularity of $\mbo{\nu}'$ at the corners is also satisfied. Consider, 

\begin{equation} \label{Dirichlet-nu'}
\left.\begin{aligned}
\Delta_0 \mbo{\nu}_D' =&\frac{1}{4} \ptl_{U^C} h_{AB}(U) q^{ab}_{0} \ptl_a U^A U'^C\\
& \quad + q_{ab} h_{AB} \ptl_a U^a \ptl_b U^B \quad \text{on}
& \quad (\Sigma)\\
\mbo{\nu}' =&2 \gamma', \quad \text{on}& \quad (\Gamma) \\
\mbo{\nu}' =&2 \gamma' \quad \text{on}& \quad (\{ \rho=0\} \setminus \Gamma)
\end{aligned}
\qquad \right\}
\qquad \text{(D-BVP)}_{\mbo{\nu'}}
\end{equation}
It may be noted that, for our construction, the data at the `cut' between the axes $\{ \rho =0\} \setminus \Gamma$  can be specified arbitrarily so that its smooth but for convenience we choose $ \mbo{\nu}'_D = 2 \gamma'.$ It follows that well posed Dirichlet boundary value problem \eqref{Dirichlet-nu'} is well posed and that there exists a regular solution. We shall use this to construct Neumann data for $\mbo{\nu}'_N$. Note that we can compute $\ptl_{\mbo{n}} \mbo{\nu}'_D$ at the horizon (boundary) using the 'Dirichlet to Neumann map' 

   %whose  explicit representation can be deduced from \eqref{gen-Poisson-representation-formula}.  It may be noted that, effectively, `Dirichlet data' 

\begin{align}
\Lambda \fdg \mbo{\nu'} \to \ptl_{\mbo{n}} \mbo{\nu}'
\end{align}
 and the conformal inversion map discussed previously. We then formulate the homogeneous Neumann boundary value problem
\begin{equation}
\left.\begin{aligned}
\Delta_0 \mbo{\nu}_N' =&0 \quad \text{on}
& \quad (\Sigma)\\
\ptl_n \mbo{\nu}'_N =&0, \quad \text{on}& \quad (\Gamma) \\
\ptl_n \mbo{\nu}'_N =& \ptl_{\mbo{n}} \mbo{\nu}'_D \big\vert_{\mathcal{H}^+} \quad \text{on}& \quad (\mathcal{H}^{+})
\end{aligned}
\qquad \right\}
\qquad \text{(N-BVP)}_{\mbo{\nu'}}
\end{equation}
It follows from the use of the representation formulas \eqref{gen-Poisson-representation-formula} for both Dirichlet and Neumann problems, that $\mbo{\nu}'$ decays at the rate of $\mathcal{O}(\frac{1}{R})$ for large $R.$

\begin{proposition}
	The integral invariant quantity $Y_0(\mathcal{H}^+)$   at the horizon is finite  for all times if and only if it vanishes
\end{proposition}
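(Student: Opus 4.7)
The plan is to exploit the explicit first-order transport solution for the zeroth Fourier mode,
\[
Y^R_0(R,t) \;=\; \frac{R}{R_+}\, Y_0(\mathcal{H}^+) \;+\; \int_{R_+}^{R}\frac{I_0(R',t)}{R'}\,dR',
\]
together with the asymptotic flatness requirement that $Y^R/R \to 0$ as $R \to \infty$. Here I identify $Y_0(\mathcal{H}^+)$ with the boundary value $Y^R_0(R_+,t)$, which is the constant of integration singled out by the radial ODE for the $n=0$ mode.

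The first step is to establish that $Y_0(\mathcal{H}^+)$ is genuinely an \emph{integral invariant}, i.e.\ it does not drift in $t$. I would differentiate the transport relation in $t$, use the harmonic-gauge propagation of the constraints (established in the previous section, giving $F_\nu \equiv 0$, $H' \equiv 0$, $H'_i \equiv 0$), and the regularity of the shift vector $N'^a$ at $\mathcal{H}^+$ (which was shown to satisfy $N'^{\theta} = 0$ and $N'^R = \mathcal{O}(1)$ at the horizon). These combine to show that the only possible time-change of $Y_0(\mathcal{H}^+)$ would come from fluxes of $I_0$ at $\mathcal{H}^+$, and that these fluxes vanish because $I_0$ is built from gauge-invariant combinations of $\bar{g}'$ that are controlled by the compactly supported (or appropriately decaying) harmonic-gauge data.

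The second step is the dichotomy itself. Rewriting the transport formula as
\[
\frac{Y^R_0(R,t)}{R} \;=\; \frac{Y_0(\mathcal{H}^+)}{R_+} \;+\; \frac{1}{R}\int_{R_+}^{R}\frac{I_0(R',t)}{R'}\,dR',
\]
the asymptotic decay of $I_0$ (inherited from the $\mathcal{M}^{RR}$, $\mathcal{M}^{R\theta}$ behaviour together with asymptotic flatness of the perturbation) forces the integral term to be $o(1)$ as $R \to \infty$. The asymptotic condition $Y^R_0/R \to 0$ then yields $Y_0(\mathcal{H}^+)/R_+ = 0$, so $Y_0(\mathcal{H}^+)=0$. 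The converse direction is immediate: if $Y_0(\mathcal{H}^+)=0$ then it is obviously finite. Hence the only two alternatives are vanishing or blow-up in $R$, which in the invariant sense corresponds to being finite iff vanishing.

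The main obstacle I expect is the rigorous verification of the time-invariance. In particular, one must check that the $n=0$ decomposition of the compatibility Poisson equation for $Y^R/R$ does not acquire a time-dependent source at the horizon once the Weyl-Papapetrou gauge is updated from the harmonic evolution via the abelian group of gauge transformations. An alternative, more direct route is to absorb the invariance into the convergence statement for the improper integral $\int_{R_+}^{\infty} I_0(R',t)/R'\,dR'$: convergence of this integral is itself a time-invariant property under the harmonic-gauge propagation, and the dichotomy then reduces to the trivial observation that the coefficient of the linearly growing piece must vanish for any admissible gauge transformation.
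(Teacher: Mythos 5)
There is a genuine gap, and it sits exactly where your argument leans on an unproved decay assumption for the zero--mode source. The paper's proof identifies $Y_0(\mathcal{H}^+)$ with $Y^R_0(R_+,t)$ and writes it, via the transport equation and the decay condition $Y^R_0/R\to 0$, as the improper integral
\begin{align}
Y^R_0(R_+) \;\propto\; \int_{R_+}^{\infty}\frac{1}{R'}\Bigl(\bigl\{\mbo{q}'_0(\ptl_R,\ptl_R)\bigr\}_0-\tfrac{1}{R'^2}\bigl\{\mbo{q}'_0(\ptl_\theta,\ptl_\theta)\bigr\}_0\Bigr)\,dR' ,
\end{align}
and then invokes the \emph{linearized flatness condition} $D\cdot R'_{\mbo{q}_0}=0$ restricted to the $n=0$ Fourier mode. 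In polar coordinates this constraint collapses to the statement that
\begin{align}
\ptl_R\Bigl(\bigl\{\mbo{q}'_0(\ptl_R,\ptl_R)\bigr\}_0-\ptl_R\bigl(\tfrac{1}{R}\{\mbo{q}'_0(\ptl_\theta,\ptl_\theta)\}_0\bigr)\Bigr)=0,
\end{align}
i.e.\ the relevant zero--mode combination is \emph{constant in $R$} (up to an exact $R$--derivative whose integral converges). Consequently the integral above is $\int_{R_+}^{\infty}\tfrac{c}{R'}\,dR'$ plus a convergent piece: it diverges logarithmically unless $c=0$, and when $c=0$ the quantity vanishes. That constancy-in-$R$ is the entire content of the proposition. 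Your proposal instead asserts that $I_0$ decays ``inherited from the $\mathcal{M}^{RR}$, $\mathcal{M}^{R\theta}$ behaviour together with asymptotic flatness,'' which is precisely what the constraint shows is \emph{not} the case for the $n=0$ mode. Worse, if your decay assumption did hold, $\int_{R_+}^{\infty}I_0/R'\,dR'$ would converge to some generically nonzero limit $L$, the condition $Y^R_0/R\to 0$ would give $Y_0(\mathcal{H}^+)=-R_+L$, and you would obtain a quantity that is finite but nonzero --- the ``finite if and only if vanishing'' dichotomy would fail. So the assumption is not merely unjustified; it is incompatible with the statement you are trying to prove.

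Two smaller points. First, your step one (time-invariance via harmonic-gauge constraint propagation) is machinery the paper does not need: the flatness constraint holds on each slice $\Sigma_t$, so the argument is already uniform in $t$. Second, your ``more direct route'' at the end reduces the claim to ``the coefficient of the linearly growing piece must vanish,'' but that coefficient is $Y_0(\mathcal{H}^+)/R_+ + \int_{R_+}^{\infty}I_0/R'\,dR'$, and setting it to zero only expresses $Y_0(\mathcal{H}^+)$ in terms of the integral --- it does not force either to vanish. To repair the proof you must insert the Fourier decomposition of $\mbo{q}'_0$, restrict the preserved-flatness equation to the zeroth mode, and observe that it turns the integrand into a pure $R$--divergence of the form $\ptl_R(\cdot)=0$; the logarithmic divergence of $\int dR'/R'$ then does the rest.
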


\begin{proof}
	Recall the expression for $Y_0^R (R_+)$
	\begin{align}
	Y^R_0 (R_+) =&  \frac{R_+}{ 2 \pi} \int^\infty_{R_+}  \frac{1}{R'^2} \int^{2 \pi}_{0} \frac{R'}{2} e^{2 \gamma - 2 \mbo{\nu}} \left( \bgg' ( \ptl_R, \ptl_R ) - \frac{1}{R'^2}  \bgg' (\ptl_\theta, \ptl_\theta) \right) d \theta dR'
	\intertext{which can be reexpresed as}
	=& \frac{R_+}{ 2 \pi} \int^\infty_{R_+}  \frac{1}{R'^2} \int^{2 \pi}_{0} \frac{R'}{2} \left( \mbo{q}'_0 ( \ptl_R, \ptl_R ) - \frac{1}{R'^2}  \mbo{q}'_0 (\ptl_\theta, \ptl_\theta) \right) d \theta dR' \label{integral-invariant-expression}
		\end{align}
		where $q_0$ is the metric perturbation in harmonic gauge.

The quantities, $\mbo{q}'_0 ( \ptl_R, \ptl_R), \mbo{q}'_0 ( \ptl_R, \ptl_\theta)$ and $ \mbo{q}'_0 (\ptl_\theta, \ptl_\theta)$ admit the decomposition, 

\begin{align}
\mbo{q}'_0 ( \ptl_R, \ptl_R) =& \sum^{\infty}_{n=0} \Big \{ \mbo{q}'_0 ( \ptl_R, \ptl_R)  \Big\}_n  \cos n \theta \\
\mbo{q}'_0 ( \ptl_R, \ptl_\theta) =&  \sum^{\infty}_{n=1} \Big \{ \mbo{q}'_0 ( \ptl_R, \ptl_\theta)  \Big\}_n  \sin n \theta  \\
\mbo{q}'_0 ( \ptl_\theta, \ptl_\theta)=& \sum^{\infty}_{n=0} \Big \{ \mbo{q}'_0 ( \ptl_\theta, \ptl_\theta)  \Big\}_n  \cos n \theta 
\end{align}
Now then \eqref{integral-invariant-expression} can be simplified as 

\begin{align}
Y^R_0 (R_+) =& \frac{R_+}{2} \int^\infty_{R_+} \frac{1}{R'} \left(  \Big\{ \mbo{q}'_0 (\ptl_R, \ptl_R ) \Big\}_0 (R') - \frac{1}{R'^2} \Big \{ \mbo{q}'_0 (\ptl_\theta, \ptl_\theta ) \Big\}_0 (R') \right) dR'
\notag\\ 
=&  \frac{R_+}{2} \int^\infty_{R_+} \frac{1}{R'} \left(  \mbo{q}'_0 (\ptl_R, \ptl_R ) \Big\}_0 (R') - \ptl_{R'} \left( \frac{1}{R'} \Big\{  \mbo{q}'_0 (\ptl_\theta, \ptl_\theta ) \Big\}_0 (R')  \right)   \right)
dR' \notag\\
&\quad +\frac{R_+}{2} \int^\infty_{R_+} \ptl_{R'}  \left( \frac{1}{R'^2}  \Big \{ \mbo{q}'_0 (\ptl_\theta, \ptl_\theta ) \Big\}_0 (R') \right) dR' \label{integral-invariant-II}
\end{align}

Now consider \eqref{flatness-condition} and note that the operator on the LHS is a linear differential operator. In polar coordinates, 

\begin{align}
& -\frac{1}{R} \ptl^2_\theta \mbo{q}_0 (\ptl_R, \ptl_ R) + \ptl_R \mbo{q}_0 ( \ptl_R, \ptl_R) + \frac{2}{R} \ptl^2 R \theta \mbo{q}_0 (\ptl_R, \ptl_ \theta ) - \frac{2}{R^3} \mbo{q}_0 ( \ptl_\theta, \ptl_\theta)  \notag\\
& + \frac{2}{R^2} \ptl_R \mbo{q}'_0 ( \ptl_\theta, \ptl_\theta) - \frac{1}{R} \ptl^2_{RR} \mbo{q}_0 ( \ptl_\theta, \ptl_\theta) =0.
\end{align}

It follows that, 

\begin{align}
&\ptl_R  \Big \{ \mbo{q}'_0 (\ptl_R, \ptl_R) \Big\}_0 - \frac{2}{R^3}  \Big \{ \mbo{q}'_0 (\ptl_\theta, \ptl_\theta ) \Big\}_0  \notag\\ 
& \quad + \frac{2}{R^2} \ptl_R  \Big \{ \mbo{q}'_0 (\ptl_\theta, \ptl_\theta ) \Big\}_0 - 
\frac{1}{R} \ptl^2_R  \Big \{ \mbo{q}'_0 (\ptl_\theta, \ptl_\theta ) \Big\}_0  =0
\end{align}

which is a pure divergence, 

\begin{align}
\ptl_R \left(   \Big \{ \mbo{q}'_0 (\ptl_R, \ptl_R) \Big\}_0 - \ptl_R \left(  \frac{1}{R}  \Big \{ \mbo{q}'_0 (\ptl_\theta, \ptl_\theta ) \Big\}_0 \right)    \right) = 0, 
\end{align}
uniformly in $\Sigma$ for all times. We get the condition that the $R$ derivative of the quantity in the first integral of \eqref{integral-invariant-II} vanishes. As a result this quantity can make a finite contribution only if it vanishes. The result follows. 
\end{proof}

\section{Boundary Behaviour of the Constrained Dynamics in the Orbit Space and Strict Conservation of the Regularized Hamiltonian}
%We systematically do the analysis for the independent canonical variables and then do the analysis for the dependent canonical variables. We demonstrate that our construction and the boundary and regularity conditions are such that the dynamical flux terms vanish globally in time.   
%Based on the framework developed above, in the following, we shall systematically and carefully estimate all the boundary terms that occur in our problem, for the following gauge on the target: 
In this section, we shall establish that the regularized Hamiltonian energy $H^{\text{Reg}}$ is strictly conserved in time. This is the main result of the current article. In particular, we shall establish that: 
\begin{theorem}
	Suppose we have the initial value problem of Einstein's equations for general relativity, 
	\begin{enumerate}
		\item There exists a ($C^\infty-$)diffeomorphism from harmonic coordinates to the Weyl-Papapetrou gauge of the maximal development $(M', g')$ of the perturbative theory of Kerr black hole spacetimes, under axial symmetry. 
		
		\item The positive-definite Hamiltonian $H^{\textnormal{Reg}}$ is strictly-conserved forwards and backwards in time.  
	\end{enumerate}
\end{theorem}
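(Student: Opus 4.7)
The argument proceeds in two logically connected steps. For assertion (1), the plan is to start from the globally regular solution of the linearised Einstein equations in harmonic gauge provided by the existence proposition of the previous section, and to build the target Weyl--Papapetrou representative as the image of that solution under the flow of an explicit gauge vector field $\bar Y$. The spatial part $Y$ of $\bar Y$ is determined by the overdetermined first-order system \eqref{Y-theta-transport}, whose integrability conditions reduce to the two Poisson-type Dirichlet problems (D-BVP)$_{Y^R}$ and (D-BVP)$_{Y^\theta}$; by the elliptic analysis already carried out (regularity of the source, the method of images, and the prescribed $\sin\theta$ structure at the axes) these admit unique $C^\infty$ solutions decaying like $\mathcal{O}(1/R)$ at $\iota^0$. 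The temporal component $Y^t$ is then obtained by integrating $\partial_t Y^t=-N^{-1}Y^a\partial_a N$ in each asymptotic chart, producing a smooth spacetime field. Because the gauge group of the linearised theory is abelian, as recorded in the variational discussion preceding \eqref{harmonic-gauge-condition}, the map $\Phi_{\bar Y}$ can be composed and inverted without loss of regularity, so it is a $C^\infty$-diffeomorphism on each slice $\Sigma_t$ and, by causal propagation of regularity in harmonic gauge, for all times in the maximal development.

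For assertion (2), the strategy is to exploit the fact, recorded in the introduction, that $\partial_t \mathbf{e}^{\text{Reg}}$ is a pure spatial divergence, so that
\begin{align*}
\frac{d}{dt} H^{\text{Reg}}
&= \int_{\Sigma}\partial_b (J^b)^{\text{Reg}}\,d^2x \\
&= \oint_{\mathcal{H}^+}(J^b)^{\text{Reg}}\nu_b
+\oint_{\Gamma}(J^b)^{\text{Reg}}\nu_b
+\lim_{R\to\infty}\oint_{S_R}(J^b)^{\text{Reg}}\nu_b,
\end{align*}
and then to show, term by term in the explicit expression for $(J^b)^{\text{Reg}}$, that each of the three fluxes vanishes identically in $t$. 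The plan is to insert the asymptotic profiles of the independent phase-space variables and of the dependent Lagrange multipliers derived in the previous section: the decay rates $|\Phi|'\sim\mathcal{O}(R^{-1})$ and $\omega'\sim\mathcal{O}(R^{-1})$ at $\iota^0$, $Np'^A/\bar\mu_q=\mathcal{O}(R^{-4})$ at infinity and $\mathcal{O}(1-R_+^2/R^2)$ near $\mathcal{H}^+$, the bounded, normal-vanishing behaviour of $N'^a$ at $\mathcal{H}^+$ and $\Gamma$, and $\mbo{\nu}'=\mathcal{O}(R^{-1})$ coming from the mixed boundary-value problem (M-BVP)$_{\mbo{\nu}'}$. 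Combined with the volume weight $\bar\mu_{q_0}$ and the unit conormals $\nu_b$, these rates make every constituent of $(J^b)^{\text{Reg}}\nu_b$ integrable and pointwise vanishing at each boundary component, so each of the three limiting fluxes annihilates.

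The main obstacle, and the step demanding the most care, is the analysis at the corner $\Gamma\cap\mathcal{H}^+$, together with the closely related fact that $\mbo{\nu}'$ does \emph{not} transform as a scalar under the gauge flow. The plan here is to invoke the two results already established in this excerpt: first, the correction identity \eqref{gauge-transform-nu'}, namely $\mbo{\nu}'=Y^a\partial_a\mbo{\nu}+\tfrac12\,\leftexp{(q_0)}{\grad}_a Y^a$, which simultaneously restores regularity of $\mbo{\nu}'$ at the corner and reconciles the Dirichlet datum on $\Gamma$ with the Neumann datum on $\mathcal{H}^+$ at the junction points; and second, the vanishing $Y_0^R(R_+,t)\equiv 0$ of the horizon integral invariant, which eliminates precisely the zero Fourier mode contribution that is not already suppressed by the $\sin n\theta$ or $\cos n\theta$ weights present for $n\ge 1$. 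Together these rule out any residual corner contribution. Strict conservation of $H^{\text{Reg}}$ both forwards and backwards in time then follows, since the harmonic-gauge system is time-reversal symmetric and the diffeomorphism construction of (1) applies equally to $t\le 0$.
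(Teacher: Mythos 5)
Your proposal follows essentially the same route as the paper: part (1) via the gauge vector field $\bar Y$ constructed from the overdetermined transport systems and their Poisson-type compatibility conditions, combined with the abelian gauge group and propagation of regularity in harmonic gauge; part (2) via the term-by-term evaluation of the fluxes of $(J^b)^{\text{Reg}}$ at $\Gamma$, $\mathcal{H}^+$ and $\bar\iota^0$ using the previously derived decay rates, with the corner handled by the non-scalar transformation law for $\mbo{\nu}'$ and the vanishing of the horizon integral invariant $Y^R_0(R_+,t)$. This matches the paper's own proof in structure and in every key ingredient.
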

 
 Consider the vector field density, 
 \begin{align}
 	(J^b )^{\text{Reg}}=& N^2 e^{-2 \mbo{\nu}} q^{ab}_0 p'_A \ptl_a U'^A + U'^A \mathcal{L}_{N'} ( N \bar{\mu}_{q_0} q^{ab}_0 h_{AB} \ptl_b U^B) \notag\\
 	&+ \mathcal{L}_{N'} (N) ( 2 \bar{\mu}_{q_0} q^{ab}_0 \ptl_a \mbo{\nu}')  + 2 N'^a \ptl_a \mbo{\nu}'  \bar{\mu}_{q_0} q^{ab}_0 \ptl_a N \notag\\
 	&- 2 N'^b \bar{\mu}_{q_0} q^{bc}_0 \ptl_a \mbo{\nu}' \ptl_c N.
 	\end{align}
 We shall classify the terms in $(J^b)^{\text{Reg}}$ into kinematic, dynamical and conformal terms: 
 \begin{subequations}
 \begin{align}
 	J_1^b \fdg = & N^2 e^{-2 \mbo{\nu}} q^{ab}_0 p'_A \ptl_a U'^A && \text{Dynamical terms} \\
 	J_2^b \fdg= & U'^A \mathcal{L}_{N'} ( N \bar{\mu}_{q_0} q^{ab}_0 h_{AB} \ptl_b U^B) && \text{Kinematic terms} \\
 	J_3^b \fdg= & \mathcal{L}_{N'} (N) ( 2 \bar{\mu}_{q_0} q^{ab}_0 \ptl_a \mbo{\nu}'),  \\
 	J_4^b \fdg= & 2 N'^a \ptl_a \mbo{\nu}'  \bar{\mu}_{q_0} q^{ab}_0 \ptl_a N, \\
 	J_5^b \fdg= & - 2 N'^b \bar{\mu}_{q_0} q^{bc}_0 \ptl_a \mbo{\nu}' \ptl_c N && \text{Conformal terms}
 	\end{align}
\end{subequations}

\iffalse 
Let us start with the flux of the vector field $J_1$, in the region bounded by $\Sigma_0$ and $\Sigma_t.$ We have at the horizon,

\begin{align}
	
	\end{align} 

\begin{align}
\text{Flux} (J_1, \Gamma) = \int^t_0 
\end{align}
\fi 
The asymptotic and decay rates of the fluxes and the associated integrands can be explicitly evaluated with  a specific choice of gauge on the target. In the following, we choose,  $h = 4 d \,\gamma ^2 + e^{-4 \gamma } d\,  \omega^2.$ Analogous computations can be performed for other gauges on the target manifold. An important aspect of these flux estimates is that most of the integrands of these flux terms vanish pointwise at the boundaries of the orbit space. In this work, we shall neglect signs of total flux quantities (but not for the individual constituting terms), because we shall eventually prove that they converge to $0.$

\subsection{`Dynamical' Boundary Terms} 
Consider the dynamical flux terms contributed by $$ (J_1)^b \fdg = N^2 e^{-2 \mbo{\nu}} q^{ab}_0 p'_A \ptl_a U'^A $$ 
\begin{align}
\text{Flux} (J_1, \Gamma)
\end{align}
Let us start by considering the terms $$ \text{Flux} \left(N^2 e^{-2 \mbo{\nu}} (q_0)^{ab} (4 \mbo{p} \ptl_a \gamma' \right), \Gamma)$$ and $$ \text{Flux} \left(N^2 e^{-2 \mbo{\nu}} (q_0)^{ab} ( e^{-4 \gamma} \mbo{r} \ptl_a \gamma' \right), \Gamma).$$ We have

\begin{align}
 &\text{Flux} \left(N^2 e^{-2 \mbo{\nu}} (q_0)^{ab} (4p \ptl_a \gamma' \right), \Gamma) \\
  =& \int^t_0 \int_{ (-\infty, -R_+) \, \cup\, (R_+, \infty) }  \lim_{\theta \to 0, \pi}   \Big( 4 N^2 e^{-2 \mbo{\nu}} (q_0)^{ab} \mbo{p} \ptl_a \gamma' \Big)^\theta dR dt  \notag\\
 =& \int^t_0   \int_{ (-\infty, -R_+) \, \cup\, (R_+, \infty) }    \lim_{\theta \to 0, \pi} \Big ( 4 N \left( \frac{ N\mbo{p}'}{\bar{\mu}_q} \right) \bar{\mu}_{q_0} (q_0)^{ab}  \ptl_a \gamma'   \Big)^\theta  dR dt \notag\\
 \intertext{if we consider the integrand, within the domain of integration, we have}
 & \lim_{\theta \to 0, \pi}4 \frac{N}{R^2}  \cdot \frac{N \mbo{p}'}{\bar{\mu}_q} \bar{\mu}_{q_0} \ptl_\theta \gamma' \notag\\
 =& 4 \frac{\Delta^\halb \sin \theta}{R^2} \frac{N \mbo{p}'}{\bar{\mu}_q} \bar{\mu}_{q_0}  \ptl_\theta \gamma'
 \to 0, \quad \text{as} \quad \theta \to 0, \pi . 
\end{align}
and 
\begin{align}
  & \text{Flux} \left(N^2 e^{-2 \mbo{\nu}} (q_0)^{ab} ( e^{-4 \gamma} \mbo{r} \ptl_a \omega' \right), \Gamma) \\
  =& \int^t_0  \int_{ (-\infty, -R_+) \, \cup\, (R_+, \infty) }  \lim_{\theta \to 0, \pi}  \Big(    e^{-4\gamma } N^2 e^{-2 \mbo{\nu}} (q_0)^{ab} \mbo{r} \ptl_a \omega'   \Big)^\theta  dR dt \notag\\
  =& \int^t_0  \int_{ (-\infty, -R_+) \, \cup\, (R_+, \infty) } \lim_{\theta \to 0, \pi}  \Big(     N^2 (q_0)^{ab} \frac{e^{-4\gamma }\mbo{r}}{\bar{\mu}_q} \bar{\mu}_{q_0}\ptl_a \omega'   \Big)^\theta dR dt \notag \\
  \intertext{the integrand}
  & \lim_{\theta \to 0, \pi} \frac{N^2}{R^2} e^{-4\gamma} \frac{\mbo{r}'}{\bar{\mu}_q} \bar{\mu}_{q_0} \ptl_\theta \omega' \notag\\
  =& \lim_{\theta \to 0, \pi} \frac{\Delta}{R} \sin^2 \theta\frac{\Sigma^2}{\sin^4 \theta ((r^2+a^2)^2 -a^2 \Delta \sin^2 \theta)} \frac{\mbo{r}'}{\bar{\mu}_q} \ptl_\theta \omega' \notag\\
  & \to 0, \quad \text{as} \quad  \theta \to 0, \pi
\end{align}
due to the rapid decay of $\ptl_\theta \omega'$ as $\theta \to 0$ and $\pi.$
\begin{align}
\text{Flux} (J_1, \mathcal{H}^+) 
\end{align}
We have the term, $\text{Flux} (N^2 e^{-2 \mbo{\nu}}q^{ab}_0 \mbo{p} \ptl_a \gamma', \mathcal{H}^+)$, which can be estimated near the future horizon $\mathcal{H}^+$as 

\begin{align}
&\text{Flux} (N^2 e^{-2 \mbo{\nu}}q^{ab}_0 \mbo{p} \ptl_a \gamma', \mathcal{H}^+) \notag\\
=&
\int^t_0 \int^\pi_0  \lim_{R \to R_+} \Big(
N^2 e^{-2 \mbo{\nu}}q^{ab}_0 \mbo{p}' \ptl_a \gamma' \Big)^R  d \theta dt \notag\\
=& \int^t_0 \int^\pi_0   \lim_{R \to R_+} \Big(  N \cdot  q^{ab}_0 \frac{N \mbo{p}'}{\bar{\mu}_q} \bar{\mu}_{q_0} \ptl_a \gamma'  \Big)^R  d\theta dt \notag\\
=& \int^t_0 \int^\pi_0 \lim_{R \to R_+} \Big(  R \sin \theta \left(1- \frac{R^2_+}{R^2} \right) \frac{N \mbo{p}' }{\bar{\mu}_q} \bar{\mu}_{q_0} \ptl_R \gamma' \Big)  d\theta dt .
\end{align}
It may be recalled that the behaviour of the canonical pair  $(\gamma' ,\mbo{p}') $ near the future horizon is
\begin{align}
4 \ptl_R \gamma' =&  \left(1- \frac{R^2_+}{R^2} \right) \quad \text{and} \quad \frac{N \mbo{p}'}{\bar{\mu}_q} = \mathcal{O} \left( 1- \frac{R^2_+}{R^2} \right)  \notag\\
\text{Flux} (N^2 e^{-2 \mbo{\nu}}q^{ab}_0 \mbo{p} \ptl_a \gamma', \mathcal{H}^+) & \to 0 \quad \text{as} \quad R \to R_+.
\end{align}
Likewise, 
\begin{align}
&\text{Flux} (N^2 e^{-2 \mbo{\nu}}q^{ab}_0 \mbo{r}' \ptl_a \omega', \mathcal{H}^+) \notag\\
=&
\int^t_0 \int^\pi_0 \lim_{R \to R_+} \Big(  (N^2 e^{-2 \mbo{\nu}}q^{ab}_0 \mbo{r}' \ptl_a \omega' \Big)^R  d\theta d t \notag\\
=& \int^t_0 \int^\pi_0  \lim_{R \to R_+}   \Big( (N e^{-4\gamma} \cdot  q^{ab}_0 \frac{N e^{4\gamma} \mbo{r}'}{\bar{\mu}_q} \bar{\mu}_{q_0} \ptl_a \omega' \Big)^R  d\theta dt  \notag\\
%=& \int^t_0 \int^\pi_0   \lim_{R \to R_+} \Big( R \sin \theta \left(1- \frac{R^2_+}{R^2} \right) \frac{N \mbo{p}' }{\bar{\mu}_q} \bar{\mu}_{q_0} \ptl_R \omega' \Big)  d\theta d t 
\end{align}
again recall that the behaviour of the canonical pair $ (\omega', \mbo{r}' )$ near the future horizon is
\begin{align}
\ptl_R \omega' =& \left(1- \frac{R^2_+}{R^2} \right) \quad \text{and} \quad \frac{N e^{4 \gamma} \mbo{r}'}{\bar{\mu}_q} = \mathcal{O}\left( 1- \frac{R^2_+}{R^2} \right) \notag\\
\intertext{the integrand in
$\text{Flux} (N^2 e^{-2 \mbo{\nu}}q^{ab}_0 \mbo{p} \ptl_a \gamma', \mathcal{H}^+)$ is  }
=& \lim_{R \to R_+} R^2 \sin \theta \left(1- \frac{R^2_+}{R^2} \right) \frac{\Sigma^2}{ \sin^4 \theta \left((r^2+a^2)^2-a^2 \Delta \sin^2 \theta \right)^2} \notag\\
& \quad \quad  c \left(1-\frac{R^2_+}{R^2} \right) \cdot \left(1-\frac{R^2_+}{R^2} \right)   \notag\\
& \to 0 \quad \text{as} \quad R \to R_+.
\end{align}

\subsubsection*{Flux of $J_1$ at the outer boundary, $\text{Flux} (J_1, \bar{\iota}^0)$}

\begin{align}
\text{Flux} (J_1, \bar{\iota}^0) 
\end{align}
 Consider the flux of the term: 
\begin{align}
\text{Flux} \left(\frac{N^2}{ \bar{\mu}_q} \mbo{p}' \bar{\mu}_q q^{ab} \ptl_a \gamma', \bar{\iota}^0 \right) = \int^t_0 \int^\pi_0 \Big( \lim_{R \to \infty} \frac{N^2}{ \bar{\mu}_q} \mbo{p}' \bar{\mu}_{q_0}  \ptl_R \gamma'  \Big)  d\theta dt 
\end{align}
we have 
\begin{align}
\frac{N p}{\bar{\mu}_q}' = \mathcal{O} \left(\frac{1}{R^3} \right), \quad \gamma' = \frac{1}{R} \quad \ptl_R \gamma' = \mathcal{O} \left( \frac{1}{R^2} \right) 
\end{align}
and recall that 
\begin{align}
N = (r(R)^2 - 2Mr(R) + a^2)^{1/2} \sin^2 \theta = \mathcal{O}(R), \quad \text{for large} \quad R . 
\end{align}
Therefore, in the region under consideration, we have for the integrand of $\text{Flux} \left(\frac{N^2}{ \bar{\mu}_q} p' \bar{\mu}_q q^{ab} \ptl_a \gamma', \bar{\iota}^0 \right)   \sim \mathcal{O}(R) \cdot \mathcal{O} (\frac{1}{R^3}) \cdot \mathcal{O} (\frac{1}{R^2}) \cdot R \to 0 \quad \text{as} \quad R \to \infty.$
Likewise,  consider the following flux terms at the outer boundary: 
$$\text{Flux} \left( -\frac{4 N^2}{\bar{\mu}_q} \mbo{r}' \gamma' \bar{\mu}_{q_0} q_0^{ab} \ptl_a \omega, \bar{\iota}^0  \right), \quad \text{Flux} \left( \frac{N^2}{\bar{\mu}_q} \mbo{r}' \bar{\mu}_{q_0} q_0^{ab} \ptl_a \omega', \bar{\iota}^0  \right) $$

\begin{align}
\text{Flux} \left( \frac{N^2}{\bar{\mu}_q} \mbo{r}' \bar{\mu}_{q_0} q^{ab} \ptl_a \omega', \bar{\iota}^0  \right) = \int^t_0 \int^\pi_0 \Big(  \lim_{R \to \infty} \frac{N^2}{\bar{\mu}_q} \mbo{r}' \bar{\mu}_{q_0} \ptl_R \omega' \Big)  d \theta dt 
\end{align}
It may be recalled that 
\begin{align}
\frac{N e^{4 \gamma} \mbo{r}'}{\bar{\mu}_q} =& \mathcal{O} \left(\frac{1}{R^3} \right),  
\intertext{and}
 e^{-2\gamma} =& \left( \frac{r^2 + a^2 \cos^2 \theta}{ \sin^2 \theta} \right) \frac{1}{ ((r^2 +a^2)^2 -a^2 \Delta \sin^2 \theta)} \sim \mathcal{O} (\frac{1}{R^2}) \quad \text{for large} \quad R 
\end{align}
Therefore, for the integrand within

$\text{Flux} \left( \frac{N^2}{\bar{\mu}_q} \mbo{r}' \bar{\mu}_{q_0} q^{ab} \ptl_a \omega', \bar{\iota}^0  \right) = \mathcal{O}(R) \cdot \mathcal{O}\left(\frac{1}{R^3} \right) \cdot \mathcal{O}\left(\frac{1}{R^4} \right) R^2 \cdot \frac{1}{R^2} \to 0 \,\text{as} \, R \to \infty.$

Next, consider 
\begin{align}
&\text{Flux} \left( -\frac{4 N^2}{\bar{\mu}_q} \mbo{r}' \gamma' \bar{\mu}_{q_0} q_0^{ab} \ptl_a \omega, \bar{\iota}^0  \right) \notag\\
=& \int^t_0 \int^\pi_0 \Big( \lim_{R \to \infty}  -\frac{4 N^2}{\bar{\mu}_q} \mbo{r}' \gamma' \bar{\mu}_{q_0}  \ptl_R \omega \Big)  d\theta dt    \notag\\
=& \int^t_0 \int^\pi_0 \Big( \lim_{R \to \infty}  - 4N \left( \frac{ N e^{4 \gamma}\mbo{r'}}{\bar{\mu}_q} \right) e^{-4\gamma} \gamma' \bar{\mu}_{q_0}  \ptl_R \omega \Big)  d\theta dt 
\intertext{ Using the above estimates again and noting that $\ptl_R \omega = \mathcal{O}(\frac{1}{R^3})$}
\frac{4 N^2}{\bar{\mu}_q} \mbo{r}' \gamma' \bar{\mu}_{q_0}  \ptl_R \omega =& \mathcal{O}(R) \cdot \mathcal{O}\left(\frac{1}{R^3} \right) \cdot  \mathcal{O}\left(\frac{1}{R^4} \right) \cdot 
\mathcal{O}\left(\frac{1}{R} \right)  R \mathcal{O}\left(\frac{1}{R^3} \right)  \notag\\
=& 0 \quad \text{as} \quad R \to \infty,
\end{align}
from which it follows that $\text{Flux} \left( -\frac{4 N^2}{\bar{\mu}_q} \mbo{r}' \gamma' \bar{\mu}_{q_0} q_0^{ab} \ptl_a \omega, \bar{\iota}^0  \right) =0. $
\subsection{`Kinematic' Boundary  Terms}
\begin{align}
(J_2)^b \fdg =& U'^A \mathcal{L}_{N'} (N \bar{\mu}_q q^{ab} h_{AB} (U) \ptl_b U^B)
\intertext{in a special gauge}
=& \gamma' \mathcal{L}_{N'} (4N \bar{\mu}_q q^{ab} \ptl_a \gamma) + \omega' \mathcal{L}_{N'} (N \bar{\mu}_q q^{ab} e^{-4 \gamma} \ptl_a \omega). 
\end{align}
\begin{align}
\text{Flux} (J_2, \Gamma)
\end{align}
We have, 
\begin{align}
&\text{Flux}(J_2, \Gamma) \notag \\
=&  \int^t_0 \int_{ (-\infty, R_+) \cup (R_+, \infty) }  \lim_{\theta \to 0, \pi} \left(  U'^A \mathcal{L}_{N'} (N \bar{\mu}_{q_0} q^{ab}_0 h_{AB}(U) \ptl_b U^B ) \right)^\theta dR dt \notag\\
%=&  \int^t_0 \int_{ (-\infty, R_+) \cup (R_+, \infty) } \left( \lim_{\theta \to 0, \pi} \frac{1}{R} U'^A \mathcal{L}_{N'} (\frac{1}{R^2}\bar{\mu}_{q_0} h_{AB}(U) \ptl_\theta U^B ) \right) dR dt 
\end{align}
We will expand the expression $U'^A \mathcal{L}_{N'} (N \bar{\mu}_{q_0} q_0^{ab} h_{AB}(U) \ptl_a U^B )$ as follows: 

\begin{align} \label{Kinematic-expansion}
&U'^A \mathcal{L}_{N'} \left( N q^{ab}_0\bar{\mu}_{q_0} h_{AB}(U) \ptl_\theta U^B \right) \notag\\
=& U'^A ( N \ptl_c N'^c \bar{\mu}_{q_0} q^{ab}_0 h_{AB}(U) \ptl_a U^B + N'^c \ptl_c (N \bar{\mu}_{q_0} q_0^{ab} h_{AB}(U) \ptl_a U^B) \notag\\
& \quad - \ptl_c N'^b N \bar{\mu}_{q_0} q^{ab}_0 h_{AB} \ptl_a U^B)
\end{align}
We would like to point out that $\omega'$ vanishes on the axes whereas $\gamma'$ does not. Likewise, $\ptl_\theta \omega$ decays rapidly on the axes whereas $\ptl_\theta \gamma$ does not. This causes a few subtleties for terms involving $\gamma'$ but as will see later, we will have few fortuitous cancellations involving this quantity.  
Each of the terms in \eqref{Kinematic-expansion} can in turn be decomposed as 
\begin{align}
U'^A ( N \ptl_c N'^c \bar{\mu}_{q_0} q^{ab}_0 h_{AB}(U) \ptl_a U^B) =& \gamma' (4 N \ptl_c N'^c \bar{\mu}_{q_0} q^{ab} _0  \ptl_a \gamma ) \notag\\
&+ \omega' (N  \ptl_c N'^c \bar{\mu}_{q_0} q^{ab} _0 e^{-4 \gamma} \ptl_a \omega) 
\end{align}
As a consequence, we have 
\begin{align} \label{kin-flux-axes-1}
&\text{Flux} ( \gamma' (4 N \ptl_c N'^c \bar{\mu}_{q_0} q^{ab} _0  \ptl_a \gamma ), \Gamma) \notag\\
 =& \int^t_0 \int_{ (-\infty, -R_+) \cup (R_+, \infty) }  \lim_{\theta \to 0, \pi}   \left(   \gamma' (4 N \ptl_c N'^c \bar{\mu}_{q_0} q^{ab} _0  \ptl_a \gamma )  \right)^\theta  dR dt \notag\\
=& \int^t_0 \int_{ (-\infty, R_+) \cup (R_+, \infty) } \left(  \lim_{\theta \to 0, \pi}  4 \frac{N}{R} \gamma' \ptl_\theta \gamma (\ptl_\theta N'^\theta + \ptl_R N'^R) \right) dR dt.
\end{align}
Let us note that $\displaystyle \lim_{\theta \to 0, \pi} \ptl_\theta \gamma' =  \cot \theta.$ Likewise, 
\begin{align}
& \text{Flux} ( \omega' ( \ptl_c N'^c N \bar{\mu}_{q_0} q^{ab} _0 e^{-4 \gamma} \ptl_a \omega), \Gamma) \notag\\
=& \int^t_0 \int_{ (-\infty, R_+) \cup (R_+, \infty) }   \lim_{\theta \to 0, \pi} \left( \omega' ( \ptl_c N'^c N \bar{\mu}_{q_0} q^{ab} _0  e^{-4\gamma} \ptl_a \omega \ptl_b ) \right)^\theta dR dt \notag\\
=&  \int^t_0 \int_{ (-\infty, R_+) \cup (R_+, \infty) } \left( \lim_{\theta \to 0, \pi}  \frac{N}{R} e^{-4\gamma} \omega' \ptl_\theta \omega \ptl_c ( N'^c) \right) dR dt 
=0.
\end{align}
Where, we took into account the fact that the $\ptl_\theta \omega$ term has rapid decay as $\theta \to 0, \pi.$
Next, 
\begin{align}
U'^A N'^c \ptl_c (N \bar{\mu}_{q_0} q^{ab}_0 h_{AB} (U) \ptl_a U^A) =& 4\gamma' N'^c\ptl_c ( N \bar{\mu}_{q_0} q^{ab}_0 \ptl_a \gamma) \notag\\
&+ \omega' N'^c \ptl_c (N \bar{\mu}_{q_0} e^{-4\gamma} \ptl_a \omega)
\end{align}

\begin{align}
&\text{Flux} (4\gamma' N'^c\ptl_c (N \bar{\mu}_{q_0} q^{ab}_0 \ptl_a \gamma) , \Gamma)  \notag\\
=& \int^t_0 \int_{ (-\infty, R_+) \cup (R_+, \infty) }  \lim_{\theta \to 0, \pi} \left( 
    4\gamma' N'^c\ptl_c ( N \bar{\mu}_{q_0} q^{ab}_0 \ptl_a \gamma) , \frac{1}{R} \ptl_\theta  \right)^\theta dR dt \\
=& \int^t_0 \int_{ (-\infty, R_+) \cup (R_+, \infty) } \left( \lim_{\theta \to 0, \pi}  4  \gamma' N'^c \ptl_c (N\frac{\bar{\mu}_{q_0}}{R^2} \ptl_\theta \gamma) \right) dR dt 
\end{align}
Consider the integrand: 
\begin{align} \label{N'^R terms}
 R\gamma' N'^c \ptl_c (\frac{N}{R} \ptl_\theta \gamma) = 4  \gamma' \left( N'^ \theta \ptl_\theta  (\frac{N}{R} \ptl_\theta \gamma) + N'^R \ptl_R ( \frac{N}{R} \ptl_\theta \gamma) \right) 
\end{align}
The $N'^R$ term remains while the $N'^\theta$ term vanishes at the axes (as $\theta \to 0$ or $\pi$). For the sake of brevity, let us combine the $N'^R$ terms in \eqref{N'^R terms} and \eqref{kin-flux-axes-1}. We shall revisit this later. 

\begin{align} \label{kin-flux-comb-nu'}
\int^t_0 \int_{ (-\infty,-R_+) \cup (\infty, R_+ )}  \left( 4 \gamma'  \ptl_R (N'^R \frac{N}{R} \ptl_\theta \gamma ) \right) dR dt.
\end{align}
Similarly, 
\begin{align}
& \text{Flux} ( \omega' N'^c \ptl_c ( N \bar{\mu}_{q_0} q^{ab}_0 e^{-4\gamma} \ptl_a \omega), \Gamma)  \notag\\ 
=&
\int^t_0 \int_{ (-\infty, R_+) \cup (R_+, \infty) }  \lim_{\theta \to 0, \pi} \left(      \omega' N'^c \ptl_c ( N \bar{\mu}_{q_0} q^{ab}_0 e^{-4\gamma} \ptl_a \omega) \ptl_b  \right)^\theta dR dt  \\
=& \int^t_0 \int_{ (-\infty, R_+) \cup (R_+, \infty) } \left(  \lim_{\theta \to 0, \pi}   \omega' N'^c \ptl_c (\frac{N \bar{\mu}_{q_0}}{R^2} e^{-4\gamma} \ptl_\theta \omega) \right) dR dt 
\end{align}
The integrand
\begin{align}
 \omega' N'^c \ptl_c (\frac{\bar{\mu}_{q_0}}{R^2} e^{-4\gamma} \ptl_\theta \omega)= \omega' (N'^\theta  \ptl_\theta  (N \frac{\bar{\mu}_{q_0}}{R^2} e^{-4\gamma} \ptl_\theta \omega) + N'^ R \ptl_R (N \frac{\bar{\mu}_{q_0}}{R^2} e^{-4\gamma} \ptl_\theta \omega ) ) 
\end{align}
vanishes at the axes due to $\omega' =0$ on $\Gamma$ and the rapid decay of $\ptl_\theta \omega$ at the axes. 

Finally, 
\begin{align}
-U'^A \ptl_c N'^b \bar{\mu}_{q_0} q^{ac}_0 h_{AB} \ptl_a U^B =&  -4\gamma' \ptl_c N'^b \bar{\mu}_{q_0} \bar{q_0}^{ac} \ptl_a \gamma  \notag\\
&-\omega' \ptl_c N'^b \bar{\mu}_{q_0} \bar{q_0}^{ac} e^{-4\gamma} \ptl_a \omega   
\end{align}
so that the fluxes, 

\begin{align}
& \text{Flux} ( -4\gamma' \ptl_c N'^b \bar{q_0}^{ac} \ptl_a \gamma, \Gamma)  \notag\\
=& \int^t_0 \int_{ (-\infty, R_+) \cup (R_+, \infty) } \lim_{\theta \to 0, \pi}  \left(    -4N \gamma' \ptl_c N'^b \bar{\mu}_{q_0} \bar{q_0}^{ac} \ptl_a \gamma  \right)^\theta dR dt \notag\\
=& \int^t_0 \int_{ (-\infty, R_+) \cup (R_+, \infty) } \left( \lim_{\theta \to 0, \pi}   ( -4 \gamma' N (R \ptl_R N'^\theta \ptl_R \gamma + \frac{1}{R} \ptl_\theta N'^\theta \ptl_\theta \gamma ) ) \right) dR dt 
\end{align}
If we consider the integrand 
\begin{align}
\lim_{\theta \to 0, \pi}  ( 4 \gamma' N  (N\ptl_R N'^\theta \ptl_R \gamma + \frac{1}{R} \ptl_\theta N'^\theta \ptl_\theta \gamma) ) 
\end{align}
we note that the term involving $ \ptl_R N'^R \ptl_R \gamma $vanishes because $\ptl_R N'^\theta$ vanishes on the axes ($N'^\theta $ is a constant along the axes)
and in the second term involving $\ptl_\theta N'^\theta \ptl_\theta \gamma$ cancels with a flux term in \eqref{kin-flux-axes-1}.
\begin{align}
& \text{Flux} ( -\omega' \ptl_c N'b \bar{q_0}^{ac} e^{-4\gamma} \ptl_a \omega, \Gamma ) \notag\\
=& \int^t_0 \int_{ (-\infty, R_+) \cup (R_+, \infty) }   \lim_{\theta \to 0, \pi}  \left(  
-N \omega' \ptl_c N'^b \bar{q_0}^{ac} e^{-4\gamma} \ptl_a \omega  \right)^\theta  dR dt  \notag\\
=& \int^t_0 \int_{ (-\infty, R_+) \cup (R_+, \infty) } \left( \lim_{\theta \to 0, \pi}   ( - N \omega' ( \ptl_R  N'^ \theta e^{-4\gamma} \ptl_R \omega + \ptl_\theta N'^ R e^{-4 \gamma} \ptl_\theta \omega) \right) dR dt 
\notag\\
=&0
\end{align}
on account of rapid decay of $\ptl_\theta \omega$ at the axes, vanishing of $N \sim \sin \theta $ and $\omega'$ at the axes.
\iffalse
\begin{align}
N\omega' \ptl_c N'b \bar{q_0}^{ac} e^{-4\gamma} \ptl_a \omega (\frac{1}{R} \ptl_\theta)^d
\end{align}
\fi 
Now then, for
\begin{align}
\text{Flux} (J_2, \mathcal{H}^+),
\end{align}
we have 
\begin{align}
\text{Flux} (\gamma' (4 \ptl_c N'^c \bar{\mu}_{q_0} q^{ab} _0  \ptl_a \gamma, \mathcal{H}^+) 
&= \int^t_0 \int^\pi_0  (\lim_{R \to R_+}  4 R N \gamma' \ptl_R \gamma \ptl_c N'^c)  d \theta dt 
\end{align}

Using, 

\begin{align}
\lim_{R \to R_+} \ptl_R \gamma  \leq c, \quad \lim_{R \to R_+} \ptl_c N'^c \leq c 
\end{align}

 we estimate the integrand, 

\begin{align}
\lim_{R \to R_+} 4 R N \gamma' \ptl_R \gamma \ptl_c N'^c \leq c ( 1- \frac{R^2_+}{R^2})^2 
\end{align}

So we have flux, 

\begin{align}
\text{Flux} (\gamma' (4 \ptl_c N'^c \bar{\mu}_{q_0} q^{ab} _0  \ptl_a \gamma, \mathcal{H}^+) =0.
\end{align}

\begin{align}
& \text{Flux} (\omega' ( \ptl_c N'^c \bar{\mu}_{q_0} q^{ab} _0 e^{-4 \gamma} \ptl_a \omega) , \mathcal{H}^+)  \notag \\
= &  \int^t_0 \int^\pi_0 ( \lim_{R \to R_+}  \omega' \ptl_c N'^c N \bar{\mu}_{q_0} e^{-4\gamma} \ptl_R \omega )  d \theta dt   \notag\\
=& \int^t_0 \int^\pi_0 (\lim_{R \to R_+}  R N e^{-4\gamma} \omega' \ptl_R \omega \ptl_c N'^c)R d\theta dt 
\end{align}
Noting that, 

\begin{align}
\lim_{R \to R_+} \ptl_R \omega \leq c, \quad \lim_{R \to R_+} e^{-4\gamma} \leq c 
\end{align}
the integrand 

\begin{align}
R N e^{-4\gamma} \omega' \ptl_R \omega \ptl_c N'^c \leq c (1- \frac{R_+}{R^2})^2 
\end{align}
thus, again 

\begin{align}
\text{Flux} (\omega' ( \ptl_c N'^c \bar{\mu}_{q_0} q^{ab} _0 e^{-4 \gamma} \ptl_a \omega) =0.
\end{align}

\begin{align}
& \text{Flux} (4\gamma' N'^c\ptl_c (N \bar{\mu}_{q_0} q^{ab}_0 \ptl_a \gamma), \mathcal{H}^+) \notag \\ 
=& \int^t_0 \int^\pi_0  \left(\lim_{R \to R_+}  4 \gamma' N'^c \left(\ptl_c N R \ptl_R \gamma + N \ptl_c (R \ptl_R \gamma) \right) \right) dR dt 
\end{align}
the integrand can be estimated close to the horizon as, 

\begin{align}
4 \gamma' N'^c \left(\ptl_c N R \ptl_R \gamma + N \ptl_c (R \ptl_R \gamma) \right) \leq c (1- \frac{R^2_+}{R^2}) + c (1- \frac{R^2_+}{R^2})^2
\end{align}
so we have, 

\begin{align}
\text{Flux} (4\gamma' N'^c\ptl_c (N \bar{\mu}_{q_0} q^{ab}_0 \ptl_a \gamma), \mathcal{H}^+) =0. 
\end{align}
\begin{align}
&\text{Flux} (\omega' N'^c \ptl_c ( \bar{\mu}_{q_0} e^{-4\gamma} \ptl_a \omega), \mathcal{H}^+)  \notag\\
=& 
\int^t_0 \int^\pi_0 \left( \lim_{R \to R_+}  \omega' N'^c \ptl_c( N R e^{-4\gamma} \ptl_R \omega) \right)  d\theta dt  \notag\\
=&  \int^t_0 \int^\pi_0  \left( \lim_{R \to R_+}  \omega' N'^c ( \ptl_c N R e^{-4\gamma} \ptl_R \omega + N \ptl_c (e^{-4\gamma} R \ptl_R \omega)) \right)  d \theta dt 
\end{align}
the integrand can be estimated as 

\begin{align}
\omega' N'^c ( \ptl_c N R e^{-4\gamma} \ptl_R \omega + N \ptl_c (e^{-4\gamma} R \ptl_R \omega)) \leq c (1- \frac{R^2_+}{R^2}) + c ( 1- \frac{R^2_+}{R^2})^2
\end{align}
so we have 
\begin{align}
\text{Flux} (\omega' N'^c \ptl_c ( \bar{\mu}_{q_0} e^{-4\gamma} \ptl_a \omega), \mathcal{H}^+) =0.
\end{align}
\begin{align}
& \text{Flux} (-4 N \gamma' \ptl_c N'^b \bar{q_0}^{ac} \ptl_a \gamma, \mathcal{H}^+)  \notag\\
&=  \int^t_0 \int^\pi_0  \left( \lim_{R \to R_+} -4 N \gamma' (\ptl_R N'^R \ptl_R \gamma + \frac{1}{R^2} \ptl_\theta N'^ \theta \ptl_ \theta \gamma) \right)  d\theta dt \\
& =0
\end{align}
in view of the behaviour of the integrand 
\begin{align}
 4 N \gamma' (\ptl_R N'^R \ptl_R \gamma + \frac{1}{R^2} \ptl_\theta N'^ \theta \ptl_ \theta \gamma ) \quad \text{behaves as }  \quad  c(1- \frac{R^2_+}{R^2})^2 \quad \text{as} \quad R \to R_+
\end{align}
finally, 
\begin{align}
& \text{Flux} (-\omega' \ptl_c N'^b \bar{q_0}^{ac} N e^{-4\gamma} \ptl_a \omega, \mathcal{H}^+) \notag \\
 =&  \int^t_0 \int^\pi_0  \left( \lim_{R \to R_+}  - N e^{-4 \gamma} \omega' ( \ptl_R N'^R \ptl_R \omega + \frac{1}{R^2} \ptl_\theta N'^\theta \ptl_\theta \omega) \right)  d\theta dt  \notag\\
=& 0
\end{align}
where, again the integrand can be estimated such that 
\begin{align}
N e^{-4 \gamma} \omega' ( \ptl_R N'^R \ptl_R \omega + \frac{1}{R^2} \ptl_\theta N'^\theta \ptl_\theta \omega)
\intertext{behaves as }  
 c(1- \frac{R^2_+}{R^2})^2 \quad \text{for $R$ close to $R_+$ }.
\end{align}
Next, for the kinematic fluxes at the spatial infinity
\begin{align}
\text{Flux}(J_2, \iota^0)
\end{align}
let us start with estimating the divergence term $\displaystyle \ptl_c N'^c$ near the spatial infinity. We have
\begin{align}
& \ptl_c N'^c = \ptl_\theta N'^ \theta + \ptl_R N'^R \notag\\
&=  \ptl_R \Big( \ptl_t Y^R - N^2 e^{-2 \nu} \ptl_R (Y^t(t=0) - \mathcal{Y}^ \theta \cot \theta - \mathcal{Y}^R \frac{ 1 + \frac{R^2_+}{R^2}}{ R ( 1- \frac{R^2_+}{R^2})}  ) \Big) \notag \\
& + \ptl_\theta \Big( \ptl_t Y^\theta - \frac{N^2 e^{-2\mbo{\nu}'}}{R^2} (Y^t(t=0) - \mathcal{Y}^\theta \cot \theta -   \mathcal{Y}^R \frac{ 1 + \frac{R^2_+}{R^2}}{ R ( 1- \frac{R^2_+}{R^2})} )\Big)
\end{align}
We have the following behaviour for large $R$
\begin{align}
& \ptl_R (N^2 e^{-2 \mbo{\nu}}) \sim \mathcal{O}(\frac{1}{R}), 
\quad \ptl_\theta  \frac{N^2 e^{-2 \mbo{\nu}'}}{R^2}  \sim \mathcal{O} (1)
\intertext{and}
&  \ptl_R  \frac{ 1 + \frac{R^2_+}{R^2}}{ R ( 1- \frac{R^2_+}{R^2})} \sim \mathcal{O} (\frac{1}{R^2}), \quad \ptl^2_R  \frac{ 1 + \frac{R^2_+}{R^2}}{ R ( 1- \frac{R^2_+}{R^2})} \sim \mathcal{O} (\frac{1}{R^3})
\end{align} 
from somewhat lengthy but standard computations. It now follows that the coordinate divergence of $N'$ is then 
\begin{align}
\ptl_c N'^c \sim \mathcal{O} (\frac{1}{R})
\end{align}

\begin{align}
\text{Flux} (\gamma' (4 \ptl_c N'^c \bar{\mu}_{q_0} q^{ab} _0  \ptl_a \gamma, \iota^0) = 
\int^t_0 \int^\pi_0  (\lim_{R \to \infty} 4 R \gamma' \ptl_R \gamma \ptl_c N'^c \bar{\mu}_{q_0} )  d \theta dt \notag\\
\end{align}

\begin{align}
4 R \gamma' \ptl_R \gamma \ptl_c N'^c \bar{\mu}_{q_0} =& 4 R \cdot \mathcal{O} (\frac{1}{R}) \cdot \mathcal{O} (\frac{1}{R}) \cdot \mathcal{O} (\frac{1}{R})
\intertext{for large $R$}
%&= \lim_{R \to \infty} 4R \gamma' \ptl_R \gamma \ptl_c ( \ptl_t  Y^c - N^2 e^{-2 \mbo{\nu}} q^{ac}_0 \ptl_a Y^t)
 \to &\, 0 \quad \text{as} \quad R \to \infty.
\end{align}
Thus, 
\begin{align}
\text{Flux} (\gamma' (4 \ptl_c N'^c \bar{\mu}_{q_0} q^{ab} _0  \ptl_a \gamma, \iota^0) =0.
\end{align}

\begin{align}
& \text{Flux} (\omega' ( \ptl_c N'^c \bar{\mu}_{q_0} q^{ab} _0 e^{-4 \gamma} \ptl_a \omega) , \iota^0)  \notag\\
& = \int^t \int^\pi_0 (\lim_{R \to \infty}  R e^{-4\gamma} \omega' \ptl_R \omega \ptl_c( \ptl_t Y^c - N^2 e^{-2 \mbo{\nu}} q^{ac}_0 \ptl_a Y^t) )  d \theta dt 
\end{align}
\begin{align}
R e^{-4\gamma} \omega' \ptl_R \omega \ptl_c( \ptl_t Y^c - N^2 e^{-2 \mbo{\nu}} q^{ac}_0 \ptl_a Y^t) = & R \cdot \mathcal{O} (\frac{1}{R^4}) \cdot \mathcal{O} (\frac{1}{R}) \cdot \mathcal{O} (\frac{1}{R^3}) \cdot \mathcal{O} (\frac{1}{R}) \notag\\
 \to & \, 0 \quad \text{as} \quad R \to \infty
\end{align}
\begin{align}
\text{Flux} (\omega' ( \ptl_c N'^c \bar{\mu}_{q_0} q^{ab} _0 e^{-4 \gamma} \ptl_a \omega) , \iota^0) =0.
\end{align}
Analogously, noting that $\gamma' = \mathcal{O}(\frac{1}{R})$ and $\omega' = \mathcal{O} (\frac{1}{R})$ for large $R$ we have the following: 
\begin{align}
& \text{Flux} (4\gamma' N'^c\ptl_c ( \bar{\mu}_{q_0} q^{ab}_0 \ptl_a \gamma), \iota^0) \notag\\
&= \int^t_0 \int^\pi_0  (\lim_{R \to \infty}  4 \gamma' N'^c \ptl_c (R \ptl_R \gamma) )  d\theta dt \notag\\
& = \int^t_0 \int^\pi_0 ( 4 \gamma' ( \ptl_t Y^c - N^2 e^{-2 \mbo{\nu}} q^{ac} \ptl_a Y^t) \ptl_c (R \ptl_R \gamma)  )  d\theta dt \notag\\
&=0.
\end{align}
\begin{align}
&\text{Flux} (\omega' N'^c \ptl_c ( \bar{\mu}_{q_0} e^{-4\gamma} \ptl_a \omega), \iota^0) \notag\\
& =  \int^t_0 \int^\pi_0 ( \lim_{R \to \infty} \omega' ( \ptl_t Y^c - N^2 e^{-2 \mbo{\nu}} q^{ac} \ptl_a Y^t) \ptl_c (R e^{-4\gamma} \ptl_R \omega))   d\theta dt  \notag\\
&=0 
\end{align}
\begin{align}
&\text{Flux} (-4\gamma' \ptl_c N'^b \bar{q_0}^{ac} \ptl_a \gamma, \iota^0)  \notag\\
&=  \int^t_0 \int^\pi_0 ( \lim_{R \to \infty} -4 \gamma' \left(\ptl_R N'^R \ptl_R \gamma  + \frac{1}{R^2} \ptl_\theta N'^R \ptl_\theta \gamma \right) )  d\theta dt 
\notag\\
& = 0
\end{align}
\begin{align}
&\text{Flux} (-\omega' \ptl_c N'^b \bar{q_0}^{ac} e^{-4\gamma} \ptl_a \omega, \iota^0)  \notag\\
&=  \int^t_0 \int^\pi_0 ( \lim_{R \to \infty} \omega' e^{-4\gamma} \left(\ptl_R N'^R \ptl_R \omega  + \frac{1}{R^2} \ptl_\theta N'^R \ptl_\theta \omega \right) )  d\theta dt 
&=0.
\end{align}

\iffalse
It may be noted that $\mathcal{L}_{N'} (4 N \bar{\mu}_{q_0}\ptl_R \gamma)$ and $\mathcal{L}_{N'} (4N \bar{\mu}_{q_0} e^{-4\gamma} \ptl_R \omega)$ are bounded as $R \to \infty$ 
\begin{align}
\text{Flux} (J_2, \bar{\iota}^0) =& \gamma' \mathcal{L}_{N'} (4N R \ptl_R \gamma) + \omega' \mathcal{L}_{N'} (N R e^{-4 \gamma} \ptl_R \omega) \\
&\leq \mathcal{O}\left(\frac{1}{R} \right) \cdot c \\
&\to 0 \quad \text{as} \quad R \to \infty 
\end{align}
\fi 
\subsection{`Conformal' Boundary Terms}
\begin{align}
(J_3)^b \fdg = \mathcal{L}_{N'} (N) (2 \bar{\mu}_q q^{ab} \ptl_a \mbo{\nu}')
\end{align}
Let us start with the flux of this conformal term at the axes: 
\begin{align}
\text{Flux} (J_3, \Gamma) 
\end{align}

\begin{align}
&\text{Flux} (J_3, \Gamma) \\
&= \int^t_0 \int_{(-\infty, -R_+) \cup ( R_+, \infty)}   \lim_{\theta \to 0, \pi}  \left( \mathcal{L}_{N'} (N) (2 \bar{\mu}_{q_0}q^{ab}_0 \ptl_a \mbo{\nu})  \right)^\theta dR dt \notag\\
%=& \lim_{\theta \to 0, \pi} 2 \mathcal{L}_{N'} \ptl_\theta \mbo{\nu}' 
& = \int^t_0 \int_{(-\infty, -R_+) \cup ( R_+, \infty)} \left(  2 \ptl_\theta \mbo{\nu}' ( N'^R \ptl_R N + N'^\theta \ptl_\theta N) \right) dR dt \notag\\
=& \int^t_0 \int_{(-\infty, -R_+) \cup ( R_+, \infty)} \left( \lim_{\theta \to 0, \pi} 2 \ptl_\theta \mbo{\nu}' \left( N'^R \frac{\sin \theta R^2_+}{R^2} + N'^\theta R \cos \theta(1-\frac{R^2_+}{R^2} \right)  \right) dR dt
\end{align}
expanding out the integrand within $\text{Flux} (J_3, \Gamma) $ , near the axes $\Gamma$, we get
\begin{align}
& \lim_{\theta \to 0, \pi} 2 \ptl_\theta \mbo{\nu}' \Bigg \{  \Big( \ptl_t Y^R -\left( 1 - \frac{R^2_+}{R^2} \right)^2 \cdot \frac{R^4}{ ( (r^2+a^2)^2 -a^2 \Delta \sin^2 \theta)} \notag\\
& \quad \ptl_R \Big(Y^t(t=0) - \mathcal{Y}^R \frac{1+ \frac{R^2_+}{R^2}}{1-\frac{R^2_+}{R^2}} - \mathcal{Y}^\theta \cot \theta  \Big)   \Big )\frac{\sin \theta R^2_+}{R^2} \notag\\
& + \Big( \ptl_t Y^\theta -    (1- \frac{R^2_+}{R^2})^2 \cdot \frac{R^4}{ \left( (r^2+a^2)^2 -a^2 \Delta
	\sin^2 \theta  \right)}   \notag\\
& \quad \ptl_\theta \Big( Y^t (t=0) - \mathcal{Y}^R \frac{1 + \frac{R^2_+}{R^2}} {{1-\frac{R^2_+}{R^2}} }  - \mathcal{Y}^ \theta \cot \theta \Big)  \Big)  R \cos \theta (1-\frac{R^2_+}{R^2}) \Bigg\}
\intertext{from which it  follows that}
\text{Flux} (J_3, \Gamma)=&  0
\end{align}
due to the decay rate of the terms in the parenthesis.

%=& 2 \ptl_\theta \mbo{\nu}' \left( (\ptl_t Y^R - N^2 e^{-2 \mbo{\nu}} \ptl_R Y^t) \sin \theta \frac{R^2_+}{R^2} + (\ptl_t Y^\theta -  \frac{N^2}{R^2} e^{-2 \mbo{\nu}} \ptl_\theta Y^t) R \cos \theta (1-\frac{R^2_+}{R^2})  \right)  

Now then, consider the flux of $J_3$ at the horizon, $H^+:$
\begin{align}
& \text{Flux} (J_3, \mathcal{H}^+) =  \int^t_0 \int^\pi _0  \left( \lim_{R \to R_+}  \mathcal{L}_{N'} (N) (2 \bar{\mu}_q q^{ab} \ptl_a \mbo{\nu}') \right)  d\theta dt \notag\\
& \int^t_0 \int^\pi_0  \left( \lim_{R \to R_+} 2 R  \mathcal{L}_{N'} N \ptl_R \mbo{\nu}' \right)  d\theta dt  \notag\\
&=  \int^t_0 \int^\pi _0 \Big( \lim_{R \to R_+}  2 R \ptl_R \mbo{\nu}' \Big( (\ptl_t Y^R - N^2 e^{-2 \mbo{\nu}} \ptl_R Y^t) \sin \theta \frac{R^2_+}{R^2} + (\ptl_t Y^\theta  \notag\\
 & \quad -  \frac{N^2}{R^2} e^{-2 \mbo{\nu}} \ptl_\theta Y^t) R \cos \theta (1-\frac{R^2_+}{R^2})  \Big) \Big)  d\theta dt 
\end{align}
expanding out the integrand within $ \text{Flux} (J_3, \mathcal{H}^+)$ close to the horizon, $\mathcal{H}^+$ we have, 
\begin{align}
%&\text{Flux} (J_3, \mathcal{H}^+) \notag\\
&  \lim_{R \to R_+} 2R \ptl_R \mbo{\nu}' \Bigg\{ \bigg(\ptl_t Y^R -    \left( 1 - \frac{R^2_+}{R^2} \right)^2 \cdot \frac{R^4}{  ( (r^2+a^2)^2 -a^2 \Delta \sin^2 \theta)} \\
& \quad \ptl_R \bigg(Y^t(t=0) - \mathcal{Y}^R \frac{1+ \frac{R^2_+}{R^2}}{1-\frac{R^2_+}{R^2}} - \mathcal{Y}^\theta \cot \theta  \bigg) \bigg)\sin \theta \frac{R^2_+}{R^2} \notag\\
& + \bigg(  \ptl_t Y^\theta - (1- \frac{R^2_+}{R^2})^2 \cdot \frac{R^4}{ \left( (r^2+a^2)^2 -a^2 \Delta
	\sin^2 \theta  \right)}   \notag\\
& \quad \ptl_\theta \bigg( Y^t (t=0) - \mathcal{Y}^R \frac{1 + \frac{R^2_+}{R^2}} {{1-\frac{R^2_+}{R^2}} }  - \mathcal{Y}^ \theta \cot \theta \bigg)  \bigg)    \Bigg\} \notag\\
& \to 0 \quad  \text{as} \quad R \to R_+
\end{align}
after plugging in the expression of $Y$ and $\mathcal{Y}$ for $R$ near $R_+.$
Likewise, at spatial infinity, we have 
\begin{align}
\text{Flux} (J_3, \bar{\iota}^0) =&  \int^t_0 \int^\pi _0 \Big ( \lim_{R \to \infty}  2 R \ptl_R \mbo{\nu}' \Big( (\ptl_t Y^R - N^2 e^{-2 \mbo{\nu}} \ptl_R Y^t) \sin \theta \frac{R^2_+}{R^2} + (\ptl_t Y^\theta  \notag\\
& \quad -  \frac{N^2}{R^2} e^{-2 \mbo{\nu}} \ptl_\theta Y^t) R \cos \theta (1-\frac{R^2_+}{R^2})  \Big)   \Big)  d\theta dt 
\end{align}
again the integrand occuring above can be estimated, after plugging in the relevant quantities:
\begin{align}
=&  \lim_{R \to \infty} 2R \ptl_R \mbo{\nu}' \Bigg\{ \bigg(\ptl_t \sum^\infty_{n=1} Y^R_n R^{-n+1} \cos n \theta -    \left( 1 - \frac{R^2_+}{R^2} \right)^2 \cdot \frac{R^4}{  ( (r^2+a^2)^2 -a^2 \Delta \sin^2 \theta)} \\
& \quad \ptl_R \bigg(Y^t(t=0) + \sum^\infty_{n =1} \mathcal{Y}^R_n  R^{-n+1} \frac{1+ \frac{R^2_+}{R^2}}{1-\frac{R^2_+}{R^2}} - \sum^\infty_{n=1} \mathcal{Y}^\theta_n R^{-n+1} \sin n \theta  \cot \theta  \bigg) \bigg)\sin \theta \frac{R^2_+}{R^2} \notag\\
& + \bigg(  \ptl_t \sum^{\infty}_{n=1} Y^\theta_n R^{-n} \sin n \theta  - (1- \frac{R^2_+}{R^2})^2 \cdot \frac{R^4}{ \left( (r^2+a^2)^2 -a^2 \Delta
	\sin^2 \theta  \right)}   \notag\\
& \quad \ptl_\theta \bigg( Y^t (t=0) + \sum^\infty_{n=1} \mathcal{Y}^R_n R^{-n+1} \cos n \theta  \frac{1 + \frac{R^2_+}{R^2}} {{1-\frac{R^2_+}{R^2}} }  - \sum^\infty_{n=1} \mathcal{Y}^\theta_n R^{-n} \sin n \theta  \cot \theta \bigg)  \bigg)    \Bigg\} \notag\\
& \to 0 \quad \text{as} \quad R \to \infty. 
\end{align}
We would like to point out that the choice of $Y^t(t=0)$ is at our discretion. %Likewise, we would like to remind the reader that it is sufficient to estimate the leader order terms in the series above (see Appendix H in \cite{GM17_gentitle}).  
\iffalse 
\begin{align}
(J_4)^b \fdg = 2 \mathcal{L}_{N'} \mbo{\nu}' \bar{\mu}_q q^{ab} \ptl_a N
\end{align}

\fi 

\begin{align}
\text{Flux} (J_4, \Gamma)
\end{align}
We have the flux expression at the axes: 

\begin{align}
&\text{Flux} (J_4, \Gamma) =  \int^t_0  \int_{ (-\infty,-R_
	+)  \cup (R_+, \infty)} \lim_{\theta \to 0, \pi}  2 R \left( \mathcal{L}_{N'} \mbo{\nu}' \frac{1}{R^2} \ptl_\theta N \right) dR dt 
\notag\\ 
&=   \int^t_0  \int_{ (-\infty,-R_
	+)  \cup (R_+, \infty)} \left( \lim_{\theta \to 0, \pi} 2  \cos \theta \left(1 - \frac{R^2_+}{R^2} \right)  \left(\ptl_t Y^c - N^2 e^{-2 \mbo{\nu}} q^{ac} \ptl_a Y^t \right) \ptl_c \mbo{\nu}' \right) dR dt 
\end{align}
expanding out the integrand above, we have, 

\begin{align}
%& \text{Flux} (J_4, \Gamma)  \notag\\
& = \lim_{\theta \to 0, \pi} 2  \cos \theta \left(1 - \frac{R^2_+}{R^2} \right) 
\Bigg \{ \ptl_t Y^R \ptl_R \mbo{\nu}' -   \left (1- \frac{R^2_+}{R^2} \right)^2 \cdot \frac{R^4 \ptl_R \mbo{\nu}'}{ \left( (r^2+a^2)^2 -a^2 \Delta
	\sin^2 \theta  \right)}   \notag\\
& \quad \ptl_R \left( Y^t (t=0) - \mathcal{Y}^R \frac{1 + \frac{R^2_+}{R^2}} {{1-\frac{R^2_+}{R^2}} }  - \mathcal{Y}^ \theta \cot \theta \right) \notag\\
&+ \ptl_t Y^\theta \ptl_\theta \mbo{\nu}'  -   \left( 1 - \frac{R^2_+}{R^2} \right)^2 \cdot \frac{R^4 \ptl_\theta  \mbo{\nu}'}{ ( (r^2+a^2)^2 -a^2 \Delta \sin^2 \theta)} \notag\\
& \quad \ptl_\theta  \left(Y^t(t=0) - \mathcal{Y}^R \frac{1+ \frac{R^2_+}{R^2}}{1-\frac{R^2_+}{R^2}} - \mathcal{Y}^\theta \cot \theta  \right)  \Bigg\}
\end{align}
It may be noted that $N'^\theta$ now vanishes using the expansion of $\sin n \theta $. 
A term related to $\mathcal{Y}^R$ and thus $N'^R$ remains. Let us re-compress this term and represent the integrand as 

\begin{align}
2N'^R \ptl_R \mbo{\nu}' \cos \theta \left( 1- \frac{R^2_+}{R^2} \right) 
\intertext{which can be re-expressed as}
\ptl_R \left( 2 N'^R \ptl_R \mbo{\nu}' \cos \theta \left( 1- \frac{R^2_+}{R^2} \right) \right) - 2 \mbo{\nu'}\ptl_R \left(N'^R \cos \theta \left(1-\frac{R^2_+}{R^2} \right) \right)
\end{align} 
Now, the flux corresponding to the second term above fortuitously combines with the remaining flux term in \eqref{kin-flux-comb-nu'} to yield $0$ total flux, in view of the regularity condition \eqref{gamma-nu condition}. Further, the first term is a total divergence term and converges to $0$ at the boundaries  of the axes ($R \to \infty$), in view of the asymptotic behaviour of $N'^R$ and $\mbo{\nu'}$. 
 \iffalse
 The term that remains is: 
 \begin{align}
 \text{Flux} (J_4, \Gamma) =&  \lim_{\theta \to 0, \pi} 2 R^3 \cos \theta \Bigg \{  \ptl_t Y^R \ptl_R \mbo{\nu}' - \left(  1- \frac{R^2_+}{R^2} \right)   \frac{R^4 \ptl_R \mbo{\nu}'}{ \left( (r^2+a^2)^2 -a^2 \Delta
 	\sin^2 \theta  \right)}     \notag\\
 &  \quad \quad \ptl_R \left( - \mathcal{Y}^R  \frac{1 + \frac{R^2_+}{R^2}} {{1-\frac{R^2_+}{R^2}}} \right) \Bigg \}
 \end{align}
 \fi

Subsequently, 
\begin{align}
& \text{Flux} (J_4, \mathcal{H}^+) = \int^t_0 \int^\pi _0 \left( \lim_{R \to R_+} 2 \mathcal{L}_{N'} \mbo{\nu}' \bar{\mu}_q q^{ab} \ptl_a N \right)  d\theta dt \notag\\
&=  \int^t_0 \int^\pi _0 \left( \lim_{R \to R_+}  2 \mathcal{L}_{N'} \mbo{\nu}' \bar{\mu}_{q_0} \ptl_R N \right) R d\theta dt  \notag\\
=&  \int^t_0 \int^\pi _0 \left( 2 \sin \theta \left( 1+ \frac{R^2_+}{R^2} \right) ( \ptl_t Y^c - N^2 e^{-2 \mbo{\nu}} q^{ac} \ptl_a Y^t) \ptl_c \mbo{\nu}' \right)  d\theta dt 
\end{align}
the integrand is 
\begin{align}
% & \text{Flux} (J_4, \mathcal{H}^+) \notag\\
&= 2 \sin \theta \left( 1+ \frac{R^2_+}{R^2} \right) \Bigg \{ \ptl_t Y^R \ptl_R \mbo{\nu}' -   \left (1- \frac{R^2_+}{R^2} \right)^2 \cdot \frac{R^4 \ptl_R \mbo{\nu}'}{ \left( (r^2+a^2)^2 -a^2 \Delta
	\sin^2 \theta  \right)}   \notag\\
& \quad \ptl_\R \left( Y^t (t=0) - \mathcal{Y}^R \frac{1 + \frac{R^2_+}{R^2}} {{1-\frac{R^2_+}{R^2}} }  - \mathcal{Y}^ \theta \cot \theta \right) \notag\\
&+ \ptl_t Y^\theta \ptl_\theta \mbo{\nu}' -  \left( 1 - \frac{R^2_+}{R^2} \right)^2 \cdot \frac{R^4 \ptl_\theta \mbo{\nu}'}{ ( (r^2+a^2)^2 -a^2 \Delta \sin^2 \theta)} \notag\\
& \quad \ptl_\theta \left(Y^t(t=0) - \mathcal{Y}^R \frac{1+ \frac{R^2_+}{R^2}}{1-\frac{R^2_+}{R^2}} - \mathcal{Y}^\theta \cot \theta  \right) \Bigg\}
\end{align}

The second term in the curly brackets rapidly vanishes due to the decay rate of $Y^ \theta$ close to $\mathcal{H}^+.$ The non vanishing factor in the second line above is accounted for by the vanishing boundary condition of $\ptl_R \mbo{\nu}'$ at the horizon. 
 
\begin{align}
& \text{Flux} (J_4, \bar{\iota}^0) =  \int^t_0 \int^\pi _0 \left( \lim_{R \to \infty} 2 \mathcal{L}_{N'} \mbo{\nu}' \bar{\mu}_{q_0} \ptl_R N \right)  d\theta dt \notag \\
& =  \int^t_0 \int^\pi _0 \left(  \lim_{R \to \infty}  2 \sin \theta \frac{R^2_+}{R} ( \ptl_t Y^c - N^2 e^{-2 \mbo{\nu}} q^{ac} \ptl_a Y^t) \ptl_c \mbo{\nu}' \right)  d \theta dt   \notag\\
\intertext{so that the integrand involves}
 &=  \lim_{R \to \infty} 2 \sin \theta \left( 1+ \frac{R^2_+}{R^2} \right) \Bigg \{ \ptl_t Y^R \ptl_R \mbo{\nu}'  -   \left (1- \frac{R^2_+}{R^2} \right)^2 \cdot \frac{R^4 \ptl_R \mbo{\nu}'}{ \left( (r^2+a^2)^2 -a^2 \Delta
 	\sin^2 \theta  \right)}   \notag\\
 & \quad \ptl_R \left( Y^t (t=0) - \mathcal{Y}^R \frac{1 + \frac{R^2_+}{R^2}} {{1-\frac{R^2_+}{R^2}} }  - \mathcal{Y}^ \theta \cot \theta \right) \notag\\
 &+ \ptl_t Y^ \theta \ptl_\theta \mbo{\nu}' -  \left( 1 - \frac{R^2_+}{R^2} \right)^2 \cdot \frac{R^4 \ptl_\theta  \mbo{\nu}'}{ ( (r^2+a^2)^2 -a^2 \Delta \sin^2 \theta)} \notag\\
 & \quad \ptl_R \left(Y^t(t=0) - \mathcal{Y}^R \frac{1+ \frac{R^2_+}{R^2}}{1-\frac{R^2_+}{R^2}} - \mathcal{Y}^\theta \cot \theta  \right) \Bigg\}.
\end{align}
In the second term in the curly brackets decay only after using the decay rate of $\ptl_ \theta \mbo{\nu}'$ (because $ \ptl_\theta Y^R$ does not decay near $\iota^0$). 
%We have, 
%\begin{align}
%\mathcal{L}_{N'} \mbo{\nu}' =& N'^R \ptl_R \mbo{\nu}' + N'^\theta \ptl_\theta \mbo{\nu}'  \notag\\
%=& \ptl_t Y^R - N^2 q^{Rb} \ptl_b Y^t \ptl_R \mbo{\nu}' + \ptl_t Y^\theta - N^2 q^{\theta b} \ptl_b %Y^t \ptl_\theta \mbo{\nu}'  
%\end{align}

\begin{align}
(J_5)^b \fdg = -2N'^b \bar{\mu}_q q^{ac} \ptl_a \mbo{\nu}' \ptl_c N
\end{align}
\begin{align}
\text{Flux} (J_5, \Gamma)
\end{align}
we have the flux of $J_5$ at the axes given by, 

\begin{align}
\text{Flux} ( J_5, \Gamma) =&  \int^t_0 \int_{(-\infty, -R_+) \cup ( R_+, \infty)} (\lim_{\theta \to 0, \pi}-2 R^2 N'^ \theta q^{ac}_0 \ptl_a \mbo{\nu}' \ptl_c N ) dR dt  \notag\\
=&  \int^t_0 \int_{(-\infty, -R_+) \cup ( R_+, \infty)}   ( \lim_{\theta \to 0, \pi} -2 R^2 \left(\ptl_ t Y^\theta - \frac{N^2}{R^2} e^{-2 \mbo{\nu}'} \ptl_\theta Y^t \right) q^{ac}_0 \ptl_a \mbo{\nu}' \ptl_c N ) dR dt   \notag\\
\intertext{now consider the integrand}
=&  \lim_{\theta \to 0, \pi}-2 R^2 \left(\ptl_ t Y^\theta - \frac{N^2}{R^2} e^{-2 \mbo{\nu}'} \right) \cdot  \notag\\ 
& \cdot \left(  \ptl_R \mbo{\nu}' \sin \theta (1 + \frac{R^2_+}{R^2}) + \frac{1}{R^2} \ptl_ \theta \mbo{\nu}' R \cos \theta (1- \frac{R^2_+}{R^2}) \right)  \notag \\
& \to 0 \quad \text{as} \quad \theta \to 0, \pi 
\end{align}

and 

\begin{align}
\text{Flux} (J_5, \mathcal{H}^+) =& \int^t_0  \int^\pi _0  \left( \lim_{R \to R_+} q_0 ( -2N'^b \bar{\mu}_q q^{ac} \ptl_a \mbo{\nu}' \ptl_c N, \ptl_R) \right)  d\theta dt  \notag \\
=& \int^t_0  \int^\pi _0 \left( \lim_{R \to R_+} - 2 N'^R \bar{\mu}_q q^{ac} \ptl_a \mbo{\nu}' \ptl_c N \right)  d\theta d t  \notag \\
=& \int^t_0 \int^\pi_0 \left(  \lim_{R \to R_+} -2( \ptl_t Y^R - N^2 e^{-2 \mbo{\nu}} \ptl_R Y^t) \bar{\mu}_q q^{ac} \ptl_a \mbo{\nu}' \ptl_c N \right)  d\theta dt 
\end{align}
the integrand contains the terms,  
\begin{align}
&
=- 2 \Bigg \{ \ptl_t \Big( \sum^\infty_{n \to 1} Y^R_n  R R^+_n ( \frac{R^n}{R^n_+}- \frac{R_+^n}{R^n} \cos \theta )   \notag\\
& \quad - R^2  \left( 1- \frac{R^2_+}{R^2}\right)^2 \frac{1}{ (r^2 +a^2)^2 - a^2 \Delta \sin \theta}   \Big) \cdot \notag\\
&\ptl_R  \left( Y^t(t=0)- \frac{1+ \frac{R^2_+}{R^2}}{R (1- \frac{R_+^2}{R^2})} \sum^\infty_{n = 1} \mathcal{Y}^R_n  R R^+_n \left( \frac{R^n}{R^n_+}  \frac{R_+^n}{R^n} \cos \theta \right) \right) - \cot \theta \sum^\infty_{n=1} \mathcal{Y}_n^\theta R^n_+\left( \frac{R^n}{R^n_+} + \frac{R^n_+}{R^n} \right) \cos n\theta    \Bigg\}  \notag\\
& \notag \cdot \left( \ptl_R \mbo{\nu}'  \sin \theta (1 + \frac{R^2_+}{R^2}) + \frac{1}{R^2} \ptl_\theta \mbo{\nu}' R \cos \theta (1- \frac{R^2_+}{R^2}) \right) \notag\\
&\to 0 \quad \text{as} \quad R \to R_+ 
\end{align}
due to vanishing term in the final bracket. 

We are now left with the final flux term: 
\begin{align}
&\text{Flux} (J_5, \bar{\iota}^0) =  \int^t_0 \int^\pi_0  \left( \lim_{R \to \infty} -2( \ptl_t Y^R - N^2 e^{-2 \mbo{\nu}} \ptl_R Y^t) \bar{\mu}_q q^{ac} \ptl_a \mbo{\nu}' \ptl_c N \right)  d\theta dt  \notag\\
& =   \int^t_0 \int^\pi_0  \Big( \lim_{R \to \infty} -2 \left( \ptl_t Y^R - N^2 e^{-2 \mbo{\nu}} \ptl_R ( Y^t (t=0) - \mathcal Y^R \ptl_R N/N -\mathcal{Y}^\theta \ptl_\theta N/N) \right)   \cdot \notag\\
& \quad \quad \cdot \left( R \ptl_R \mbo{\nu}' \sin \theta (1+ \frac{R^2_+}{R^2}) + \frac{1}{R} \ptl_ \theta \mbo{\nu}' \cos \theta (1- \frac{R^2_+}{R^2}) \right) \Big)  d\theta dt \notag\\
\intertext{again the integrand can be estimated as}
&= \lim_{R \to \infty} -2 \Bigg \{- \ptl_t \left(\sum^{\infty}_{n=1}  Y^R_n R^{-n+1} \cos n \theta  \right) - R^4  \left( 1- \frac{R^2_+}{R^2}\right)^2 \frac{1}{ (r^2 +a^2)^2 - a^2 \Delta \sin \theta}   \notag\\ 
& \quad \cdot \left( Y^t(t=0)- \frac{1+ \frac{R^2_+}{R^2}}{R (1- \frac{R_+^2}{R^2})} \sum^{\infty}_{n=1} \mathcal{Y}^R_n R^{-n+1} \cos n \theta - \cot \theta \sum^\infty_{n=1} \mathcal{Y}^\theta_n R^{-n} \sin n\theta  \right)    \Bigg\} \cdot \notag\\
& \quad \cdot \big \{ R \cdot \mathcal{O} (\frac{1}{R}) \mathcal{O}(1) + \frac{1}{R} \cdot \mathcal{O}(\frac{1}{R})  \cdot \mathcal{O}(1) \big \}  \notag\\
& \to 0 \quad \text{as} \quad R \to \infty \quad \text{and behaves like} \quad \mathcal{O}(\frac{1}{R})
\quad \text{for large} \quad R
\end{align}
Thus it follows that 
\begin{align}
\text{Flux}(J_5, \iota^0) = 0.
\end{align}
%We note that $\Gamma \cap M' \neq \emptyset $

\iffalse
We would like to remind the reader that the (integrated) positive-definite, regularized  Hamiltonian energy $H^{\text{Reg}}$ itself is a well-defined \emph{gauge-invariant} quantity. It may be noted that, in the construction above, the choice of the gauge-transform vector field is at our discretion and as such is not unique. In particular, we can come up with gauge-transform vector field $\bar{Y}$ such that the fluxes vanish \emph{globally} but not necessarily \emph{locally}. To see this, consider the case of 
\begin{align}
P \fdg \bar{Y} \to Y
\end{align}   
such that the conformal Killing vector field $Y$ $(\text{CK} (Y, q_0) =0)$
\begin{align}
Y^{\theta} =& \notag\\
Y^{R}=&
\end{align}
\fi
\subsection*{Acknowledgements} 

In a previous work \cite{GM17_gentitle}, the co-author Vincent Moncrief, had fittingly payed tribute to A. Taub, J. Marsden and S. Dain for their influence on him and for making fundamental contributions in this direction. I take this opportunity to pay tribute to his own outstanding contributions to general relativity and Hamiltonian methods. On an individual note, for his mentorship, encouragement and enjoyable interactions, I am indebted to him. 

This project has spanned several years and my gratitude is also due to all the colleagues who  have continually supported me and to the institutions that hosted me. Special thanks are due to my host Hermann Nicolai at the Albert Einstein Institute  at Golm (DFG grant no. GU 1513/2-1), where significant aspects of this work were completed. The conception and some initial aspects of this work were done while I was a postdoc at the Department of Mathematics, Yale University, where I benefited from conducive working conditions.  

\bibliography{../References(Harvard)/central-bib(Harvard)}
%\bibliography{biblio}
\bibliographystyle{plain}
\end{document}